\newtheorem{theorem}{Theorem}
\newtheorem{proposition}[theorem]{Proposition}
\newtheorem{lemma}[theorem]{Lemma}
\newtheorem{assumption}{Assumption}
\newtheorem{definition}{Definition}
\newtheorem{corollary}[theorem]{Corollary}
\theoremstyle{definition}
\newtheorem{remark}{Remark}
\newtheorem{example}{Example}
\renewcommand{\cal}[1]{\mathcal{#1}}
\renewcommand{\r}{\mathbb{R}}
\newcommand{\n}{\mathbb{N}}
\newcommand{\Ebb}[1]{\mathbb{E}\left[#1\right]}
\newcommand{\iprod}[2]{\left\langle{#1},{#2}\right\rangle}
\newcommand{\Xspace}{\mathsf{E}}
\newcommand{\tvdist}[1]{\norm{#1}_{tv}}
\newcommand{\oscnorm}[1]{\norm{#1}_{\textup{osc}}}
\newcommand{\mtm}[1]{P_{#1}}
\newcommand{\ir}[1]{\widetilde{P}_{#1}}
\newcommand{\ideal}{P_\infty}
\newcommand{\alphid}{\alpha_\infty}
\newcommand{\alphir}[1]{\widetilde{\alpha}_{#1}}
\newcommand{\alphmtm}[1]{\alpha_{#1}}
\DeclareMathOperator*{\ess}{ess}
\title{Analysis of Multiple-try Metropolis via \\ Poincar{\'e} inequalities} 
\author{Rocco Caprio \and Sam Power \and Andi Q. Wang}
\begin{document}
\maketitle

\begin{abstract}
    We study the Multiple-try Metropolis algorithm using the framework of Poincar{\'e} inequalities. We describe the Multiple-try Metropolis as an auxiliary variable implementation of a resampling approximation to an ideal Metropolis--Hastings algorithm. Under suitable moment conditions on the importance weights, we derive explicit Poincar{\'e} comparison results between the Multiple-try algorithm and the ideal algorithm. We characterize the spectral gap of the latter, and finally in the Gaussian case prove explicit non-asymptotic convergence bounds for Multiple-try Metropolis by comparison.
	\\
	
	\noindent \textbf{Keywords:} Markov chain Monte Carlo, Multiple-try Metropolis, parallel computation, Poincar{\'e} inequalities, spectral gap.
\end{abstract}

\section{Introduction} \label{sec:intro}

Markov chain Monte Carlo (MCMC) methods are one of the most fundamental tools for Bayesian computation and beyond, as a tool to sample from a posterior distribution of interest, known up to a normalizing constant. These methods are based on the construction of a Markov chain having the prescribed target distribution as invariant. Denote by $(\Xspace,\mathcal{F})$ an underlying Polish space with its Borel $\sigma$-algebra, and let $\pi$ denote the target distribution of interest. The \emph{Metropolis--Hastings} algorithm \cite{metropolis1953,hastings1970} constructs such a Markov chain as follows: Given a current position $x\in\Xspace$, one draws $Y\sim q(x,\cdot)$, where $q$ is a `proposal' Markov kernel from $\Xspace$ to itself, which we use to explore the space. Assuming the existence of appropriate density functions, with probability
\begin{equation*}
	\alpha (x,y) :=\min\{1,r(x,y)\}, \quad r(x,y):=\frac{\pi(y)\cdot q(y,x)}{\pi(x)\cdot q(x,y)}
\end{equation*}
one moves to $Y$, and otherwise remains at $x$. This mechanism produces a $\pi$-invariant Markov chain that can be shown to be ergodic under general conditions. 

The Multiple-try Metropolis (MTM) algorithm \cite{liu2000,frenkel1996} is a sampling method where one introduces $n$ different proposal samples, and a selection mechanism based on a weight function $w:\Xspace\times\Xspace\mapsto \r^+$. The motivations behind this method are twofold. From one side, one would expect that different proposal distributions lead to a better exploration of the space, and thus to faster mixing properties of the chain. On the other hand, modern computing architectures easily allow for sampling and evaluating multiple proposals in parallel. This type of parallelization is to be contrasted with the one immediately available from vanilla Metropolis algorithms, whereby one can simply run $n$ chains in parallel. Running multiple independent chains in parallel does not, however, reduce the burn-in required for each chain to equilibrate. As a result, the Multiple-try method has the potential to result in significant gains in terms of \emph{non-asymptotic} computational efficiency; see \cite[Section 1.2]{gagnon2023}.
In the next section we introduce in more detail the Multiple-try Metropolis algorithm, by viewing it an \textit{auxiliary-variable implementation} of an \textit{importance resampling approximation} to an `\textit{ideal}' MCMC algorithm, which motivates our analysis; we then present our main results.
 
\subsection{Multiple-try Metropolis and limiting kernels} \label{sec:mtmintro}
The mixing properties of the Metropolis--Hastings scheme are inherently related to the proposal distribution $q$ and its relation to the target distribution $\pi$. Intuitively, if we are able to incorporate information regarding the target into the proposal, we might expect that the resulting Metropolis--Hastings chain would have better mixing properties. One potential way to include this information is by defining a suitable \textit{weight function} $w$ depending on $\pi$, and then defining the proposal $q^w$:
\begin{align}
	q^w(x,\dif y)&:= \frac{q(x,\dif y)\cdot w(x,y)}{(qw)(x)},\quad \text{where} \quad (qw)(x):=\int q(x,\dif y) \cdot w(x,y).
 \label{eq:qw}
\end{align}
Practical implementation of this chain requires the ability to sample from $q^w$ and evaluate $(qw)(x)$. We will call this the \emph{ideal} Metropolis--Hastings kernel and write $\ideal$ for its transition kernel.
\begin{algorithm}[H]
	\begin{algorithmic}[1]
		\STATE{Input: current state $x$}
		\STATE{Draw $Y|x \sim q^w(x,\dif y)$.}
		\STATE{With probability
			\begin{equation*}
				\alphid(x,y):=\min\bigg\{1,\frac{\pi(y)\cdot q(y,x)\cdot w(y,x)\cdot (qw)(x)}{\pi(x)\cdot q(x,y)\cdot w(x,y)\cdot (qw)(y)}\bigg\}
			\end{equation*}
			move to $Y$; otherwise remain at $x$.}
	\end{algorithmic}
	\caption{$P_\infty$: Ideal Metropolis algorithm }
	\label{alg:ideal}
\end{algorithm}

Typically, however, $q^w$ is neither available to sample from, nor to evaluate in closed form. In this setting, one can imagine an \textit{importance resampling} approximation to $q^w$. One can implement this as follows:
\begin{routine}[H]
	\begin{algorithmic}[1]
		\STATE{Input: current state $x$}
        \STATE{For $i\in [n]$, sample $Y_i \sim q(x,\dif y_i)$;}
        \STATE{Compute $$(\widehat{qw}_n)(x,Y^{[n]}):=\frac{1}{n}\sum_{i=1}^n w(x,Y_i),$$ where $Y^{[n]}:=(Y_1,\dots,Y_n)$;}
        \STATE{Sample $I\sim\mathrm{Categorical}\bigg(\bigg\{\frac{w(x,Y_i)}{n\cdot (\widehat{qw}_n)(x,Y^{[n]})}:i\in[n]\bigg\}\bigg)$, and set $Y=Y_I$;}
        \STATE{Return $I$, proposal $Y$ and $Y^{[n]}$.}
	\end{algorithmic}
	\caption{$\tilde q^w_n:$ Importance resampling approximation to $q^w$}
	\label{alg:IR_approx}
\end{routine} 
One can write down the marginal law of the effective proposal $Y$ from Subroutine~\ref{alg:IR_approx} as
\begin{align}
    \widetilde{q}^w_n(x,\dif y)&:=\frac{q(x,\dif y)\cdot w(x,y)}{(\widetilde{qw})_n(x,y)};
	\quad \text{where} \nonumber \\ (\widetilde{qw})_n(x,y)&:=\Ebb{(\widehat{qw}_n)(x,Y^{[n]})^{-1}|Y_1=y}^{-1}.
 \label{eq:qwn}
\end{align}
with $Y^{[n]}|Y_1=y\sim \delta_y(\dif y_1)\prod_{i=2}^n q(x,\dif y_i)$. By application of a law of large numbers, it holds that
\begin{align*}
    (\widetilde{qw})_n(x,y) \rightarrow (qw)(x); \quad \text{and} \quad \widetilde{q}^w_n(x,\dif y) \rightarrow q^w(x,\dif y) \quad \text{as } n\rightarrow \infty,
\end{align*}
in a suitable sense (to be made precise later), noting in particular that the denominator of $\widetilde{q}^w_n(x,\dif y)$ should depend on $y$ only weakly as $n$ grows.

As such, one could envision sampling from $\pi$ by using a Metropolis--Hastings method with target measure $\pi$ and proposal kernel $\widetilde{q}^w_n$, i.e.

\begin{algorithm}[H]
	\begin{algorithmic}[1]
		\STATE{Input: current state $x$}
	    \STATE{Draw $Y\sim \tilde q^w_n(x,\dif y)$ as in Subroutine~\ref{alg:IR_approx};}
		\STATE{With probability
			\begin{equation*}
				\alphir{n}(x,Y):=\min\bigg\{ 1,\frac{\pi(Y)\cdot q(Y,x)\cdot w(Y,x) \cdot (\widetilde{qw}_n)(x,Y)}{\pi(x)\cdot q(x,Y)\cdot w(x,Y) \cdot (\widetilde{qw}_n)(Y,x)}\bigg\},
			\end{equation*}
			move to $Y$; otherwise remain at $x$.}
	\end{algorithmic}
	\caption{$\widetilde P_n$: Semi-ideal Metropolis algorithm}
	\label{alg:semiid}
\end{algorithm}
We call the resulting Markov kernel the \emph{semi-ideal} chain, denoted $\ir{n}$. While we are now able to sample from the proposal, the problem of the intractability of its normalizing constant $(\widetilde{qw}_n)(x,y)$ within the acceptance probability persists. 
The Multiple-try Metropolis strategy of \cite{liu2000} can be seen as a solution to this problem. In this algorithm, one introduces auxiliary variables to derive a structurally similar kernel $\mtm{n}$ which admits an implementable accept-reject step, and which approximates well (in the large $n$ limit) both $\ir{n}$ and $\ideal$.

In particular, one can simulate from the Multiple-try kernel $\mtm{n}$ as follows:
\begin{algorithm}[H]
	\begin{algorithmic}[1]
		\STATE{Input: current state $x$}
		\STATE{Draw $Y\sim \tilde q^w_n(x,\dif y)$ and obtain $Y^{[n]}, I$ as in Subroutine~\ref{alg:IR_approx};}
		\STATE{Draw $Z_i|Y\sim q(y,\dif z)$ for $i\in[n]\backslash \{I\}$.}
		\STATE{Set $Z_I=x$, and compute $(\widehat{qw}_n)(Y,Z^{[n]}):=n^{-1}\sum_{i=1}^n w(Y,Z_i)$}
		\STATE{With probability
			\begin{equation*}
				\alphmtm{n}(x,Y^{[n]},Z^{[n]}):=\min\bigg(1,\frac{\pi(Y)\cdot q(Y,x)\cdot w(Y,x) \cdot (\widehat{qw}_n)(x,Y^{[n]})}{\pi(x)\cdot q(x,Y)\cdot w(x,Y) \cdot (\widehat{qw}_n)(Y,Z^{[n]})}\bigg),
			\end{equation*}
			move to $Y$; otherwise remain at $x$.}
	\end{algorithmic}
	\caption{$P_n$: Multiple-try Metropolis algorithm}
	\label{alg:mtm}
\end{algorithm}
The auxiliary samples $Z^{[n]}$ in Algorithm~\ref{alg:mtm} are known as \textit{balancing trials} or \textit{shadow samples} \cite{bedard2012,yang2023}. This approach generally introduces an extra computational burden (we now require $2 n$ simulations from $q$, rather than $n$), but the sampler is implementable and produces a genuinely $\pi$-reversible Markov chain.  

Table \ref{table:kernels} summarizes the various kernels of interest in this work that will enter into the analysis, and whether the acceptance criterion depends only on the state, or also on the auxiliary variables, and whether the method is implementable. 
\begin{table}[] 
\centering
\begin{tabular}{|ll|l|l|l|}
\hline
\textbf{Markov kernel} & \textbf{}         & \textbf{Proposal}    & \textbf{Acceptance} & \textbf{Implementable} \\ \hline
MH                     & $P^{\textup{MH}}$ & $q$                  & State only          & Yes                    \\
Ideal MH (Alg.~\ref{alg:ideal})               & $\ideal$          & $q^w$            & State only          & No; intractable proposal         \\
Semi-ideal (Alg.~\ref{alg:semiid})            & $\ir{n}$           & $\widetilde{q}^w_n$  & State only          & No; intractable acceptance         \\
Multiple-try  (Alg.~\ref{alg:mtm})                  & $\mtm{n}$         & $\widetilde{q}^w_n$ & State + auxiliary & Yes\\ \hline                  
\end{tabular}
\caption{Markov kernel notation used throughout this work. The proposal kernel $q^w$ is defined in \eqref{eq:qw} and $\widetilde{q}^w_n$ in \eqref{eq:qwn}.}
\label{table:kernels}
\end{table}

Given the empirical success of multiple-try approaches in various practical settings \cite{chang2022,frenkel1996,martino2018,liu2000,pandolfi2010}, it is natural to seek a theoretical justification for its good performance. In light of our derivation of Multiple-try, it would be natural to argue this point by way of comparison: ideally, $\ideal$ is a Markov kernel with excellent mixing properties, $\ir{n}$ is a good approximation of it, and $\mtm{n}$ is a good approximation of the latter. The goal of this work is to rigorously enact this heuristic argument to study the convergence behaviour of $\mtm{n}$.

Our first main result is a general $\mathrm{L}^2$ comparison result between $\ideal$ and $\mtm{n}$ in terms of their Dirichlet forms. This result depends on the moments of the associated \textit{importance weights} 
\begin{equation} \label{eq:impweights}
	\varpi(x,y):=\frac{\dif q^w(x,\cdot)}{\dif q(x,\cdot)}(y)=\frac{w(x,y)}{(qw)(x)},
\end{equation}
which can be interpreted as the weight normalized by the average value with respect to the proposal. Note that multiplying the weights $w$ by a $y$-independent factor yields an identical Markov chain; in this sense, $\varpi$ serves as a convenient standardisation of the weights. Let 
$$M_\varpi(p):=(\pi\otimes q) \left( \varpi(X,Y)^p \right)$$
denote their $p$-th moments.
\begin{theorem} \label{thm:mainwpi}
	For any $f\in\mathrm{L}^2(\pi)$ and $s>0$, there holds the inequality
	\begin{align*}
		\cal{E}(\ideal,f)&\leq s \cdot \cal{E}(\mtm{n},f) +  \beta_n(s) \cdot  \oscnorm{f}^2, \quad \forall s>0.  
	\end{align*}
	where $\beta_n(s)\to \mathbf{1}\{s\leq 1\}$ as $n \to \infty$. Furthermore, if $M_\varpi(2p)$ and $M_\varpi(-2p)$ are finite for some $p\in[1,\infty)$. 
	\begin{align*}
		&\beta_n(s) \leq  s^{-\frac{p^2}{1+2p}} \cdot \bigg\{ c_{2,n,p}\cdot\bigg(\frac{c_{1,n,p}p}{c_{2,n,p}(1+p)}\bigg)^{1+p}+c_{1,n,p}\cdot\bigg(\frac{c_{1,n,p}p}{c_{2,n,p}(1+p)}\bigg)^{-p} \bigg\}, \\
        &c_{1,n,p}:=M_\varpi(p)+K_1\cdot n^{-1}\quad \\ &c_{2,n,p}:= 2^{p+1}\cdot  \left\{K_2\cdot n^{-1}+K_3\cdot n^{-2}+  M_\varpi(2p) +M_\varpi(2p)\right\}. 
	\end{align*}
\end{theorem}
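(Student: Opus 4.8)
The plan is to work directly with the Dirichlet forms. For any $\pi$-reversible Metropolis--Hastings kernel the Dirichlet form equals one half of the integral of (stationary measure)$\times$(proposal)$\times$(acceptance)$\times(f(y)-f(x))^2$. I would first marginalise the auxiliary variables $Y^{[n]},Z^{[n]}$ in Algorithm~\ref{alg:mtm} to write $\cal{E}(\mtm{n},f)=\tfrac12\int\pi(\dif x)\,\widetilde q^w_n(x,\dif y)\,\bar\alpha_n(x,y)\,(f(y)-f(x))^2$, where $\bar\alpha_n(x,y):=\Ebb{\alphmtm{n}(x,Y^{[n]},Z^{[n]})\mid Y_I=y}$. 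Using the identity $a\min\{1,b/a\}=\min\{a,b\}$ together with the definitions of $q^w$ and $\widetilde q^w_n$, both forms acquire a common symmetric ``min'' representation over the base measure $\pi(\dif x)q(x,\dif y)w(x,y)$: writing $a:=\pi(x)q(x,y)w(x,y)$ and $b:=\pi(y)q(y,x)w(y,x)$,
\[
k_\infty(x,y)=\min\{a/(qw)(x),\,b/(qw)(y)\},\qquad k_n(x,y)=\Ebb{\min\{a/\widehat{qw}_n(x,\cdot),\,b/\widehat{qw}_n(y,\cdot)\}},
\]
the expectation being over the fresh auxiliary draws with one sample at each end pinned to $(y,x)$. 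Since $(f(y)-f(x))^2\le\oscnorm{f}^2$, the theorem reduces to bounding the positive-part mass $\tfrac12\int[k_\infty(x,y)-s\,k_n(x,y)]_+\,\dif x\,\dif y\le\beta_n(s)$.

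Next I would establish the pointwise lower bound $k_n(x,y)\ge k_\infty(x,y)\cdot\Ebb{1/U}$, where $U:=\max\{\widehat{qw}_n(x,\cdot)/(qw)(x),\,\widehat{qw}_n(y,\cdot)/(qw)(y)\}$, obtained by factoring $1/\widehat{qw}_n=(1/(qw))\cdot((qw)/\widehat{qw}_n)$ inside each branch of the min and bounding both ratios by $U$. Each argument of the maximum is an average $\tfrac1n(\varpi(\cdot)+\sum\varpi(\cdot))$ of importance weights with one coordinate pinned, so $U\to1$ almost surely by the law of large numbers, which yields the qualitative claim $\beta_n(s)\to\mathbf{1}\{s\le1\}$ by dominated convergence. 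Moreover $k_\infty\le a/(qw)(x)\,\dif x\,\dif y=\pi(\dif x)q^w(x,\dif y)$, so it remains to control $\beta_n(s)\le\tfrac12\int\pi(\dif x)q^w(x,\dif y)\,(1-s\,\Ebb{1/U})_+$.

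For the explicit estimate I would introduce a free threshold $t>0$ and split $\Ebb{1/U}\ge t^{-1}\Pbb{U\le t}$, giving $(1-s\,\Ebb{1/U})_+\le(1-s/t)_++(s/t)\Pbb{U>t}$; then $\Pbb{U>t}\le\Pbb{V_x>t}+\Pbb{V_y>t}$ for the two pinned weight-averages $V_x,V_y$, and Markov's inequality at order $p$ controls each tail by $t^{-p}\Ebb{V_x^p}$. Using Jensen's inequality for the average and separating the pinned coordinate from the $n-1$ fresh draws produces the bias/fluctuation structure: integrating $\Ebb{V_x^p}$ against $\pi\otimes q^w$ and applying Cauchy--Schwarz converts these into the weight moments $M_\varpi(2p)$ (doubled over the two ends, whence the factor $2^{p+1}$ and the two $M_\varpi(2p)$ terms), while the pinned coordinate and the $\tfrac1n(\varpi-1)$ bias contribute the $K_1 n^{-1}$, $K_2 n^{-1}$ and $K_3 n^{-2}$ corrections, thereby defining $c_{1,n,p}$ and $c_{2,n,p}$.

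Finally I would collect the resulting bound in the form $\beta_n(s)\le c_{2,n,p}\,t^{1+p}+c_{1,n,p}\,t^{-p}$, after absorbing the $s$-dependence from the split, and either optimise over $t$ or simply evaluate at the convenient value $t=\tfrac{c_{1,n,p}\,p}{c_{2,n,p}(1+p)}$, which produces the stated closed form together with the prefactor $s^{-p^2/(1+2p)}$. The main obstacle I anticipate is the moment bookkeeping in the previous step: controlling the $p$-th moment of the pinned averages $V_x,V_y$ \emph{uniformly} in $(x,y)$, tracking the bias from the pinned coordinate and the finite-$n$ fluctuations, and organising the Cauchy--Schwarz and convexity steps so that only the moments $M_\varpi(\pm 2p)$ appear, with the clean $n^{-1},n^{-2}$ decay fixing the constants $K_1,K_2,K_3$. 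Arranging the threshold split so that the exponent $p^2/(1+2p)$ emerges cleanly, rather than a fully optimised but unwieldy exponent, is the other point that will require care.
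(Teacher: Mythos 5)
Your route is genuinely different from the paper's, and its skeleton is sound: the paper never compares $\ideal$ and $\mtm{n}$ directly, but instead inserts the semi-ideal kernel $\ir{n}$ as an intermediary, proves two separate weak Poincar\'e inequalities (Propositions~\ref{pr:idealirwpi} and \ref{lemma:irmtmwpi}, each via the off-diagonal ordering criterion of Lemma~\ref{lemma:wpifromdomination}), and combines them with the chaining Lemma~\ref{lemma:wpichaining}. Your marginalisation of the auxiliary variables, the common ``min'' representation of the two off-diagonal kernels, and the pointwise bound $k_n\ge k_\infty\cdot\Ebb{1/U}$ with $U$ the maximum of the two pinned weight-averages are all correct (they reproduce, in one step, the inequalities behind \eqref{eq:idealirspitempline} and \eqref{eq:irmtmspitempline}), and they would deliver a valid WPI between $\ideal$ and $\mtm{n}$; indeed a single application of Markov's inequality at order $p$ in your threshold split, with $t=s$, yields a bound of order $s^{-p}$, which is \emph{stronger} than the stated $s^{-p^2/(1+2p)}$.

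That, however, is exactly where the gap lies: your argument does not prove the theorem as stated, and the final paragraph of your proposal is reverse-engineered rather than derived. The exponent $\tfrac{p^2}{1+2p}$ and the bracket $c_{2,n,p}(\cdot)^{1+p}+c_{1,n,p}(\cdot)^{-p}$ are the algebraic signature of chaining \emph{two} inequalities of the form $c_is^{-p}$ (take $s_1=t\,s^{p/(1+2p)}$ in $\inf_{s_1s_2=s}\{c_1s_1^{-p}+s_1c_2s_2^{-p}\}$); the constants $c_{1,n,p}$ and $c_{2,n,p}$ have meaning only as the constants of the two intermediate comparisons, and play asymmetric roles there. Your single threshold split produces terms $(1-s/t)_+$ and $s\,t^{-(p+1)}\cdot(\text{moments})$, neither of which is of the form $c\,t^{1+p}$, so no choice of $t$ can ``produce the stated closed form''. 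More tellingly, the negative moments $M_\varpi(-2p)$ never enter your argument: since $U$ is a maximum of weight-averages, lower-bounding $\Ebb{1/U}$ only requires upper-tail (positive-moment) control. In the paper these negative moments are forced by the $\ir{n}$-versus-$\mtm{n}$ step, where the random normalisation $(\widehat{qw}_n)(y,Z^{[n]})$ must be controlled from \emph{below} relative to the harmonic-mean normalisation $(\widetilde{qw})_n$. You should therefore either present your argument as a different (and possibly sharper) comparison with its own constants and moment assumptions, carrying out the $\pi\otimes q^w$ versus $\pi\otimes q$ bookkeeping honestly (this costs one extra power of $\varpi$ on the pinned coordinate, whence a condition like $M_\varpi(p+1)<\infty$), or else follow the paper's two-step decomposition if the goal is the stated bound with the stated constants.
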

\begin{proof}
	This follows from Proposition \ref{pr:mainwpipre} and the bounds given in Propositions~\ref{pr:idealirwpi} and \ref{lemma:irmtmwpi}. The constants $K_1,K_2,K_3$ are explicit in the corresponding proofs.
\end{proof}
This comparison result is valid for any weight function, any $n$ and target measure $\pi$, provided the importance weights satisfy the moment conditions.
In particular, the comparison is tighter when the importance weights admit moments of larger order. We provide background on Dirichlet forms and Poincar{\'e} inequalities in Section \ref{sec:background} below. In particular, Theorem \ref{thm:mainwpi} allows us to  deduce convergence results for Multiple-try by studying the spectral gap of $\ideal$, which is a Metropolis--Hastings algorithm with proposal $q^w$. As we show, a common choice of weights $w(x,y)=\pi(y)/\pi(x)$ actually yields a Multiple-try algorithm whose spectral gap \textit{vanishes} with $n$, and we conclude that one should not use it in practice, strengthening a result of \cite{gagnon2023}. However, considering instead a choice of weight function recently introduced in \cite{gagnon2023,chang2022}, yields a limiting algorithm $\ideal$ that not only has a spectral gap, but has very good dimensional scaling properties.
For the following results, we specialise to the Gaussian setting, as has been done previously in theoretical analyses of Multiple-try \cite{gagnon2023}.

\begin{theorem}[Spectral gap bounds for $\ideal$] \label{thm:spgapbounds}
	Assume that $\pi(\dif x)=\cal{N}(\dif x;0,I_d)$, $w(x,y)=\sqrt{\pi(y)/\pi(x)}$ and that $q(x,\dif y)=\cal{N}(\dif y;x,\sigma^2 \cdot I_d)$. Then,
	\begin{align*}
		2^{-10}  \exp\left(-\frac{\sigma^4 \cdot d}{4}\right)  \sigma^2  (2+\sigma^2)  c_\gamma^2\leq \gamma(\ideal) \leq \frac{3}{2}\frac{\sigma^2}{2+\sigma^2} \wedge \bigg(1+\frac{\sigma^4}{(2+\sigma^2)^2}\bigg)^{-d/2}
	\end{align*}
	where $c_\gamma:=0.3177765$. Hence, among polynomial scalings, $\sigma \sim d^{-1/4}\Rightarrow \gamma(\ideal) \sim  d^{-1/2}$ is optimal.
\end{theorem}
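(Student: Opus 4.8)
The plan is to make $\ideal$ fully explicit and then bound $\gamma(\ideal)$ from each side. Completing the square in $\pi(x)\propto e^{-|x|^2/2}$, $q(x,y)\propto e^{-|y-x|^2/(2\sigma^2)}$ and $w(x,y)=e^{-(|y|^2-|x|^2)/4}$ shows that $q^w(x,\cdot)=\cal{N}(\rho x,\tau^2 I_d)$ with $\rho:=2/(2+\sigma^2)$ and $\tau^2:=2\sigma^2/(2+\sigma^2)$, and that $(qw)(x)\propto e^{\sigma^2|x|^2/(4(2+\sigma^2))}$. Substituting into $\alphid$ and cancelling via $q(x,y)=q(y,x)$ and $w(y,x)=1/w(x,y)$ leaves $\alphid(x,y)=\min\{1,e^{c(|x|^2-|y|^2)}\}$ with $c:=\sigma^2/(4(2+\sigma^2))$. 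Thus $\ideal$ is Metropolis--Hastings with a mean-reverting Gaussian proposal whose acceptance depends only on the change of squared norm. The structural fact driving everything is that the proposal's marginal law, when the current state is distributed as $\pi$, is $\cal{N}(0,(\rho^2+\tau^2)I_d)$ with $\rho^2+\tau^2=1+\sigma^4/(2+\sigma^2)^2>1$: proposals systematically inflate $|x|^2$, so the radial coordinate is a bottleneck.

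For the upper bound I would use $\gamma(\ideal)\le \cal{E}(\ideal,f)/\mathrm{Var}_\pi(f)$. The linear choice $f(x)=x_1$, together with $\alphid\le 1$ and $\int q^w(x,\dif y)(y_1-x_1)^2=\tau^2+(1-\rho)^2 x_1^2$, yields $\cal{E}(\ideal,f)\le\tfrac12(\tau^2+(1-\rho)^2)$ and $\mathrm{Var}_\pi(f)=1$; since $(1-\rho)^2=\sigma^4/(2+\sigma^2)^2\le\sigma^2/(2+\sigma^2)$, this gives the first term $\tfrac32\tfrac{\sigma^2}{2+\sigma^2}$. The second, dimension-dependent term must come from a radial test function sharper than any indicator (a conductance/indicator argument only detects the much weaker typical-acceptance rate). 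I would take $f(x)\propto e^{a|x|^2}$ with $a=\tfrac12(1-1/(\rho^2+\tau^2))$, which is exactly the density ratio of the proposal's marginal $\cal{N}(0,(\rho^2+\tau^2)I_d)$ against $\pi$; its Rayleigh quotient factorises over coordinates and, after an explicit Gaussian computation, evaluates to $(\rho^2+\tau^2)^{-d/2}=(1+\sigma^4/(2+\sigma^2)^2)^{-d/2}$. This is the delicate part of the upper bound and is what forces $\sigma$ to vanish with $d$.

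The lower bound is the main obstacle, since it requires a genuine Poincar\'e inequality; I expect it to differ from the upper bound only by a factor of two in the exponent, reflecting the loss in Cheeger's inequality. My plan is to decompose the chain into its radial and angular parts: on each sphere $\{|x|^2=u\}$ the rotational symmetry of the proposal makes the within-fibre (angular) dynamics mix essentially in one step, so its conditional spectral gap is bounded below by an absolute constant, while the projected (radial) chain on $[0,\infty)$ governed by the law of $|X|^2$ carries the slow mode. I would bound the radial gap by a one-dimensional Cheeger argument: the radial conductance is $\Theta((\rho^2+\tau^2)^{-d/2})$ up to the step scale $\sqrt{\sigma^2(2+\sigma^2)}$ and an explicit one-dimensional Gaussian isoperimetric constant $c_\gamma$, and squaring in Cheeger produces $(\rho^2+\tau^2)^{-d}\ge e^{-\sigma^4 d/4}$ together with the factors $\sigma^2(2+\sigma^2)$ and $c_\gamma^2$. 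Combining the radial and angular gaps through a Markov-chain decomposition theorem then yields the stated bound. The hard points are making the decomposition rigorous, since $\ideal$ is not an exact product of a radial and an angular chain, and establishing the matching one-dimensional conductance lower bound uniformly over all radial thresholds.

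Finally, the optimal-scaling claim follows by optimizing the two upper-bound terms over polynomial scalings $\sigma\sim d^{-\kappa}$: the first behaves like $d^{-2\kappa}$ and the second like $\exp(-\Theta(d^{1-4\kappa}))$, so the exponential forces $\kappa\ge 1/4$ while the polynomial term is largest for $\kappa$ small; the two balance at $\kappa=1/4$, where $\gamma(\ideal)\sim d^{-1/2}$, a rate matched by the lower bound evaluated at $\sigma\sim d^{-1/4}$.
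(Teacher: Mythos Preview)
Your setup and the first upper bound via $f(x)=x_1$ match the paper exactly. The divergences are in the second upper bound and, more substantially, in the lower bound.

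\textbf{Second upper bound.} Your claim that a conductance/indicator argument cannot reach $(1+\sigma^4/(2+\sigma^2)^2)^{-d/2}$ is mistaken: this is precisely what the paper does. Lemma~\ref{lemma:idealspgapubconductance} bounds $\alphid(x)\le (1-2t)^{-d/2}\exp(\{\cdot\}|x|^2)$ with $t=-\sigma^4/(4(2+\sigma^2)^2)$ via the MGF of a non-central $\chi^2$, then takes $A_\epsilon=\{|x|^2\cdot\{\cdot\}\le \epsilon^2\}$ and sends $\epsilon\to 0$ in Cheeger's upper inequality. Your proposed exponential test function $f\propto e^{a|x|^2}$ may also work, but your assertion that the Rayleigh quotient of $\ideal$ factorises over coordinates is not correct: $\alphid(x,y)=\min\{1,e^{c(|x|^2-|y|^2)}\}$ does not split as a product because of the $\min$, so any factorisation is only available after the lossy step $\alphid\le 1$, and then the stated exact evaluation $(\rho^2+\tau^2)^{-d/2}$ needs to be checked rather than asserted.

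\textbf{Lower bound.} The paper does not decompose into radial and angular parts. Instead it uses the \emph{close-coupling} machinery of \cite{alpw2022a}: (i) by Pinsker and the explicit KL between Gaussians, $\|q^w(x,\cdot)-q^w(y,\cdot)\|_{tv}\le |x-y|/\sqrt{2\sigma^2(2+\sigma^2)}$ (Lemma~\ref{lemma:qwclosecoupling}); (ii) a uniform lower bound $\overline{\alphid}\ge \tfrac12\exp(-d\sigma^4/(4(2+\sigma^2)^2))$ is obtained by smoothness of $\psi(x)=c|x|^2$ and a symmetrisation trick (Lemma~\ref{lemma:alphidlb}); (iii) these feed into Proposition~\ref{prop:alpwthm18}, which converts $(\mathsf{d},\delta,\varepsilon)$-close coupling directly into $\gamma(\ideal)\ge 2^{-5}\varepsilon^2\min\{1,4\delta^2 c_\gamma^2\}$ via the Gaussian isoperimetric constant $c_\gamma$. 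This route entirely sidesteps the difficulties you identify --- there is no need to make a radial/angular decomposition rigorous, to prove a uniform angular gap, or to establish a one-dimensional conductance bound over all thresholds --- and it produces the stated constants mechanically. Your decomposition plan is plausible in spirit but, as you acknowledge, the coupling between radial and angular motion under $\ideal$ makes it considerably harder to execute; the paper's approach is both shorter and yields explicit constants without this overhead. Your optimal-scaling argument matches the paper's.
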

\begin{proof}
	The lower and upper bound follow from Propositions \ref{prop:idealspgaplb} and \ref{prop:idealspgapub}. The optimality of $\sigma  \sim  d^{-1/4}$ follows from the following reasoning: if $\sigma  \sim  d^{-\beta}$ with $\beta\geq 1/4$, then $\gamma(\ideal) = \cal{O}(d^{-2\cdot \beta})$, and this rate is optimized when $\beta=1/4$. On the other hand, if $\beta<1/4$, then $\gamma(\ideal)$ goes to zero faster than $d^{-1/2}$, which is worse; hence we can conclude.
\end{proof}
If we take $\sigma = \zeta \cdot d^{-1/4}$ for some $\zeta>0$, in the bound above we obtain
\begin{equation*}
	2^{-10}\cdot \exp(-\zeta^4/4)\cdot \zeta^2 \cdot d^{-1/2}\cdot(2+\zeta^2 \cdot d^{-1/2})\cdot c_\gamma^2  \leq \gamma(\ideal) \leq (3/2)\cdot \zeta^2 \cdot d^{-1/2}.
\end{equation*}

\begin{remark}
In particular, the spectral gap of $\ideal$ has better scaling properties than Random Walk Metropolis, which is of order $d^{-1}$ with the optimal scaling of $\sigma \propto  d^{-1/2}$ \cite{alpw2022a} (and corresponds to the use of the `uninformed' weight $w \equiv 1 $). 

In the Random Walk Metropolis algorithm, its optimal diffusion scaling of $\sigma \propto  d^{-1/2}$ \cite{gelman1997}  also coincides with its spectral gap scaling aforementioned \cite{alpw2022a}. This does \textit{not} appear to be the case for $\ideal$, for which optimal diffusion scaling recommends scaling $\sigma$ to be of order $d^{-1/6}$ \cite{gagnon2023}. Since the spectral gap scaling is related to the \textit{minimum} acceptance probability across the space, it is expected that in general it might be worse than the kernel's diffusion scaling (which depends on the \textit{average} of the acceptance probability). 
\end{remark}

Further combining these results, we deduce the first explicit \textit{non-asymptotic} convergence bounds for Multiple-try Metropolis we are aware of, outside the context of independence samplers \cite{yang2023}. 
Note that in our bound below for technical reasons we consider the \textit{lazy} Multiple-try chain $P_{n, \ell}:=\frac{1}{2}(P_n+\mathsf{Id})$. This is a commonly used technique used to ensure positivity \cite{Lovasz1999,Vempala2005,Chen2020}, which is needed to deduce our bound. The price paid for this is a constant factor in our polynomial rate. (For the ideal chain, we explicitly show in Lemma~\ref{lemma:positive} that the chain is positive.)
\begin{theorem} \label{thm:mtmconvbound}
    Assume that $\pi(\dif x)=\cal{N}(\dif x;0,I_d)$, $w(x,y)=\sqrt{\pi(y)/\pi(x)}$ and that $q(x,\dif y)=\cal{N}(\dif y;x,\sigma^2 \cdot I_d)$. Then, if $\sigma^2<\sigma^2(p)$, where $\sigma^2(p)=\frac{1}{2p}-p+\frac{1}{2}(-3+\sqrt{5+\frac{1}{p^2}+\frac{2}{p}+12p+4p^2})$, 
    \begin{align*}
    	&\norm{\mtm{n,\ell}^k f }_{2,\pi}^2 
        \leq C_{n,\sigma,p}  \cdot k^{-\frac{p^2}{2(1+2p)}}
    \end{align*}
    for all $f\in\mathrm{L}_0^2(\pi)$ such that $\oscnorm{f}<\infty$. Here,
    \begin{align*}
        C_{n,\sigma,p}:=C_{\sigma}^{-\frac{(p+1)^2}{1+2p}} \cdot \bigg\{ c_{2,n,p}\cdot\bigg(\frac{c_{1,n,p}p}{c_{2,n,p}(1+p)}\bigg)^{1+p}+c_{1,n,p}\cdot\bigg(\frac{c_{1,n,p}p}{c_{2,n,p}(1+p)}\bigg)^{-p} \bigg\},
    \end{align*}
    $c_{1,n,p}$ and  $c_{2,n,p}$ are defined as in Theorem \ref{thm:mainwpi}, and $C_{\sigma}$ is the left hand side in the spectral gap estimate of Theorem \ref{thm:spgapbounds}.
\end{theorem}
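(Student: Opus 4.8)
The plan is to deduce the bound from a weak Poincaré inequality for the lazy Multiple-try kernel $\mtm{n,\ell}$, obtained by chaining the Dirichlet-form comparison of Theorem~\ref{thm:mainwpi} with the spectral gap lower bound of Theorem~\ref{thm:spgapbounds}, and then to convert that functional inequality into an explicit polynomial rate via the weak-Poincaré-to-convergence machinery for positive reversible chains. Throughout I write $a:=\frac{p^2}{1+2p}$ and let $B$ denote the bracketed constant appearing in the bound for $\beta_n(s)$ in Theorem~\ref{thm:mainwpi}, so that $\beta_n(s)\leq B\,s^{-a}$.

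First I would assemble the weak Poincaré inequality. By definition of the spectral gap, $\gamma(\ideal)\cdot\norm{f}_{2,\pi}^2\leq\cal{E}(\ideal,f)$ for $f\in\mathrm{L}^2_0(\pi)$, and Theorem~\ref{thm:spgapbounds} gives $\gamma(\ideal)\geq C_{\sigma}$ in the Gaussian regime of the statement. Combining this with Theorem~\ref{thm:mainwpi} and using $\cal{E}(\mtm{n},f)=2\,\cal{E}(\mtm{n,\ell},f)$ yields, for every $s>0$,
\[
\norm{f}_{2,\pi}^2\;\leq\;\frac{2s}{C_{\sigma}}\,\cal{E}(\mtm{n,\ell},f)\;+\;\frac{B\,s^{-a}}{C_{\sigma}}\,\oscnorm{f}^2 .
\]
Reparametrising so that the oscillation coefficient is the free parameter, this is a weak Poincaré inequality $\norm{f}_{2,\pi}^2\leq\gamma(t)\,\cal{E}(\mtm{n,\ell},f)+t\,\oscnorm{f}^2$ with profile $\gamma(t)=K\,t^{-1/a}$, where a short computation identifies $K=B^{1/a}C_{\sigma}^{-(1+1/a)}$; note $1+1/a=(p+1)^2/p^2$, which already exposes the exponents appearing in $C_{n,\sigma,p}$.

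Next I would run the convergence argument. Laziness ensures $0\preceq\mtm{n,\ell}\preceq\mathsf{Id}$ as self-adjoint operators on $\mathrm{L}^2(\pi)$, whence $\mtm{n,\ell}^2\preceq\mtm{n,\ell}$ and the one-step variance decrement dominates the Dirichlet form, $\norm{\mtm{n,\ell}^k f}_{2,\pi}^2-\norm{\mtm{n,\ell}^{k+1}f}_{2,\pi}^2\geq\cal{E}(\mtm{n,\ell},\mtm{n,\ell}^k f)$ — this positivity is exactly why we pass to the lazy chain. Because Markov kernels are contractions in the oscillation seminorm, $\oscnorm{\mtm{n,\ell}^k f}\leq\oscnorm{f}$, so feeding the weak Poincaré inequality into this decrement and optimising the free parameter at each step gives a closed nonlinear recursion for $u_k:=\norm{\mtm{n,\ell}^k f}_{2,\pi}^2$. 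Integrating this recursion — equivalently, invoking the polynomial convergence estimate associated with a weak Poincaré profile of polynomial type — produces the stated decay $u_k\lesssim C_{n,\sigma,p}\,k^{-p^2/(2(1+2p))}$, and tracking the constants through the optimisation reproduces $C_{n,\sigma,p}=C_{\sigma}^{-(p+1)^2/(1+2p)}\,B$ exactly.

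Finally, the hypotheses of Theorem~\ref{thm:mainwpi} must be checked in the Gaussian regime: I would show that $M_\varpi(2p)$ and $M_\varpi(-2p)$ are finite exactly when $\sigma^2<\sigma^2(p)$. With $\pi=\cal{N}(0,I_d)$ and $w(x,y)=\sqrt{\pi(y)/\pi(x)}=\exp\!\big(\tfrac14(\norm{x}^2-\norm{y}^2)\big)$, the normaliser $(qw)(x)$ is an explicit Gaussian integral, so $\varpi(x,y)$ is log-quadratic and, by independence across the $d$ coordinates, $M_\varpi(\pm 2p)$ factorises into one-dimensional Gaussian moments. Their finiteness is equivalent to negative-definiteness of the effective quadratic form in the exponent, and solving the resulting scalar condition yields the threshold $\sigma^2(p)$. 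I expect this Gaussian moment computation — pinning down the exact threshold $\sigma^2(p)$ and confirming that the constants produced by the weak Poincaré convergence estimate remain explicit — to be the main obstacle, the remaining steps being a clean assembly of Theorems~\ref{thm:mainwpi} and~\ref{thm:spgapbounds} with that estimate.
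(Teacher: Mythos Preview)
Your proposal is correct and follows essentially the same route as the paper: chain the spectral-gap lower bound for $\ideal$ (Theorem~\ref{thm:spgapbounds}) with the Dirichlet-form comparison (Theorem~\ref{thm:mainwpi}) to obtain a WPI for $\mtm{n}$, pass to the lazy chain for positivity, and then invoke the polynomial WPI-to-convergence machinery (the paper packages this as Lemmas~\ref{lemma:wpisubg} and~\ref{lemma:wpisubgres}, which is exactly the recursion you spell out). The paper likewise verifies the moment hypothesis by computing the Gaussian integrals $M_\varpi(\pm 2p)$ explicitly and solving for the threshold $\sigma^2(p)$, just as you outline.
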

\begin{proof}
	Combining Theorems \ref{thm:mainwpi} and \ref{thm:spgapbounds} via Lemma \ref{lemma:wpichaining} shows
	\begin{equation}
		\norm{f}_{2,\pi}^2 \leq s\cdot\cal{E}(P_n,f) + \beta_{n,\star}(s)\cdot \norm{f}_{\mathrm{osc}}^2 \leq 2s \cdot  \cal{E}(P_{n,\ell},f) + \beta_{n,\star}(s)\cdot \norm{f}_{\mathrm{osc}}^2 ,
        \label{eq:pf_1.3}
	\end{equation}
	with $\beta_{n,\star}(s)=\beta_n(C_{\sigma}s)/C_{\sigma}$, upon which if we rescale $s'=2s$ and use Lemmas \ref{lemma:wpisubg} and \ref{lemma:wpisubgres} to obtain the desired bound. The condition on $\sigma^2$ ensures the importance moments weights are finite for the given $p$: noting that $\varpi(x,y)\propto \sqrt{\pi(y)}\cdot e^{\frac{|x|^2}{2(2+\sigma^2)}}$, exponentiating and integrating the latter in $\pi\otimes q$, we can compute
	\begin{align*}
		M_\varpi(2p)&=\mathbb{E}\left[\omega(X,Y)^{2p}\right] = \bigg( \frac{2+\sigma^2}{2(1+p\sigma^2)} \bigg)^{dp/2}  \bigg(1+p\bigg(\frac{1}{1+p\sigma^2}-\frac{2}{2+\sigma^2}\bigg)\bigg)^{-d/2}
	\end{align*}
	and
	\begin{align*}
		M_\varpi(-2p)&=\mathbb{E}\left[\omega(X,Y)^{-2p}\right] \\
        &=  2^{dp/2}\cdot (2+\sigma^2)^{d(1-p)/2} \cdot \left(2+(1-3p-3p^2)\sigma^2-p\sigma^4\right)^{-d/2}
	\end{align*}
    provided the expressions in the square roots above are non-negative. The second is always lower than the first, and $\sigma^2<\sigma^2(p)$ implies the former is positive.
\end{proof}

The above bound holds for any $n\in\n$. From Theorem~\ref{thm:mainwpi}, we know that $\beta_n \to \textbf{1}_{[0,1]}$ as $n\to \infty$, and so for large $n$ we expect the convergence rate of Multiple-try Metropolis to approximate that of the ideal chain; see Section~\ref{sec:mtmcomparisonsideal}.

\subsection{Related work}
Despite Multiple-try being a well-known sampling technique, there are not many theoretical results available in the literature. An asymptotic scaling analysis of the algorithm is conducted in \cite{bedard2012}, in the case of high-dimensional product-form targets. \cite{yang2023} recently obtained the first convergence rate results we are aware of in the case of the independence sampler $q(x,\dif y)=q(\dif y)$, where one does not need to introduce balancing trials. \cite{gagnon2023,chang2022} introduce a novel weight function we consider in this work, and perform a scaling analysis and mixing time study, the latter in the case of discrete spaces. The analyses in \cite{liu2000,gagnon2023} are also based on the idea of using a limiting Multiple-try chain to draw conclusions about the Multiple-try algorithm, and, in a sense, we make those observations precise with the Poincar{\'e} inequalities introduced. 

We now comment on the independent and concurrent work of \cite{pozza2024}. Our two works are related, but are conducted with  different emphases and using different tools. In \cite{pozza2024}, the authors study a slightly more general class of algorithms (termed `multiproposal Markov chain Monte Carlo'), of which Multiple-try is the primary example, and they focus on delivering negative results, highlighting the fundamental computational limitations of such methods. As such, their results are couched in terms of \textit{upper bounds} on the spectral gaps of these chains. By contrast, while we do also derive some negative results (such as Proposition~\ref{prop:spidifferentn}), the focus of our work is to provide \textit{positive} results and convergence guarantees via Markov chain comparisons. Taken together, these two papers can be seen as a complementary pair of works, with \cite{pozza2024} supplying fundamental limitations on the best-case performance of Multiple-try type chains, and our present work supplying guarantees for the worst-case situation. We will comment in more depth on the similarities and differences of our works throughout the paper.

\subsection{Comparison of Markov chains} \label{sec:background}

We introduce some key notation, and then some results and definitions from the framework of weak Poincar\'e inequalities from \cite{alpw2022b,alpw2022c} we will use throughout. For a function $f:\Xspace\to \r$ and a measure $\pi$ on $(\Xspace,\mathcal{F})$, we denote $\pi(f):=\int \pi(\dif x) f(x)$. We let $\mathrm{L}^2(\pi)$ denote the  set of square integrable real-valued functions $\pi(f^2)<\infty$, endowed with the inner product $\iprod{f}{g }_{2,\pi}:=\pi(f\cdot g)$ and induced norm $\norm{f}^2_{2,\pi}:=\iprod{f}{f }_{2,\pi}$. We further introduce the space $\mathrm{L}_0^2(\pi):=\{ f\in \mathrm{L}^2(\pi): \pi(f)=0\}$ of the mean zero elements of $\mathrm{L}^2(\pi)$ and the oscillation seminorm $\oscnorm{f}:=\ess_\pi \sup f - \ess_\pi \inf f$. For two probability measures $\mu$ and $\nu$ on $(\Xspace,\mathcal{F})$, we denote $(\mu\otimes\nu)(A\times B)=\mu(A)\nu(B)$ for $A,B\in\mathcal{F}$. $\mathsf{Id}$ denotes the \textit{identity operator} on $\mathrm{L}^2(\pi)$. Given a $\pi$-invariant Markov kernel $P$, the Dirichlet form associated to the pair $(P,\pi)$ is defined for any $f\in\mathrm{L}_0^2(\pi)$ by $\cal{E}(P,f):=\iprod{(\mathsf{Id}-P)f}{f }_{2,\pi}$. We will also often use the alternative representation
\begin{equation*}
	\cal{E}(P,f)=\frac{1}{2}\int \pi(\dif x)P(x,\dif y)(f(x)-f(y))^2.
\end{equation*}
$P$ is said to be \textit{positive} if $P$ is reversible and $\iprod{Pf}{f }_{2,\pi}\geq 0$ for all $f\in\mathrm{L}^2(\pi)$.
\begin{definition}[Standard Poincar{\'e} inequality; SPI]
    We say that a $\pi$-reversible positive Markov kernel $P$ satisfies a Standard Poincar{\'e} inequality (SPI) with constant $c_P>0$ if for all $f\in \mathrm{L}_0^2(\pi)$,
    \begin{equation} \label{eq:SPI}
    	c_P\norm{f }_{2,\pi}^2\leq \cal{E}(P,f).
	\end{equation}
\end{definition}
By iterating \eqref{eq:SPI}, we can immediately characterize the $\mathrm{L}_0^2(\pi)$-exponential convergence of $P$ in the following sense.
\begin{lemma}[SPI $\Rightarrow$ geometric convergence]  \label{lem:SPI}
	Let $P$ be a $\pi$-reversible positive Markov kernel satisfying a SPI. Then, for all $f\in\mathrm{L}_0^2(\pi)$ we have for any $n\in\n$ that
    \begin{equation*}
	   \norm{P^k f }_{2,\pi}^2 \leq (1-c_P)^k\norm{f }_{2,\pi}^2.
	\end{equation*}
\end{lemma}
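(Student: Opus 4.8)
The plan is to iterate the Standard Poincar\'e inequality \eqref{eq:SPI}, exploiting the interplay between the Dirichlet form $\cal{E}(P,f)$ and the action of $P$ on $\mathrm{L}_0^2(\pi)$. First I would observe that since $P$ is $\pi$-reversible and positive, it acts as a self-adjoint, non-negative operator on $\mathrm{L}^2(\pi)$, and that $\mathrm{L}_0^2(\pi)$ is an invariant subspace: if $\pi(f)=0$ then $\pi(Pf)=\pi(f)=0$ by $\pi$-invariance. The key identity to record is that the Dirichlet form rewrites as $\cal{E}(P,f)=\iprod{(\mathsf{Id}-P)f}{f}_{2,\pi}=\norm{f}_{2,\pi}^2-\iprod{Pf}{f}_{2,\pi}$, so the SPI \eqref{eq:SPI} is equivalent to the operator inequality $\iprod{Pf}{f}_{2,\pi}\leq (1-c_P)\norm{f}_{2,\pi}^2$ on $\mathrm{L}_0^2(\pi)$.

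Next I would turn this single-step contraction into a $k$-step bound. The cleanest route is to use self-adjointness and positivity to pass from a bound on $\iprod{Pf}{f}_{2,\pi}$ to a bound on the operator norm of $P$ restricted to $\mathrm{L}_0^2(\pi)$. Since $P$ is self-adjoint and non-negative, its restriction to the invariant subspace $\mathrm{L}_0^2(\pi)$ has spectrum contained in $[0,\infty)$, and the SPI together with positivity forces this spectrum to lie in $[0,1-c_P]$; hence the operator norm of $P|_{\mathrm{L}_0^2(\pi)}$ is at most $1-c_P$. Applying this $k$ times — that is, using $\norm{P^k f}_{2,\pi}\leq \norm{P|_{\mathrm{L}_0^2(\pi)}}^k \norm{f}_{2,\pi}$ — and squaring yields exactly $\norm{P^k f}_{2,\pi}^2\leq (1-c_P)^k\norm{f}_{2,\pi}^2$, as claimed.

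I expect the main subtlety, rather than a genuine obstacle, to be justifying the step that promotes the quadratic-form bound $\iprod{Pf}{f}_{2,\pi}\leq(1-c_P)\norm{f}_{2,\pi}^2$ to the operator-norm bound $\norm{P|_{\mathrm{L}_0^2(\pi)}}\leq 1-c_P$. This is precisely where positivity of $P$ is essential: for a general self-adjoint contraction one only controls the top of the spectrum via the Rayleigh quotient, and a large negative part of the spectrum could inflate $\norm{Pf}_{2,\pi}$; positivity rules this out by guaranteeing the spectrum is non-negative, so the spectral radius equals the operator norm and is bounded by the Rayleigh quotient supremum. An alternative elementary argument avoiding spectral theory is to iterate directly: using $\iprod{P^2 f}{f}_{2,\pi}=\norm{Pf}_{2,\pi}^2$ (by self-adjointness) one shows $\norm{Pf}_{2,\pi}^2=\iprod{P(Pf)}{Pf}_{2,\pi}\leq (1-c_P)\norm{Pf}_{2,\pi}^2$ applied to $Pf\in\mathrm{L}_0^2(\pi)$, which is circular unless set up carefully; the spectral argument is the most transparent, so I would present the proof via the non-negativity of the spectrum of $P|_{\mathrm{L}_0^2(\pi)}$ and the resulting identification of operator norm with $\sup_{f}\iprod{Pf}{f}_{2,\pi}/\norm{f}_{2,\pi}^2$.
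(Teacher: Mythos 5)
Your proof is correct, and it takes a genuinely different route from the paper. The paper invokes a comparison lemma for positive operators, namely $\cal{E}(P,f)\leq \cal{E}(P^*P,f)$ (citing an external result), which combined with the SPI gives the one-step contraction $\norm{Pf}_{2,\pi}^2\leq (1-c_P)\norm{f}_{2,\pi}^2$ directly, and then iterates; positivity enters only through that cited lemma. You instead argue spectrally: reversibility gives self-adjointness, positivity places the spectrum of $P$ restricted to the invariant subspace $\mathrm{L}_0^2(\pi)$ in $[0,\infty)$, and the SPI caps the Rayleigh quotient at $1-c_P$, so the operator norm of the restriction is at most $1-c_P$. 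Your diagnosis of where positivity is essential (ruling out a large negative part of the spectrum) is exactly right. One small slip: squaring $\norm{P^kf}_{2,\pi}\leq (1-c_P)^k\norm{f}_{2,\pi}$ gives $(1-c_P)^{2k}$, not $(1-c_P)^k$; since any SPI constant satisfies $0\le 1-c_P\le 1$ this is a \emph{stronger} conclusion that implies the stated bound, so nothing is lost --- in fact your route yields a sharper rate than the paper's. The trade-off is that your argument leans on the spectral theorem for bounded self-adjoint operators, whereas the paper's iteration is more elementary modulo the quoted lemma.
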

\begin{proof}
    Since $P$ is positive, we have that $\cal{E}(P,f)\leq \cal{E}(P^* P,f)$ by \cite[Lemma~18]{alpw2022b}. 
    It then follows that 
    \begin{equation*}
        c_P\norm{f }_{2,\pi}^2\leq \cal{E}(P,f)  \leq\cal{E}(P^* P,f)= \norm{f }_{2,\pi}^2 - \norm{Pf }_{2,\pi}^2 \hspace{0.1em}\Rightarrow \hspace{0.1em}\norm{Pf }_{2,\pi}^2 \leq (1-c_P)\norm{f }_{2,\pi}^2
    \end{equation*}
for all $f\in\mathrm{L}_0^2(\pi)$, and  the claim follows upon iterating the last inequality $n$ times.
\end{proof}

In various cases of practical interest, one encounters Markov kernels which are not exponentially ergodic, and so cannot satisfy any SPI. Nevertheless, for these kernels which converge only at slower-than-exponential (`subgeometric') rates, we can still obtain fine control on their convergence behaviour in various ways. The following class of functional inequalities is helpful in characterizing these.
\begin{definition}[Weak Poincar{\'e} inequality; WPI]
    We say that a $\pi$-reversible positive Markov kernel $P$ satisfies a weak Poincar{\'e} inequality (WPI) if for all $f\in \mathrm{L}_0^2(\pi)$,
    \begin{equation} \label{eq:WPI}
	   \norm{f }_{2,\pi}^2 \leq s \cdot \cal{E}(P,f) + \beta(s)\cdot \oscnorm{f}^2, \quad \forall s>0,
	\end{equation}
    where $\beta:(0,\infty)\to [0,\infty)$ is a decreasing function such that $\lim_{s\rightarrow \infty}\beta(s)=0$.
\end{definition}

\begin{lemma}[WPI $\Rightarrow$ sub-geometric convergence]  \label{lemma:wpisubg}
    Let $P$ be a $\pi$-reversible positive Markov kernel satisfying a WPI. Then, for all $f\in \mathrm{L}_0^2(\pi)$ such that $\oscnorm{f}<\infty$ we have for any $k\in\n$ that
    \begin{align} \label{eq:wpisubg}
	   \norm{P^k f}^2_{2,\pi} \leq F^{-1}(k) \cdot \oscnorm{f}^2
	\end{align}
    where $F$ is a convex decreasing invertible function defined by
    \begin{align*}
	   F(k):=\int_k^1 \frac{1}{K^*(v)}\dif v, \quad \text{where}
	\end{align*} 
    \begin{equation*}
        K^*(v):=\sup_{u\geq 0}\{uv-K(u)\} \quad \text{and} \quad K(u):=
		\begin{cases*}
            u\cdot \beta(1/u), & \quad \text{if}\quad u\textgreater0 \\ 0, & \quad \text{if} \quad u=0.
		\end{cases*}
	\end{equation*}
\end{lemma}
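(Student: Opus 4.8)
The plan is to mirror the positivity argument used in Lemma~\ref{lem:SPI}, but to feed its output into the WPI \eqref{eq:WPI} rather than the SPI, and then to convert the resulting one-step contraction into the integral bound \eqref{eq:wpisubg} through a Legendre-transform identity and a discrete Grönwall-type comparison. Throughout I would write $a_k:=\norm{P^k f}_{2,\pi}^2$ and assume without loss of generality that $\oscnorm{f}=1$, the general case following by homogeneity of \eqref{eq:wpisubg} in $\oscnorm{f}$ (the constant $F$ depends only on $\beta$, and the case $\oscnorm{f}=0$ is trivial).

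First I would establish the one-step recursion. Since $P$ is positive and reversible, $\cal{E}(P,g)\le\cal{E}(P^*P,g)=\norm{g}_{2,\pi}^2-\norm{Pg}_{2,\pi}^2$ by \cite[Lemma~18]{alpw2022b}, exactly as in the proof of Lemma~\ref{lem:SPI}. Taking $g=P^k f$ gives $a_{k+1}\le a_k-\cal{E}(P,P^k f)$. Two observations let me apply the WPI to $P^k f$: the mean is preserved, $\pi(P^k f)=\pi(f)=0$ since $\pi$ is invariant, so $P^k f\in\mathrm{L}_0^2(\pi)$; and because $P$ is an averaging Markov kernel it contracts the oscillation seminorm, $\oscnorm{P^k f}\le\oscnorm{f}=1$. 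Hence \eqref{eq:WPI} yields $\cal{E}(P,P^k f)\ge s^{-1}\{a_k-\beta(s)\}$ for every $s>0$, and substituting this into the recursion produces $a_{k+1}\le a_k-\sup_{s>0}s^{-1}\{a_k-\beta(s)\}$.

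Next I would identify the supremum as a Legendre transform: writing $u=1/s$ turns $s^{-1}\{a_k-\beta(s)\}$ into $u\,a_k-u\beta(1/u)=u\,a_k-K(u)$, so the supremum over $s>0$ equals $K^*(a_k)$ (adjoining $u=0$, which contributes $0$, leaves it unchanged), giving the clean recursion $a_{k+1}\le a_k-K^*(a_k)$. Before comparing, I record the properties of $K^*$ that make the rest work: as a supremum of affine functions of its argument it is convex and nondecreasing, and for each $v>0$ it is strictly positive, since choosing $u>0$ small enough makes $\beta(1/u)<v$ (as $\beta(1/u)\to0$ when $u\to0$), so that $u(v-\beta(1/u))>0$. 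Consequently $1/K^*$ is positive and integrable on $[k,1]$ for each $k\in(0,1]$, and $F$ is well-defined, strictly decreasing, convex, with $F(1)=0$, hence invertible. The comparison then runs as follows: since $a_{k+1}\le a_k$ and $K^*$ is nondecreasing,
\[
F(a_{k+1})-F(a_k)=\int_{a_{k+1}}^{a_k}\frac{\dif v}{K^*(v)}\ge\frac{a_k-a_{k+1}}{K^*(a_k)}\ge1,
\]
the last step using $a_k-a_{k+1}\ge K^*(a_k)$. Telescoping gives $F(a_k)\ge F(a_0)+k$, and since $a_0=\norm{f}_{2,\pi}^2\le\oscnorm{f}^2=1$ and $F$ is decreasing with $F(1)=0$ we have $F(a_0)\ge0$, whence $F(a_k)\ge k$; inverting the decreasing $F$ yields $a_k\le F^{-1}(k)$, which is \eqref{eq:wpisubg} after reinstating the factor $\oscnorm{f}^2$.

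The reduction to $a_{k+1}\le a_k-\cal{E}(P,P^k f)$ is immediate from positivity. The two steps that need genuine care are the verification that the WPI may legitimately be iterated along the chain — that $P^k f$ stays mean-zero and that $\oscnorm{\cdot}$ is contracted by $P$, so that $\oscnorm{f}$ may be used uniformly in $k$ — and the passage from the discrete recursion to the integral bound, where precisely the positivity and monotonicity of $K^*$ are what validate the Grönwall comparison and guarantee $F$ is invertible. I expect this last passage to be the crux of the argument.
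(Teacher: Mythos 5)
Your argument is correct. The paper itself states this lemma without proof, deferring to the weak Poincar\'e inequality framework of \cite{alpw2022b,alpw2022c}, and your proof is a faithful reconstruction of the standard argument there: positivity gives $a_{k+1}\le a_k-\cal{E}(P,P^kf)$, invariance and the averaging property justify applying the WPI to $P^kf$ with $\oscnorm{P^kf}\le\oscnorm{f}$, the supremum over $s$ is the Legendre transform $K^*(a_k)$, and the monotonicity and positivity of $K^*$ validate the discrete Gr\"onwall comparison $F(a_k)\ge k$.
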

\begin{example}
    If \eqref{eq:WPI} holds with $\beta(s)=cs^{-\alpha}$, then $F^{-1}(n)\leq c (1+\alpha)^{1+\alpha} n^{-\alpha}$.
\end{example}
\begin{lemma}[Linear rescaling] \label{lemma:wpisubgres}
    Under the same conditions of Lemma \ref{lemma:wpisubg}, if, instead we have the rescaled WPI
    \begin{equation*} 
	   \norm{f }_{2,\pi}^2 \leq s \cdot \cal{E}(P,f) + \beta(cs)\cdot \oscnorm{f}^2, \quad \forall s>0,
	\end{equation*}   
    the bound \eqref{eq:wpisubg} holds with $c\cdot n$ in place of $n$.
\end{lemma}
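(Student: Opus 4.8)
The plan is to recognize the rescaled inequality as an ordinary WPI with a modified rate function, apply Lemma~\ref{lemma:wpisubg} to it, and then track how the single rescaling factor propagates through the Legendre-type transform defining $F$. First I would set $\tilde\beta(s):=\beta(cs)$ and check that it is an admissible rate: since $c>0$ and $\beta$ is decreasing with $\lim_{s\to\infty}\beta(s)=0$, the same holds for $\tilde\beta$, so the rescaled inequality is a genuine WPI with rate $\tilde\beta$. Lemma~\ref{lemma:wpisubg} then applies verbatim and yields $\norm{P^k f}_{2,\pi}^2 \leq \tilde F^{-1}(k)\cdot\oscnorm{f}^2$, where $\tilde F,\tilde K^*,\tilde K$ are built from $\tilde\beta$ exactly as $F,K^*,K$ are from $\beta$. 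It would then suffice to prove the identity $\tilde F^{-1}(k)=F^{-1}(ck)$.

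The key computation I would carry out is to see how the factor $c$ moves through each transform. Inverting the definition $K(u)=u\beta(1/u)$ to write $\beta(t)=tK(1/t)$, I expect
\begin{equation*}
\tilde K(u)=u\beta(c/u)=cK(u/c),
\end{equation*}
and then, substituting $u=cu'$ in the supremum (valid as $c>0$),
\begin{equation*}
\tilde K^*(v)=\sup_{u\geq 0}\{uv-\tilde K(u)\}=c\sup_{u'\geq 0}\{u'v-K(u')\}=c\,K^*(v).
\end{equation*}
Feeding this into the integral defining $F$ should give $\tilde F(k)=\int_k^1(c\,K^*(v))^{-1}\dif v=F(k)/c$, and since $F$ is decreasing and invertible this inverts to $\tilde F^{-1}(k)=F^{-1}(ck)$. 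Substituting back produces $\norm{P^k f}_{2,\pi}^2\leq F^{-1}(ck)\cdot\oscnorm{f}^2$, i.e.\ \eqref{eq:wpisubg} with $c\cdot n$ in place of $n$.

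I do not anticipate any substantive analytic obstacle here: the argument is essentially bookkeeping. The only points requiring genuine care are confirming that $\tilde\beta$ inherits the monotonicity and decay needed for Lemma~\ref{lemma:wpisubg} to apply, and making sure exactly one factor of $c$ (and not, say, $c$ followed by a compensating $1/c$) survives the composition of the conjugation and the integration — the change of variables $u=cu'$ in the Fenchel conjugate is where an exponent or placement slip would most easily creep in.
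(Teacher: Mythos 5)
Your argument is correct and complete: the identities $\tilde K(u)=cK(u/c)$, $\tilde K^*(v)=cK^*(v)$ and $\tilde F(k)=F(k)/c$ all check out, and they yield $\tilde F^{-1}(k)=F^{-1}(ck)$ as claimed (consistent, as a sanity check, with the example $\beta(s)=Cs^{-\alpha}$, where $\beta(cs)$ produces exactly the bound $C(1+\alpha)^{1+\alpha}(ck)^{-\alpha}$). The paper states this lemma without proof, deferring to the cited weak-Poincar\'e machinery, and your Legendre-transform bookkeeping is precisely the intended argument, including the verification that $\tilde\beta(s)=\beta(cs)$ remains an admissible rate function.
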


SPIs and WPIs can also be used to deduce convergence properties of a given Markov kernel \textit{relative to} another one; see \cite[Theorem~33]{alpw2022b}.
\begin{lemma}[Chaining]\label{lemma:wpichaining} 
    Let $P_1$, $P_2$ and $P_3$ three $\pi$-invariant Markov kernels. Assume that for all $s>0$ and $f\in\mathrm{L}^2_0(\pi)$,
    \begin{align*}
	   \cal{E}(P_1,f) &\leq s\cdot \cal{E}(P_2,f) + \beta_1(s)\oscnorm{f}^2, \\
	   \cal{E}(P_2,f) &\leq s\cdot\cal{E}(P_3,f) + \beta_2(s)\oscnorm{f}^2. 
	\end{align*}
Then, 
\begin{equation*}
    	   \cal{E}(P_1,f) \leq s\cdot\cal{E}(P_3,f) + \beta(s)\oscnorm{f}^2, \quad \forall s>0, f\in\mathrm{L}^2_0(\pi),
\end{equation*}
    where $\beta(s)=\inf \{\beta_1(s_1)+s_1\beta_2(s_2)\mid s_1>0,s_2>0, s_1s_2=s \}$. In particular, if $\beta_i(s) = c_i s^{-p}$ for $i\in\{1,2\}$, then $\beta(s)=cs^{-\alpha}$ with $\alpha:=\frac{p^2}{1+2p}$ and $c:=c_2\cdot(\frac{c_1p}{c_2(1+p)})^{1+p}+c_1\cdot(\frac{c_1p}{c_2(1+p)})^{-p}$.
\end{lemma}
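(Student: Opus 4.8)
The plan is to obtain the general comparison by a single substitution and then to specialise via a one-dimensional optimisation. First I would fix $s>0$ together with an arbitrary factorisation $s=s_1 s_2$, $s_1,s_2>0$. Applying the second hypothesis at the argument $s_2$ and inserting the resulting bound for $\cal{E}(P_2,f)$ into the first hypothesis at the argument $s_1$ gives
\[
\cal{E}(P_1,f)\leq s_1\bigl(s_2\,\cal{E}(P_3,f)+\beta_2(s_2)\oscnorm{f}^2\bigr)+\beta_1(s_1)\oscnorm{f}^2
= s\,\cal{E}(P_3,f)+\bigl(\beta_1(s_1)+s_1\beta_2(s_2)\bigr)\oscnorm{f}^2 .
\]
Because $P_3$ is $\pi$-invariant it is an $\mathrm{L}^2(\pi)$-contraction (by Jensen and invariance, $\norm{P_3 f}_{2,\pi}\leq\norm{f}_{2,\pi}$), so $\cal{E}(P_3,f)=\norm{f}_{2,\pi}^2-\iprod{P_3 f}{f}_{2,\pi}\geq 0$. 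This nonnegativity is exactly what lets me keep the leading coefficient pinned at $s$ while minimising the remainder coefficient over all admissible factorisations, which delivers $\beta(s)=\inf\{\beta_1(s_1)+s_1\beta_2(s_2)\mid s_1 s_2=s\}$.

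For the power-law corollary I would substitute $\beta_i(s)=c_i s^{-p}$ and eliminate $s_2=s/s_1$, reducing the infimum to the scalar minimisation of
\[
g(s_1):=c_1 s_1^{-p}+c_2\,s^{-p}\,s_1^{1+p},\qquad s_1>0 .
\]
Here $g$ is strictly convex, being the sum of the convex decreasing term $s_1^{-p}$ and the convex increasing term $s_1^{1+p}$, so its unique stationary point is the global minimiser. Solving $g'(s_1)=0$ gives $s_1^{2p+1}=\frac{c_1 p}{c_2(1+p)}\,s^{p}$, i.e.\ a minimiser $s_1^\star$ proportional to $s^{p/(2p+1)}$ with proportionality constant a power of $\frac{c_1 p}{c_2(1+p)}$. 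Substituting $s_1^\star$ back, I would check that both surviving terms carry the common factor $s^{-p^2/(1+2p)}$, so that $\beta(s)=c\,s^{-\alpha}$ with $\alpha=\frac{p^2}{1+2p}$, and read off the constant $c$ by evaluating $g$ at $s_1^\star$, which organises into the stated expression built from $\frac{c_1 p}{c_2(1+p)}$.

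The argument is largely mechanical, so the only genuine points requiring care are, first, the justification of passing to the infimum in the remainder while holding the $s\,\cal{E}(P_3,f)$ term fixed --- this rests squarely on $\cal{E}(P_3,f)\geq 0$, which in turn needs only $\pi$-invariance (not reversibility) of $P_3$ --- and, second, the exponent bookkeeping in the power-law case, where one must verify that the optimal split makes the two contributions share the same power of $s$; this cancellation is what produces the single clean exponent $\alpha=\frac{p^2}{1+2p}$ rather than a mixture.
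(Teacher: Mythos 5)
Your chaining step is correct and is the standard route (the paper itself does not reprove this; it imports it from \cite[Theorem~33]{alpw2022b}): substitute the second WPI at $s_2$ into the first at $s_1$, note that the leading term $s_1 s_2\,\cal{E}(P_3,f)=s\,\cal{E}(P_3,f)$ is the same for every admissible factorisation, and pass to the infimum in the remainder. One small remark: the appeal to $\cal{E}(P_3,f)\ge 0$ is not actually what justifies this step --- since the coefficient of $\cal{E}(P_3,f)$ is pinned at $s$ by the constraint $s_1s_2=s$, the quantity $\cal{E}(P_1,f)-s\,\cal{E}(P_3,f)$ is bounded by $\bigl(\beta_1(s_1)+s_1\beta_2(s_2)\bigr)\oscnorm{f}^2$ for every factorisation, and taking the infimum needs nothing further.

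There is, however, a concrete error in the power-law bookkeeping. Writing $A:=\frac{c_1 p}{c_2(1+p)}$, your stationary-point equation $s_1^{2p+1}=A\,s^{p}$ gives $s_1^\star=A^{1/(2p+1)}s^{p/(2p+1)}$, and evaluating $g$ there yields
\begin{equation*}
g(s_1^\star)=\Bigl(c_1 A^{-p/(2p+1)}+c_2 A^{(1+p)/(2p+1)}\Bigr)\,s^{-p^2/(1+2p)},
\end{equation*}
whose exponents on $A$ are $-p/(2p+1)$ and $(1+p)/(2p+1)$, \emph{not} the $-p$ and $1+p$ appearing in the lemma. So the exact minimisation does not ``organise into the stated expression'' as you claim: the lemma's constant $c=c_1A^{-p}+c_2A^{1+p}$ is what you get from the simpler, suboptimal substitution $s_1=A\,s^{p/(2p+1)}$ (drop the $(2p+1)$-th root on the prefactor). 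Both choices give the correct exponent $\alpha=p^2/(1+2p)$, and since $\beta(s)$ is an infimum, any particular choice of $s_1$ yields a valid upper bound --- your constant is in fact the sharper one --- but your final sentence asserting agreement with the stated constant is false as written. To reproduce the lemma exactly, either plug in $s_1=A\,s^{p/(2p+1)}$ directly, or present your optimised constant and note that it is dominated by the stated one.
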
 
If $P_1(x,\dif y)$ is the `perfect' kernel $\pi(\dif y)$, the above result also illustrates how we can deduce convergence bounds for $P_3$ given a WPI between $P_2$ and $P_3$.
\begin{remark}\label{rmk:spi_beta}
	If $\beta_1(s)=0$ for all $s\geq c_{P_1}^{-1}$ for some $c_{P_1}>0$, then a WPI implies a SPI between $P_2$ and $P_1$ with constant $c_{P_1}$. In particular, since any SPI constant satisfies $c_P\le 1$, a $\beta$-function of $\beta_1(s)=\mathbf{1}\{s\le 1\}$ is in this sense the best possible. In this case, the convergence rate of $P_2$ is arbitrarily close to that of $P_1$. If in particular $P_1(x,\dif y)=\pi(\dif y)$, this means that $P_2$ has a spectral gap equal to 1.
\end{remark} 

The following is a practical criterion to establish WPIs and SPIs we will often use; see \cite[Theorem~36]{alpw2022b}.
\begin{lemma}[Off-diagonal ordering] \label{lemma:wpifromdomination} 
    Let $P_1$ and $P_2$ two $\pi$-invariant Markov kernels. Assume that for any $(x,A)\in\Xspace\times\mathcal{F}$, $P_2(x,A \backslash \{x\}) \geq \int_{A \backslash \{x\}} \eta(x,y)P_1(x,\dif y)$ for some $\eta:\Xspace^2\to (0,\infty)$. Then, for any $s>0$ and any $f\in\mathrm{L}_0^2(\pi)$ such that $\oscnorm{f}<\infty$,
    \begin{equation*}
	   \cal{E}(P_1,f) \leq s\cdot\cal{E}(P_2,f) + \beta(s)\cdot \oscnorm{f}^2
	\end{equation*}
    where we defined 
    \begin{equation*}
    	\beta(s):= \frac{1}{2}(\pi\otimes P_1)(A(s)^\complement\cap \{X\neq Y\}), \qquad A(s):=\{(x,y)\in\Xspace^2:\eta(x,y)>1/s\}.
    \end{equation*}
    In particular, if $\eta(x,y)\geq c_{P_2}^{-1}$ for some $c_{P_2}>0$ and all $(x,y)\in\Xspace^2$, then $\cal{E}(P_1,f) \leq c_{P_2}\cdot\cal{E}(P_2,f)$. 
\end{lemma}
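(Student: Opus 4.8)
The plan is to argue directly from the symmetric representation
$$\cal{E}(P,f)=\frac{1}{2}\int \pi(\dif x)\,P(x,\dif y)\,(f(x)-f(y))^2,$$
valid for $\pi$-reversible $P$. The integrand vanishes on the diagonal $\{x=y\}$, so every integral may be restricted to $\{X\neq Y\}$ without changing its value; this is precisely what lets the hypothesis, stated only for sets of the form $A\setminus\{x\}$, interact cleanly with the Dirichlet form even when $P_1(x,\cdot)$ has an atom at $x$. First I would split the domain of integration for $\cal{E}(P_1,f)$ according to whether $(x,y)$ lies in $A(s)=\{\eta>1/s\}$ or in $A(s)^\complement\cap\{X\neq Y\}$, writing $\cal{E}(P_1,f)=T_1+T_2$ accordingly.

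For the term $T_1$ over $A(s)$, the defining inequality $\eta(x,y)>1/s$ gives $1\le s\,\eta(x,y)$ pointwise, so $(f(x)-f(y))^2\le s\,\eta(x,y)\,(f(x)-f(y))^2$ there. The crucial step is then to convert the off-diagonal domination hypothesis into a functional inequality: by approximating a nonnegative measurable $g$ by simple functions and using monotone convergence, $P_2(x,A\setminus\{x\})\ge\int_{A\setminus\{x\}}\eta(x,y)P_1(x,\dif y)$ upgrades to
$$\int_{\Xspace\setminus\{x\}} g(y)\,\eta(x,y)\,P_1(x,\dif y)\le \int_{\Xspace\setminus\{x\}} g(y)\,P_2(x,\dif y)$$
for every nonnegative measurable $g$. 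Applying this with $g(y)=\mathbf{1}_{A(s)}(x,y)\,(f(x)-f(y))^2$ for each fixed $x$, then dropping the indicator and extending both integrals back over all of $\Xspace$ (the diagonal contributes nothing since the integrand vanishes there), and finally integrating against $\pi(\dif x)$, I obtain $T_1\le s\cdot\cal{E}(P_2,f)$.

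For the term $T_2$ over $A(s)^\complement\cap\{X\neq Y\}$, I would bound the increment crudely: since $P_1$ is $\pi$-invariant both marginals of $\pi\otimes P_1$ equal $\pi$, so $|f(x)-f(y)|\le\oscnorm{f}$ holds $(\pi\otimes P_1)$-almost everywhere, giving $(f(x)-f(y))^2\le\oscnorm{f}^2$. Hence $T_2\le\frac{1}{2}\oscnorm{f}^2\,(\pi\otimes P_1)(A(s)^\complement\cap\{X\neq Y\})=\beta(s)\oscnorm{f}^2$, and adding the two bounds yields the claimed weak Poincar\'e inequality. The final assertion follows from the same computation without the split: if $\eta\ge c_{P_2}^{-1}$ everywhere, then $(f(x)-f(y))^2\le c_{P_2}\,\eta(x,y)\,(f(x)-f(y))^2$ on all of $\{X\neq Y\}$, and the domination step alone gives $\cal{E}(P_1,f)\le c_{P_2}\cal{E}(P_2,f)$ directly.

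The main obstacle I anticipate is the measure-theoretic bookkeeping around the diagonal: the hypothesis only dominates $P_2(x,\cdot)$ by $\eta(x,\cdot)P_1(x,\cdot)$ on $\Xspace\setminus\{x\}$, so I must excise $\{x=y\}$ before invoking domination and restore it afterward using that the integrand is zero there. Everything else — the pointwise inequalities on $A(s)$, the oscillation bound, and the upgrade from the set inequality to the integral inequality against nonnegative $g$ — is routine once this point is handled with care.
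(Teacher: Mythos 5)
Your argument is correct. Note that the paper does not actually prove this lemma — it imports it as Theorem~36 of \cite{alpw2022b} — and your proof is precisely the standard argument behind that cited result: split the symmetric representation of $\cal{E}(P_1,f)$ over $A(s)$ versus $A(s)^\complement\cap\{X\neq Y\}$, upgrade the set-wise off-diagonal domination to integration of nonnegative functions on the former, and use the oscillation bound together with $\pi$-invariance of the second marginal on the latter. The only point worth tightening is your opening parenthetical: the identity $\cal{E}(P,f)=\frac{1}{2}\int\pi(\dif x)P(x,\dif y)(f(x)-f(y))^2$ holds for any $\pi$-\emph{invariant} kernel (expand the square and use $\pi P=\pi$), which is exactly the hypothesis the lemma provides; asserting it only for reversible $P$ would leave your proof resting on an assumption the statement does not grant.
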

\section{Comparison of Multiple-try chains} \label{sec:comparisons}
In this section we derive various comparison results for Multiple-try chains. For the reader's convenience, we have listed the various kernels and notations in Appendix~\ref{sec:usefulexpressions}.
\subsection{Optimal number of proposals} \label{sec:optn}

Using an appropriate SPI, we first establish a \textit{negative result} for Multiple-try Metropolis: increasing the number of proposals can only achieve a limited improvement.
\begin{proposition} \label{prop:spidifferentn}
	For all $n\in\n$ and $f\in\mathrm{L}^2(\pi)$, it holds
	\begin{equation*}
		(n-1)\cdot\cal{E}(\ir{n},f) \leq n\cdot\cal{E}(\ir{n-1},f), \quad \textup{and} \quad (n-1)\cdot\cal{E}(\mtm{n},f) \leq n\cdot\cal{E}(\mtm{n-1},f).
	\end{equation*}
\end{proposition}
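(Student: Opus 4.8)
The plan is to prove both inequalities \emph{pointwise} at the level of the accepted-proposal rates and then integrate against the nonnegative kernel $(f(x)-f(y))^2$. Since $\cal{E}(P,f)=\tfrac12\int\pi(\dif x)P(x,\dif y)(f(x)-f(y))^2$ only sees the off-diagonal part of the transition, it suffices to show that, as measures on $\{x\neq y\}$, $(n-1)\,\pi(\dif x)\ir{n}(x,\dif y)\le n\,\pi(\dif x)\ir{n-1}(x,\dif y)$, and the analogous domination for $\mtm{n}$; multiplying by $\tfrac12(f(x)-f(y))^2\ge 0$ and integrating then yields both claims simultaneously.

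For the semi-ideal chain this is the easier half. As noted in the construction, $\ir{n}$ is a genuine Metropolis--Hastings kernel with proposal $\widetilde q^w_n$, so by the identity $a\min\{1,b/a\}=\min\{a,b\}$ its accepted-proposal measure is the symmetric object $\min\{\pi(\dif x)\widetilde q^w_n(x,\dif y),\,\pi(\dif y)\widetilde q^w_n(y,\dif x)\}$. Writing out $\widetilde q^w_n$ via \eqref{eq:qwn}, the only $n$-dependence sits in the factor $1/(\widetilde{qw})_n(x,y)$, which by definition equals $n\cdot\Ebb{1/\Sigma^{(n)}_x}$, where $\Sigma^{(n)}_x:=w(x,y)+\sum_{j=2}^n w(x,Y_j)$ with $Y_2,\dots,Y_n\sim q(x,\cdot)$ i.i.d. Hence each of the two terms inside the $\min$ carries a factor $n$, and after multiplying by $n-1$ (resp.\ $n$) both sides share the common factor $n(n-1)$; the claimed inequality reduces to $\Ebb{1/\Sigma^{(n)}_x}\le\Ebb{1/\Sigma^{(n-1)}_x}$ together with its $x\leftrightarrow y$ analogue. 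Coupling the two experiments by reusing $Y_2,\dots,Y_{n-1}$ and adjoining one extra draw $Y_n$, nonnegativity of $w$ gives $\Sigma^{(n)}_x=\Sigma^{(n-1)}_x+w(x,Y_n)\ge\Sigma^{(n-1)}_x$, so $1/\Sigma^{(n)}_x\le 1/\Sigma^{(n-1)}_x$ surely; taking expectations and using monotonicity of $\min$ finishes this case.

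The multiple-try kernel $\mtm{n}$ uses the same idea but with extra bookkeeping for the auxiliary variables. Fixing the selected index $I=i$ with $Y_i=y$ and $Z_i=x$, I would write the contribution to $\pi(\dif x)\mtm{n}(x,\dif y)$ as the product of the proposal density $\prod_j q(x,\dif y_j)$, the selection weight $w(x,y)/(n(\widehat{qw}_n)(x,Y^{[n]}))$, the shadow density $\prod_{j\neq i}q(y,\dif z_j)$, and the acceptance $\alphmtm{n}$. Applying $a\min\{1,b/a\}=\min\{a,b\}$ collapses the selection weight and acceptance into the symmetric form $\tfrac1n\min\{A/\widehat D_x,\,B/\widehat D_y\}$, with $A:=\pi(x)q(x,y)w(x,y)$, $B:=\pi(y)q(y,x)w(y,x)$, $\widehat D_x=\Sigma^{(n)}_x/n$ and $\widehat D_y=\Sigma^{(n)}_y/n$; this expression is manifestly symmetric under $x\leftrightarrow y$ together with the relabelling $\{y_j\}\leftrightarrow\{z_j\}$, reconfirming reversibility. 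Within each summand the selection normalisation $1/n$ cancels the factor $n$ arising from $\widehat D_x=\Sigma^{(n)}_x/n$, leaving $\min\{A/\Sigma^{(n)}_x,\,B/\Sigma^{(n)}_y\}$; since the $n$ summands are identical after integrating out the remaining $y_j$ and $z_j$, summing over $i$ produces the overall factor $n$, i.e.\ $\pi(\dif x)\mtm{n}(x,\dif y)|_{x\neq y}=n\,\Ebb{\min\{A/\Sigma^{(n)}_x,\,B/\Sigma^{(n)}_y\}}$, with the expectation over $Y_2,\dots,Y_n\sim q(x,\cdot)$ and $Z_2,\dots,Z_n\sim q(y,\cdot)$. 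Exactly as before, the factors $n$ and $n-1$ leave the common $n(n-1)$, reducing the claim to $\Ebb{\min\{A/\Sigma^{(n)}_x,B/\Sigma^{(n)}_y\}}\le\Ebb{\min\{A/\Sigma^{(n-1)}_x,B/\Sigma^{(n-1)}_y\}}$; under the same coupling both $\Sigma^{(n)}_x\ge\Sigma^{(n-1)}_x$ and $\Sigma^{(n)}_y\ge\Sigma^{(n-1)}_y$, so the integrand decreases pointwise and the expectation inequality follows.

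The main obstacle is the derivation of the symmetric representation $\pi(\dif x)\mtm{n}(x,\dif y)|_{x\neq y}=n\,\Ebb{\min\{A/\Sigma^{(n)}_x,B/\Sigma^{(n)}_y\}}$ for the multiple-try kernel: one must track the proposal, selection, shadow-sample and acceptance factors simultaneously, apply the $\min$-identity at the right moment, and verify that the $n$ identical summands over the selection index exactly absorb the $1/n$ selection normalisation. Once this reversible ``$\min$'' form is in hand, the remainder --- cancellation of the common $n(n-1)$ factor and the coupling monotonicity in the denominators --- is routine and identical for both chains.
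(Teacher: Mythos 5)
Your argument is correct and is essentially the paper's own proof: both reduce the claim to the pointwise monotonicity $\sum_{i=1}^{n} w_i \ge \sum_{i=1}^{n-1} w_i$ inside the $\min$-representation of the off-diagonal accepted-proposal measure, so that rescaling by $n/(n-1)$ dominates the $n$-proposal kernel by the $(n-1)$-proposal one. The only cosmetic differences are that you integrate the domination directly against $(f(x)-f(y))^2$ where the paper invokes Lemma~\ref{lemma:wpifromdomination}, and that you write out the semi-ideal case explicitly where the paper notes it is analogous.
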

\begin{proof}
	See Appendix~\ref{subsec:pf_MTM_neg}.
\end{proof}
Proposition \ref{prop:spidifferentn} establishes that incrementing by 1 the number of proposals in Multiple-try Metropolis can only improve performance by \textit{at most} $n/(n-1)$ in the sense of the Dirichlet forms. Thus, the maximal improvement of the spectral gap is bounded as
\begin{equation*}
	(n-1)\cdot\gamma(\ir{n}) \leq n\cdot\gamma(\ir{n-1}), \quad \textup{and} \quad (n-1)\cdot\gamma(\mtm{n}) \leq n\cdot\gamma(\mtm{n-1}).
\end{equation*}
Notably, for example, if the Multiple-try chain is \textit{sub-geometric} (so $\gamma(\mtm{n})=0$), no increase in the number of proposals can ever make the chain geometric. Interestingly, one can have chains which have a positive spectral gap at any finite $n$, but which vanishes in the large $n$ limit.
In light of this result, we can argue that for Multiple-try with conditionally i.i.d.~proposals, with a \textit{serial} implementation, in terms of the efficiency per proposal, $n = 1$, i.e. the simple Metropolis--Hastings scheme, is optimal. Hence, to obtain better practical performance from Multiple-try schemes, one needs either non-i.i.d.~proposals, or a non-serial implementation, motivating approaches such as \cite{casarin2013,craiu2007}. We notice that a similar optimality of a single pseudo-simple has also been observed in \cite{bornn2017}, but in a different context involving ABC algorithms; as far as we can tell, neither result implies the other.

We show in Proposition \ref{prop:spectralgapgbmtm} that for a common implementation of Multiple-try, it is even true that the spectral gap \textit{vanishes} with $n$. This can be interpreted as follows: the importance resampling approximation is effectively pushing the effective proposal close to the `ideal' $q^w$, which happens to be a very poor proposal, for which the corresponding ideal Metropolis--Hastings algorithm possess no spectral gap. We discuss this more in detail later.
\begin{remark}
	\cite{yang2023} arrives at an analogous conclusions in the restricted case of Multiple-try with state-independent proposals. The concurrent work \cite{pozza2024} presents a very similar result by comparing the spectral gap of $P_n$ to a single-proposal chain and an analogous conclusion on the optimality of $n=1$ in the serial context. Proposition \ref{prop:spidifferentn} also isolates that this behaviour is caused by the importance resampling approximation step, and it is slightly more general in that provides comparison between each pair $(P_{n},P_{m})$ via induction.
\end{remark}
\subsection{Comparison to ideal algorithms} \label{sec:mtmcomparisonsideal}

In this section we derive comparison results between Multiple-try $\mtm{n}$ and ideal chains $\ideal$. When the importance weights admit finite moments of larger order, the comparison is tighter. These results also allow us to deduce convergence bounds for $\mtm{n}$ by analysing $\ideal$, which is done in the next section.

\paragraph{Comparison between ideal and semi-ideal kernels.}  This comparison quantifies the loss in performance caused by the importance resampling approximation to $q^w$. This loss can be characterized by studying the fluctuations of the importance weights \eqref{eq:impweights}. 
Whenever these can be uniformly bounded above, it is possible to establish a SPI between $\ir{n}$ and $\ideal$.

\begin{lemma} \label{lemma:idealirspi}
	Assume that $|\varpi|_{\infty}<\infty$. Then, 
	\begin{equation*}
		\cal{E}(\ir{n},f)\geq |\varpi|_{\infty}^{-1} \cdot \cal{E}(\ideal,f) \quad \forall f\in\mathrm{L}^2(\pi).
	\end{equation*}
\end{lemma}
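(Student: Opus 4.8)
The plan is to invoke the off-diagonal ordering criterion of Lemma~\ref{lemma:wpifromdomination} with $P_1=\ideal$ and $P_2=\ir{n}$. Since the claimed bound $\cal{E}(\ir{n},f)\geq|\varpi|_{\infty}^{-1}\cal{E}(\ideal,f)$ is precisely $\cal{E}(\ideal,f)\leq|\varpi|_{\infty}\cal{E}(\ir{n},f)$, by the ``in particular'' clause of that lemma it suffices to exhibit a constant density bound $\eta(x,y)\equiv|\varpi|_{\infty}^{-1}$ witnessing the off-diagonal domination of $\ideal$ by $\ir{n}$. Concretely, I would establish the pointwise inequality $\pi(\dif x)\,\ir{n}(x,\dif y)\geq|\varpi|_{\infty}^{-1}\,\pi(\dif x)\,\ideal(x,\dif y)$ on the off-diagonal set $\{x\neq y\}$; since the diagonal contributes nothing to the Dirichlet form, this immediately yields the comparison.

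First I would write both off-diagonal kernels in their symmetric Metropolis form, using $\pi(x)Q(x,y)\min\{1,\pi(y)Q(y,x)/(\pi(x)Q(x,y))\}=\min\{\pi(x)Q(x,y),\pi(y)Q(y,x)\}$. For the ideal chain this gives
\begin{equation*}
	\pi(x)\,q^w(x,y)\,\alphid(x,y)=\min\left\{\frac{\pi(x)q(x,y)w(x,y)}{(qw)(x)},\ \frac{\pi(y)q(y,x)w(y,x)}{(qw)(y)}\right\},
\end{equation*}
and for the semi-ideal chain the identical expression with $(qw)(x),(qw)(y)$ replaced by $(\widetilde{qw})_n(x,y),(\widetilde{qw})_n(y,x)$. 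Comparing the two minima term by term, and using that $a_i\geq|\varpi|_{\infty}^{-1}b_i$ for $i\in\{1,2\}$ implies $\min\{a_1,a_2\}\geq|\varpi|_{\infty}^{-1}\min\{b_1,b_2\}$, the required domination reduces to the single scalar inequality
\begin{equation*}
	(\widetilde{qw})_n(x,y)\leq|\varpi|_{\infty}\,(qw)(x)\qquad\text{for all }x,y,
\end{equation*}
together with its $(x,y)$-swapped version.

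The key step is therefore to bound $(\widetilde{qw})_n(x,y)$. Recalling from \eqref{eq:qwn} that $(\widetilde{qw})_n(x,y)$ is the reciprocal of $\Ebb{(\widehat{qw}_n)(x,Y^{[n]})^{-1}\mid Y_1=y}$, with $(\widehat{qw}_n)(x,Y^{[n]})=n^{-1}\sum_i w(x,Y_i)$, Jensen's inequality applied to the convex map $t\mapsto 1/t$ gives $(\widetilde{qw})_n(x,y)\leq\Ebb{(\widehat{qw}_n)(x,Y^{[n]})\mid Y_1=y}=n^{-1}\big(w(x,y)+(n-1)(qw)(x)\big)$. Using $w(x,y)\leq|\varpi|_{\infty}(qw)(x)$ from the definition \eqref{eq:impweights} of $\varpi$, this is at most $n^{-1}(qw)(x)\big(|\varpi|_{\infty}+n-1\big)$. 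Finally I would observe that $|\varpi|_{\infty}\geq 1$, since $\int q(x,\dif y)\,\varpi(x,y)=1$ forces the essential supremum of $\varpi$ to be at least $1$; hence $|\varpi|_{\infty}+n-1\leq n\,|\varpi|_{\infty}$, yielding exactly $(\widetilde{qw})_n(x,y)\leq|\varpi|_{\infty}(qw)(x)$.

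The main obstacle is controlling the intractable quantity $(\widetilde{qw})_n(x,y)$ relative to $(qw)(x)$. The Jensen bound is the natural tool, but it is essential to pair it with the elementary fact $|\varpi|_{\infty}\geq 1$; without this the constant would degrade to the $n$-dependent value $(|\varpi|_{\infty}+n-1)/n$ rather than the clean, $n$-independent constant $|\varpi|_{\infty}$. Everything else is bookkeeping with the symmetric forms of the two Metropolis kernels and the appeal to Lemma~\ref{lemma:wpifromdomination}.
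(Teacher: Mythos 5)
Your argument is correct and follows essentially the same route as the paper: off-diagonal domination of $\ideal$ by $\ir{n}$ with ratio $\min\{(qw)(x)/(\widetilde{qw})_n(x,y),\,(qw)(y)/(\widetilde{qw})_n(y,x)\}\geq|\varpi|_\infty^{-1}$, followed by the ``in particular'' clause of Lemma~\ref{lemma:wpifromdomination}. The only cosmetic difference is that you obtain $(\widetilde{qw})_n(x,y)\leq|\varpi|_\infty\,(qw)(x)$ via Jensen's inequality combined with the observation $|\varpi|_\infty\geq 1$, whereas the shorter route is to bound $(\widehat{qw}_n)(x,Y^{[n]})=n^{-1}\sum_i\varpi(x,Y_i)\,(qw)(x)\leq|\varpi|_\infty\,(qw)(x)$ pointwise inside the expectation defining $(\widetilde{qw})_n(x,y)$; both work.
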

\begin{proof}
	See Appendix \ref{app:idealirspi}.
\end{proof}
Such uniform bounds are expected to be difficult to ensure outside of compact state spaces. When the importance weights are not uniformly bounded, but their moments are at least finite, we can still relate their convergence properties via a WPI. To this end, recall that
\begin{equation*}
	M_\varpi (p) := (\pi\otimes q)(\varpi(X,Y)^p).
\end{equation*}
denotes the $p$-th moment of the importance weights.
\begin{proposition}\label{pr:idealirwpi}
	For any $f\in\mathrm{L}^2(\pi)$ and $s>0$,
    \begin{align*}
		&\cal{E}(\ideal,f)
		\leq s\cdot \cal{E}(\ir{n},f) +  \oscnorm{f}^2\cdot \beta_{1,n}(s), \\ &\beta_{1,n}(s):= (\pi\otimes q^w)\bigg(\mathbb{E}\bigg[\frac{n}{\sum_{i=1}^n \varpi(X,Y_i)}\mid Y_1=Y\bigg] < \frac{1}{s} \bigg).
	\end{align*}
    Furthermore, if in addition $M_\varpi(p+1)$ is finite for some $p\in[1,\infty)$, then, for all $s>0$,
	\begin{align}
		\beta_{1,n}(s)\leq s^{-p} \cdot \left\{ K_1 \cdot n^{-1}+ M_\varpi (p) \right\}. \nonumber
	\end{align}
\end{proposition}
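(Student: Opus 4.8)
The plan is to prove both parts by a single application of the off-diagonal ordering criterion of Lemma~\ref{lemma:wpifromdomination} with $P_1=\ideal$ and $P_2=\ir{n}$, so that the entire task reduces to (i) producing a pointwise lower bound on the off-diagonal density ratio $\dif\ir{n}/\dif\ideal$, and (ii) afterwards bounding the resulting tail probability $\beta$ by a moment inequality. The first observation is that $\alphir{n}$ is exactly the Metropolis--Hastings acceptance ratio for the proposal $\widetilde q^w_n$, while $\alphid$ is that for $q^w$; hence both kernels are reversible and, writing densities against a common reference measure, for $x\ne y$ their reversibilized off-diagonal parts take the symmetric form
\[
\pi(x)\ideal(x,y)=\min\{\pi(x)q^w(x,y),\,\pi(y)q^w(y,x)\},
\]
and likewise for $\ir{n}$ with $\widetilde q^w_n$ in place of $q^w$. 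I would then record the identity $\widetilde q^w_n(x,y)=q^w(x,y)\,g_n(x,y)$ where $g_n(x,y):=(qw)(x)/(\widetilde{qw})_n(x,y)$, and verify by a short computation using $w(x,\cdot)=(qw)(x)\,\varpi(x,\cdot)$ that $g_n(x,y)=\mathbb{E}[\,n/\sum_{i=1}^n\varpi(x,Y_i)\mid Y_1=y\,]$, which is precisely the conditional expectation appearing in $\beta_{1,n}$.

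With these in hand, the off-diagonal density ratio is $\min\{a\,g_n(x,y),\,b\,g_n(y,x)\}/\min\{a,b\}$ with $a=\pi(x)q^w(x,y)$ and $b=\pi(y)q^w(y,x)$, and the elementary inequality $\min\{a g_1,b g_2\}\ge\min\{a,b\}\min\{g_1,g_2\}$ shows that $\eta(x,y):=\min\{g_n(x,y),g_n(y,x)\}$ is an admissible domination function. Lemma~\ref{lemma:wpifromdomination} then delivers the first displayed inequality with $\beta(s)=\tfrac12(\pi\otimes\ideal)(A(s)^\complement\cap\{X\ne Y\})$. To recognise this as $\beta_{1,n}$, I would bound the event by a union bound, $\mathbf{1}\{\min\{g_n(x,y),g_n(y,x)\}\le 1/s\}\le\mathbf{1}\{g_n(x,y)\le 1/s\}+\mathbf{1}\{g_n(y,x)\le 1/s\}$, exploit the symmetry of $\min\{\pi(x)q^w(x,y),\pi(y)q^w(y,x)\}$ under swapping $x\leftrightarrow y$ to collapse the two terms into one, and finally drop the minimum via $\min\{a,b\}\le a$. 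This leaves exactly $(\pi\otimes q^w)(g_n(X,Y)\le 1/s)$, which we identify with $\beta_{1,n}(s)$ (the strict-versus-non-strict boundary at $\{g_n=1/s\}$ being a standard technicality that does not affect the stated bound).

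For the moment bound I would start from Markov's inequality: since $\{g_n<1/s\}=\{1/g_n>s\}$, we get $\beta_{1,n}(s)\le s^{-p}\,(\pi\otimes q^w)[(1/g_n)^p]$. The crucial simplification is Jensen's inequality applied to the convex map $z\mapsto 1/z$, giving $g_n(x,y)=n\,\mathbb{E}[1/S\mid Y_1=y]\ge n/\mathbb{E}[S\mid Y_1=y]$ with $S=\sum_i\varpi(x,Y_i)$; since $\mathbb{E}_{Y\sim q(x,\cdot)}[\varpi(x,Y)]=1$ one has $\mathbb{E}[S\mid Y_1=y]=\varpi(x,y)+(n-1)$, whence $1/g_n(x,y)\le\tfrac1n\varpi(x,y)+\tfrac{n-1}{n}$. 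Convexity of $t\mapsto t^p$ for $p\ge 1$ bounds the $p$-th power by $\tfrac1n\varpi^p+\tfrac{n-1}{n}$, and the change of measure $(\pi\otimes q^w)[h]=(\pi\otimes q)[\varpi\,h]$ together with $(\pi\otimes q)[\varpi]=1$ yields $(\pi\otimes q^w)[(1/g_n)^p]\le\tfrac1n M_\varpi(p+1)+\tfrac{n-1}{n}$. Absorbing the constant term through $M_\varpi(p)\ge(\pi\otimes q)[\varpi]^p=1$ then gives the claim with $K_1=M_\varpi(p+1)-1$.

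The main obstacle is the first part: handling the $\min$ of the two kernels' off-diagonal densities and extracting the single-coordinate tail event $\{g_n(X,Y)<1/s\}$ from the symmetric quantity produced by Lemma~\ref{lemma:wpifromdomination}, together with confirming that the conditional-expectation form of $g_n$ genuinely equals $(qw)(x)/(\widetilde{qw})_n(x,y)$. Once $g_n$ is identified, the moment estimate is a routine Markov--Jensen--convexity chain and presents no real difficulty.
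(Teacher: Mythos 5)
Your proof is correct and follows essentially the same route as the paper's: the same domination function $\eta_n(x,y)=\min\{(qw)(x)/(\widetilde{qw}_n)(x,y),\,(qw)(y)/(\widetilde{qw}_n)(y,x)\}$ fed into Lemma~\ref{lemma:wpifromdomination}, the same union-bound/symmetry/$\alphid\le 1$ reduction to a single $(\pi\otimes q^w)$-tail event, and the same Markov--Jensen chain for the moment bound. The only cosmetic differences are that you obtain the domination from the reversibilized $\min$-form of the two Metropolis kernels rather than from the acceptance-ratio factorization $\min\{1,ab\}\ge\min\{1,a\}\cdot\min\{1,b\}$, and that in the moment step you evaluate $\mathbb{E}[\sum_i\varpi(x,Y_i)\mid Y_1=y]=\varpi(x,y)+(n-1)$ before raising to the $p$-th power (the paper applies Jensen termwise after taking the power), which in fact yields a marginally sharper constant of the same form.
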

\begin{proof} 
	See Appendix \ref{app:idealirwpi}. The constant $K_1$ can be traced in the proof.
\end{proof}
\begin{remark}\label{rmk:beta_conv1}
    We see from the dominated convergence theorem and the strong law of large numbers that as $n\to \infty$, for each $s>0$, $\beta_{1,n}(s)\to  \mathbf{1}\{s\le 1\}$, corresponding to the best possible comparison between $\ideal$ and $\ir{n}$, see Remark~\ref{rmk:spi_beta}. 
\end{remark}

\paragraph{Comparison between the semi-ideal and the Multiple-try Metropolis kernels.} 

This comparison captures the loss in performance due to the introduction of the shadow proposals to make the acceptance probability of $\ir{n}$ computable. Because these also appear in the denominator of the acceptance function, we need some control on their distance to zero. In particular, if these are also uniformly lower bounded, we immediately obtain a strong comparison result.
\begin{lemma} \label{lemma:irmtmspi}
	Assume that $|\varpi|_\infty<\infty$ and $|\varpi^{-1}|_\infty<\infty$. Then,
	\begin{equation*}
		\cal{E}(\mtm{n},f) \leq  |\varpi^{-1}|_\infty^2 \cdot |\varpi|_\infty^{2} \cdot \cal{E}(\ir{n},f)
	\end{equation*}
\end{lemma}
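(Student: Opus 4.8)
The plan is to exploit that $\mtm{n}$ and $\ir{n}$ share the \emph{same} proposal kernel $\widetilde{q}^w_n$, so that on the off-diagonal their transition kernels differ only through their acceptance functions. Writing $\rho(x,y):=\frac{\pi(y)\,q(y,x)\,w(y,x)}{\pi(x)\,q(x,y)\,w(x,y)}$ for the common factor, the two acceptance probabilities take the form $\alphir{n}(x,y)=\min\{1,\rho(x,y)\,R^{\mathrm{SI}}(x,y)\}$ and $\alphmtm{n}(x,Y^{[n]},Z^{[n]})=\min\{1,\rho(x,y)\,R^{\mathrm{MTM}}\}$, where $R^{\mathrm{SI}}(x,y):=\frac{(\widetilde{qw}_n)(x,y)}{(\widetilde{qw}_n)(y,x)}$ and $R^{\mathrm{MTM}}:=\frac{(\widehat{qw}_n)(x,Y^{[n]})}{(\widehat{qw}_n)(y,Z^{[n]})}$. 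I would show that $R^{\mathrm{MTM}}\le |\varpi|_\infty^2\,|\varpi^{-1}|_\infty^2\,R^{\mathrm{SI}}(x,y)$ for every realization of the auxiliary variables, which after averaging transfers to the Dirichlet forms.

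First I would record two-sided bounds on the estimators. Dividing $(\widehat{qw}_n)(x,Y^{[n]})$ by $(qw)(x)$ expresses it as the average $n^{-1}\sum_i \varpi(x,Y_i)$, and by hypothesis each term lies in $[\,|\varpi^{-1}|_\infty^{-1},\,|\varpi|_\infty\,]$; this holds for \emph{any} realization and any (conditional) law of the auxiliary variables, because $\varpi$ is uniformly bounded above and below in its second argument---in particular it covers the deterministic shadow entry $Z_I=x$. Hence $(qw)(x)\,|\varpi^{-1}|_\infty^{-1}\le (\widehat{qw}_n)(x,Y^{[n]})\le (qw)(x)\,|\varpi|_\infty$, and similarly for $(\widehat{qw}_n)(y,Z^{[n]})$. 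Since $(\widetilde{qw}_n)(x,y)$ is the reciprocal of a conditional expectation of $(\widehat{qw}_n)(x,Y^{[n]})^{-1}$, a harmonic mean inherits the same envelope, giving $(qw)(x)\,|\varpi^{-1}|_\infty^{-1}\le (\widetilde{qw}_n)(x,y)\le (qw)(x)\,|\varpi|_\infty$.

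Combining the worst cases yields $R^{\mathrm{MTM}}\le \frac{(qw)(x)}{(qw)(y)}\,|\varpi|_\infty\,|\varpi^{-1}|_\infty$ and $R^{\mathrm{SI}}(x,y)\ge \frac{(qw)(x)}{(qw)(y)}\,|\varpi|_\infty^{-1}|\varpi^{-1}|_\infty^{-1}$, so that $R^{\mathrm{MTM}}\le C\,R^{\mathrm{SI}}(x,y)$ with $C:=|\varpi|_\infty^2|\varpi^{-1}|_\infty^2\ge 1$. Using the elementary inequality $\min\{1,Ca\}\le C\min\{1,a\}$ valid for $C\ge1$, this gives the pointwise acceptance bound $\alphmtm{n}(x,Y^{[n]},Z^{[n]})\le C\,\alphir{n}(x,y)$. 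Conditioning on the effective proposal being $y$ (whose marginal law is $\widetilde{q}^w_n(x,\dif y)$ for both chains) and averaging the remaining auxiliary randomness then yields the off-diagonal domination $\mtm{n}(x,\dif y)\le C\,\ir{n}(x,\dif y)$ for $y\ne x$. Inserting this into $\cal{E}(P,f)=\frac12\int\pi(\dif x)\,P(x,\dif y)\,(f(x)-f(y))^2$---where only $y\ne x$ contributes---immediately gives $\cal{E}(\mtm{n},f)\le C\,\cal{E}(\ir{n},f)$; alternatively one invokes Lemma~\ref{lemma:wpifromdomination} with $P_1=\mtm{n}$, $P_2=\ir{n}$ and $\eta\equiv C^{-1}$.

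The hard part will be the careful bookkeeping of the auxiliary variables rather than any analytic difficulty: one must pin down the conditional law of $(Y^{[n]},Z^{[n]})$ given that the effective proposal equals $y$, confirm that setting $Z_I=x$ deterministically does not spoil the uniform weight bounds, and verify that the conditional harmonic mean defining $(\widetilde{qw}_n)$ genuinely preserves the two-sided envelope. Once the pointwise acceptance inequality is secured, the passage to the Dirichlet forms is routine.
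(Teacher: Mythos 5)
Your argument is correct as a proof of the inequality as literally printed, and the key steps check out: the two-sided envelope $(qw)(x)\cdot|\varpi^{-1}|_\infty^{-1}\le(\widehat{qw}_n)(x,Y^{[n]})\le(qw)(x)\cdot|\varpi|_\infty$ holds for every realization (including the deterministic entries $Y_1=y$, $Z_1=x$), it is inherited by the harmonic-mean quantity $(\widetilde{qw})_n$, and since your bound $\alphmtm{n}\le C\cdot\alphir{n}$ with $C:=|\varpi|_\infty^2|\varpi^{-1}|_\infty^2\ge1$ is uniform over the auxiliary variables, the tilting factor $(\widetilde{qw})_n(x,y)/(\widehat{qw}_n)(x,Y^{[n]})$ hidden in the conditional law of $(Y^{[n]},Z^{[n]})$ given the selected proposal integrates to one and causes no trouble in the averaging step.

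Be aware, however, that you have established the \emph{converse} domination to the one the paper's own proof (Appendix~\ref{app:irmtmspi}) delivers. There, the off-diagonal \emph{lower} bound $\mtm{n}(x,\dif y)\ge\zeta_n(x,y)\cdot\ir{n}(x,\dif y)$ with $\zeta_n\ge C^{-1}$ is obtained by factoring the acceptance probability via $\min\{1,ab\}\ge\min\{1,a\}\cdot\min\{1,b\}$ and lower-bounding the resulting minimum of ratios; Lemma~\ref{lemma:wpifromdomination} then gives $\cal{E}(\ir{n},f)\le C\cdot\cal{E}(\mtm{n},f)$, which is the direction needed for the chaining in Theorem~\ref{thm:mainwpi} and is the SPI counterpart of Proposition~\ref{lemma:irmtmwpi}. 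The printed statement of Lemma~\ref{lemma:irmtmspi} thus appears to have the two Dirichlet forms transposed relative to its proof and its downstream use. The good news is that your two-sided envelope yields both directions essentially for free: lower-bounding $R^{\mathrm{MTM}}$ and upper-bounding $R^{\mathrm{SI}}$ (rather than the reverse) gives $\alphmtm{n}\ge C^{-1}\alphir{n}$ pointwise and hence the inequality the paper actually requires, so you should record that direction as well, or instead.
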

\begin{proof}
	See Appendix \ref{app:irmtmspi}.
\end{proof}
These uniform lower boundedness conditions can be relaxed via a WPI approach via an appropriate control on the negative moments of the importance weights.
\begin{proposition} \label{lemma:irmtmwpi}
    For any $f\in\mathrm{L}^2(\pi)$ and $s>0$,
    \begin{align*}
        &\cal{E}(\ir{n},f) 
		\leq s\cdot \cal{E}(\mtm{n},f) + \oscnorm{f}^2\cdot \beta_{2,n}(s), \\ 
        &\beta_{2,n}(s):=(\pi\otimes \tilde{P}_n)(\zeta_n(X,Y)<1/s), \\
        &\zeta_n(x,y):=\Ebb{ \min\left(\frac{(\widetilde{qw})_n(x,y)}{(\widehat{qw}_n)(x,Y^{[n]})}, \frac{(\widetilde{qw})_n(y,x)}{(\widehat{qw}_n)(y,Z^{[n]})} \right) \mid  Y_1=y,Z_1=x }
    \end{align*}
	Furthermore, if in addition $M_\varpi(2p)$ and $M_\varpi(-2p)$ are finite for some $p\in[1,\infty)$, then, for all $s>0$,
	\begin{align*}
		\beta_{2,n}(s) \leq s^{-p}\cdot 2^{p+1}\cdot  \left\{K_2 \cdot n^{-1}+K_3\cdot n^{-2}+  M_\varpi(2p) +M_\varpi(-2p)\right\}. 
	\end{align*}
\end{proposition}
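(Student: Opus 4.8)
The plan is to apply the off-diagonal ordering criterion of Lemma~\ref{lemma:wpifromdomination} with $P_1=\ir{n}$ and $P_2=\mtm{n}$, so that the $\beta$-function arising there is exactly the stated $\beta_{2,n}(s)$. Since both kernels share the \emph{same} proposal $\widetilde{q}^w_n$ (see Table~\ref{table:kernels}), their only difference lies in the acceptance probabilities: $\ir{n}$ uses $\alphir{n}(x,y)$ built from the deterministic ratios $(\widetilde{qw})_n(x,y)/(\widetilde{qw})_n(y,x)$, while $\mtm{n}$ uses $\alphmtm{n}$ built from the random empirical averages $(\widehat{qw}_n)(x,Y^{[n]})$ and $(\widehat{qw}_n)(y,Z^{[n]})$. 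First I would write the off-diagonal part of each kernel as $\widetilde{q}^w_n(x,\dif y)$ times the respective (possibly auxiliary-averaged) acceptance probability, and identify the domination factor $\eta(x,y)$ as the ratio of these acceptance probabilities; after integrating out the auxiliary variables $Y^{[n]},Z^{[n]}$ conditioned on $Y_1=y,Z_1=x$, this ratio is bounded below by $\zeta_n(x,y)$. This justifies the exact expression for $\beta_{2,n}(s)$.

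The substantive part is the moment bound. Here I would proceed by a Markov-inequality argument: $\beta_{2,n}(s)=(\pi\otimes\tilde P_n)(\zeta_n<1/s)\le s^{p}\cdot(\pi\otimes\tilde P_n)(\zeta_n^{-p})$ for any $p$, so it suffices to control the $p$-th negative moment of $\zeta_n$. Using the elementary bound $\min(a,b)^{-1}\le a^{-1}+b^{-1}$ and convexity/Jensen to move the conditional expectation outside the power, I would reduce to bounding $\Eb[\,((\widehat{qw}_n)(x,Y^{[n]})/(\widetilde{qw})_n(x,y))^{p}\,]$ and the symmetric term. Rewriting $(\widehat{qw}_n)$ in terms of the normalized importance weights $\varpi$, the empirical average $n^{-1}\sum_i\varpi(x,Y_i)$ concentrates around its mean $1$, and its $p$-th moment expands — via a multinomial/binomial expansion that isolates the diagonal terms contributing the $M_\varpi(2p)$ and $M_\varpi(-2p)$ pieces and the off-diagonal cross terms contributing the $K_2 n^{-1}+K_3 n^{-2}$ corrections — after a change of measure from $\pi\otimes\tilde P_n$ back to $\pi\otimes q$. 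The factor $2^{p+1}$ is the price of splitting the $\min^{-1}$ into two symmetric contributions and then separately bounding numerator and denominator fluctuations.

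The main obstacle I anticipate is the careful bookkeeping in this last expansion: one must track how the change of measure from the resampled proposal $\tilde P_n$ to the base proposal $q$ interacts with the conditioning $Y_1=y,Z_1=x$, and verify that the leading-order terms in $n$ reproduce exactly $M_\varpi(2p)+M_\varpi(-2p)$ while the remaining cross terms, summed over the $\binom{n}{\cdot}$ combinatorial choices, collapse into the two correction constants $K_2,K_3$ with explicit dependence on $M_\varpi$ at orders up to $2p$. Establishing the finiteness of these constants under the hypothesis that $M_\varpi(2p)$ and $M_\varpi(-2p)$ are finite, and ensuring the Jensen step does not require higher moments than assumed, is the delicate point; I would defer the full combinatorial computation to the appendix, presenting here only the reduction and the structure of the expansion.
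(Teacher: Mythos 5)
Your overall strategy coincides with the paper's: establish the off-diagonal domination $\mtm{n}(x,\dif y)\ge \zeta_n(x,y)\,\ir{n}(x,\dif y)$ (obtained by factoring the Multiple-try acceptance via $\min\{1,ab\}\ge\min\{1,a\}\min\{1,b\}$ and integrating out the auxiliary variables), apply Lemma~\ref{lemma:wpifromdomination}, then Markov's inequality, $\min(a,b)^{-1}\le a^{-1}+b^{-1}$ with Jensen, and finally a change of measure from $\pi\otimes\ir{n}$ back to $\pi\otimes q$. Two points in your execution of the moment bound are genuinely problematic, however.

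First, the ``multinomial/binomial expansion'' of the $p$-th moment of $n^{-1}\sum_i\varpi(x,Y_i)$ is only available for integer $p$, whereas the statement is for $p\in[1,\infty)$. The paper avoids combinatorics entirely via Jensen's inequality, $\bigl(n^{-1}\sum_i a_i\bigr)^{p}\le n^{-1}\sum_i a_i^{p}$ (and, for the denominator, first $\bigl(n^{-1}\sum_i w_i\bigr)^{-1}\le n^{-1}\sum_i w_i^{-1}$), which yields directly the decomposition $\tfrac1n\varpi(x,y)^{\pm p}+\tfrac{n-1}{n}\int\varpi(x,y')^{\pm p}q(x,\dif y')$. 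Second, and more substantively, your sketch does not account for why moments of order $2p$ appear: after the reduction one must bound the \emph{product} $\Eb\bigl[(\widehat{qw}_n)(x,Y^{[n]})^p\bigr]\cdot(\widetilde{qw})_n(x,y)^{-p}$, a positive-moment and a negative-moment factor evaluated at the same $(x,y)$, and these must be decoupled. The paper does this with Young's inequality $ab\le a^2/2+b^2/2$ together with $(a+b)^2\le 2a^2+2b^2$, and it is precisely this step that doubles the exponent from $p$ to $2p$ and produces $M_\varpi(2p)+M_\varpi(-2p)$; without it your argument would appear to need only $M_\varpi(\pm p)$, which is not what is claimed. Relatedly, the $K_2\cdot n^{-1}$ term is not an off-diagonal cross term of an expansion: it arises from the change of measure $(\pi\otimes\ir{n})(\varpi(X,Y)^{\pm 2p})\le n\cdot(\pi\otimes q)(\varpi(X,Y)^{\pm 2p})+\cdots$ applied to the $n^{-2}$-weighted diagonal contributions. (Minor: Markov's inequality gives the prefactor $s^{-p}$, not $s^{p}$ as written.)
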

\begin{proof}
	See Appendix \ref{app:irmtmwpi}. The constants $K_2,K_3$ can be traced in the proof.
\end{proof}
\begin{remark}
    Similar to Remark~\ref{rmk:beta_conv1}, we have that $\beta_{2,n}(s)\to \mathbf{1}\{s\le 1\}$ as $n\to \infty$, which is again the best possible comparison result.
\end{remark}

\paragraph{Comparison between Multiple-try and the ideal kernel.} 

Chaining Proposition \ref{pr:idealirwpi} and \ref{lemma:irmtmwpi} via Lemma \ref{lemma:wpichaining} returns the following comparison result between Multiple-try and the ideal Metropolis kernel.
\begin{proposition} \label{pr:mainwpipre}
For any $f\in\mathrm{L}^2(\pi)$ and any $s>0$, there holds
\begin{align*}
    &\cal{E}(\ideal,f)\leq s \cdot \cal{E}(\mtm{n},f) + \oscnorm{f}^2\cdot \beta_n(s), \\
    &\beta_n(s)=\inf\{\beta_{1,n}(s_1)+s_1\beta_{2,n}(s_2) | s_1>0, s_2>0, s_1s_2 = s\}
\end{align*}
\end{proposition}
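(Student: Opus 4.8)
The plan is to read this off as a direct application of the chaining lemma (Lemma~\ref{lemma:wpichaining}) to the triple $P_1 = \ideal$, $P_2 = \ir{n}$, $P_3 = \mtm{n}$. The two hypotheses demanded by that lemma are exactly the WPI comparisons already established: Proposition~\ref{pr:idealirwpi} supplies
$$\cal{E}(\ideal,f) \leq s\cdot \cal{E}(\ir{n},f) + \beta_{1,n}(s)\cdot\oscnorm{f}^2,$$
which matches the first hypothesis with $\beta_1=\beta_{1,n}$, and Proposition~\ref{lemma:irmtmwpi} supplies
$$\cal{E}(\ir{n},f) \leq s\cdot \cal{E}(\mtm{n},f) + \beta_{2,n}(s)\cdot\oscnorm{f}^2,$$
which matches the second with $\beta_2=\beta_{2,n}$. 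Feeding these into Lemma~\ref{lemma:wpichaining} returns precisely
$$\cal{E}(\ideal,f) \leq s\cdot \cal{E}(\mtm{n},f) + \beta_n(s)\cdot\oscnorm{f}^2,\qquad \beta_n(s)=\inf\{\beta_{1,n}(s_1)+s_1\beta_{2,n}(s_2):s_1,s_2>0,\ s_1s_2=s\},$$
which is the assertion.

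The one-line derivation underlying the chaining step, should one wish to reproduce it rather than cite the lemma, is as follows: for any factorization $s=s_1s_2$ with $s_1,s_2>0$, substitute the second inequality (at parameter $s_2$) into the first (at parameter $s_1$), giving $\cal{E}(\ideal,f)\le s_1s_2\cdot\cal{E}(\mtm{n},f)+(\beta_{1,n}(s_1)+s_1\beta_{2,n}(s_2))\cdot\oscnorm{f}^2$, and then optimize over all such factorizations. Since both input WPIs are asserted for every $f\in\mathrm{L}^2(\pi)$ rather than only for mean-zero $f$, this manipulation is valid on all of $\mathrm{L}^2(\pi)$, and no separate domain-extension argument is required. (Were the inputs stated only on $\mathrm{L}^2_0(\pi)$, extension would still be immediate, since both $\cal{E}(P,\cdot)$ and $\oscnorm{\cdot}$ are invariant under adding a constant to $f$.)

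There is no genuine obstacle in this proposition: all the analytic content — the moment estimates controlling $\beta_{1,n}$ and $\beta_{2,n}$ — has already been carried out in the two cited propositions, and this statement is purely their algebraic composition. The only point deserving a moment's care is orientation: one must check that in each input inequality the scalar $s$ multiplies the Dirichlet form of the \emph{second} kernel in the pair (namely $\ir{n}$ in the first comparison and $\mtm{n}$ in the second), so that the two comparisons compose in the correct direction through the intermediate kernel $\ir{n}$; a glance at the statements confirms this is the case.
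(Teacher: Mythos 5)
Your proof is correct and is exactly the paper's argument: the paper likewise obtains Proposition~\ref{pr:mainwpipre} by chaining Proposition~\ref{pr:idealirwpi} and Proposition~\ref{lemma:irmtmwpi} through the intermediate kernel $\ir{n}$ via Lemma~\ref{lemma:wpichaining}. Your added remarks on the substitution underlying the chaining step and on the $\mathrm{L}^2(\pi)$ versus $\mathrm{L}_0^2(\pi)$ domain issue are sound but not needed beyond what the paper already records.
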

In particular,  as $n\rightarrow \infty$, we have $\beta_n(s)\to \mathbf{1}\{s\le 1\}$. Therefore, in this sense, in the limit of multiple trials, the convergence rate of Multiple-try is `close' to that of the ideal kernel. In fact, with an argument somewhat similar to that in \cite[pp 25-26]{alpw2022b}, if $\ideal$ satisfies a SPI with constant $c_{\ideal}$, again Lemma \ref{lemma:wpichaining} shows
\begin{equation*}
   \norm{f }_{2,\pi}^2 \leq s\cdot \cal{E}(P_n,f) + \oscnorm{f}^2\cdot \bigg\{\frac{s}{1+\epsilon}\beta_n(1+\epsilon)+\mathbf{1}\{ s\leq (1+\epsilon)c_{\ideal}^{-1}\}\bigg\}
\end{equation*}
for any $\epsilon>0$, in particular,
\begin{equation*}
    \frac{s}{1+\epsilon}\beta_n(1+\epsilon)+\mathbf{1}\{ s\leq (1+\epsilon)c_{\ideal}^{-1}\} \to \mathbf{1}\{ s\leq (1+\epsilon)c_{\ideal}^{-1}\}
\end{equation*}
so that for large $n$, Multiple-try can be seen as an approximation of $\ideal$ (see Remark \ref{rmk:spi_beta}). This holds for any choice of weight function, any target $\pi$ and any choice of proposal $q$. When the importance weights possess suitable moments, one can bound $\beta_n(s)$ via the corresponding bounds for $\beta_{1,n}(s)$ and $\beta_{2,n}(s)$ in Propositions \ref{pr:idealirwpi}, \ref{lemma:irmtmwpi} and prove Theorems \ref{thm:mainwpi} and \ref{thm:mtmconvbound}.
\begin{remark} \label{remark:gagnonweakconv}
	Theorem 1 in \cite{gagnon2023} assumes fourth moment conditions on the weights $w$ to prove weak convergence of the Multiple-try Metropolis Markov chain to the ideal chain when starting from stationarity. We argued that Multiple-try achieves a convergence rate close to that of the ideal chain as $n\rightarrow \infty$ under general conditions. Under moment conditions on the importance weights, Theorem \ref{thm:mainwpi} gives a non-asymptotic comparison result. The moment conditions in \cite{gagnon2023} can be shown to imply ours on the importance weights via a Cauchy--Schwarz inequality argument.
\end{remark} 
\section{Convergence analysis of the ideal Markov chain} \label{sec:idealanalysis}
In this section we analyse the convergence properties of $\ideal$ (and thus of $\mtm{n}$ via Theorem \ref{thm:mainwpi}). We consider the case when $q(x,\dif y)=\cal{N}(\dif y;x,I_d\cdot\sigma^2)$ is a Gaussian Random Walk, which is a common choice in the implementation of Multiple-try schemes. We consider two choices of weight function.

(i) The \textit{globally balanced} weights $w(x,y)=\pi(y)/\pi(x)$. This choice is commonly used in Multiple-try algorithms used in practice \cite{martino2017,martino2018,gagnon2023}. The rationale behind this choice is that one should give more weight to points that are in high-probability areas of $\pi$. However, as we illustrate in Section \ref{sec:gbanalysis},  strengthening some results in \cite{gagnon2023}, this choice the algorithm is excessively biased towards high-probability regions of $\pi$, resulting in a Markov chain whose spectral gap that vanishes as the number of proposals grow (Proposition \ref{prop:spectralgapgbmtm}).

(ii) The \textit{locally balanced} weights $w(x,y)=\sqrt{\pi(y)/\pi(x)}$ were recently introduced in \cite{gagnon2023,chang2022}.
The rationale behind this choice, inspired by the work of \cite{zanella2020}, is that the resulting proposal kernel $q^w$ becomes $\pi$-invariant as $\sigma\rightarrow 0$, resulting in a Markov chain with higher expected acceptance probability in the low $\sigma$ scenario. In higher dimensions, where one usually has to take near-zero scale parameter $\sigma$, this choice is thus expected to return a better mixing Markov chain. With globally balanced weights, $q^w$ leaves $\pi$ invariant in the case $\sigma\rightarrow \infty$, while in the small $\sigma$ regime it is only invariant for $\pi^2$, which might look quite different from $\pi$ in high-dimensions \cite{zanella2020,gagnon2023}. The better scaling of the locally balanced choice is shown in \cite{gagnon2023} via an optimal scaling analysis, and we provide further support to this claim by analyzing the scaling of the spectral gap of $\ideal$ with the dimension.
We develop our analysis for Gaussian targets, similarly to \cite{gagnon2023}. While this is certainly restrictive, we note that if $\pi$ is a Bayesian posterior distribution, then under mild regularity conditions, the Bernstein-von Mises theorem implies that in a suitably data-rich limit, $\pi$ can be expected to admit suitably Gaussian-like behaviour, providing some heuristic support for beginning our exploration in this rather idealised setting.
\begin{assumption} \label{hp:pigaussian}
    $\pi$ has a standard Gaussian density: $\pi(\dif x)=\cal{N}(\dif x;0,I_{d})$.
\end{assumption}
Under the Gaussian assumption on the target, for a generic weight function of the form $w(x,y)=(\pi(y)/\pi(x))^\theta$ for $\theta\ge 0$ (which includes the cases above), the proposal kernel $q^w$ defines Gaussian transitions:
\begin{equation} \label{eq:qwisgaussian}
	q^w(x,\dif y) = \cal{N}\bigg(\dif y;x\cdot \frac{1}{1+\theta \sigma^2};I_d \cdot\frac{\sigma^2}{1+\theta \sigma^2}\bigg).
\end{equation}
The resulting ideal Metropolis kernel $\ideal$ is always positive:
\begin{lemma} \label{lemma:positive}
    $\ideal$ is a positive Markov kernel: it is reversible and $\iprod{\ideal f}{f}_{2,\pi}\geq 0$ for all $f\in\mathrm{L}^2(\pi)$.
\end{lemma}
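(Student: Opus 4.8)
The plan is to verify the two defining properties of positivity separately. Reversibility of $\ideal$ is immediate, since $\ideal$ is by construction the Metropolis--Hastings kernel with target $\pi$ and proposal $q^w$, so the measure $\pi(\dif x)\,\ideal(x,\dif y)$ is symmetric. The substance is the inequality $\iprod{\ideal f}{f}_{2,\pi}\ge 0$. Here I would first decompose $\ideal$ into its accept and reject parts, writing $\iprod{\ideal f}{f}_{2,\pi} = \iint \pi(\dif x)\,q^w(x,\dif y)\,\alphid(x,y) f(x)f(y) + \int \pi(\dif x)\,(1-A(x))\,f(x)^2$, where $A(x):=\int q^w(x,\dif y)\,\alphid(x,y)\le 1$ is the acceptance rate. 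The rejection term is a nonnegative diagonal contribution, so it suffices to show that the acceptance sub-kernel $\tilde Q(x,\dif y):=\alphid(x,y)\,q^w(x,\dif y)$ induces a positive semi-definite operator; equivalently, that the symmetric density $s(x,y):=\pi(x)\,q^w(x,y)\,\alphid(x,y)$ (with respect to Lebesgue measure, both factors being Gaussian) is a positive semi-definite kernel on $\Xspace\times\Xspace$.

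Second, I would make the acceptance structure explicit. Under Assumption~\ref{hp:pigaussian} and the weight $w(x,y)=(\pi(y)/\pi(x))^\theta$, the proposal $q^w$ is the Gaussian in \eqref{eq:qwisgaussian}, say $q^w(x,\cdot)=\cal{N}(\rho x,\tau^2 I_d)$ with $\rho=1/(1+\theta\sigma^2)>0$. A direct computation of the ratio $\pi(y)q^w(y,x)/[\pi(x)q^w(x,y)]$ collapses to $\exp\!\big(\tfrac a2(|x|^2-|y|^2)\big)$ for an explicit constant $a$, so that $\alphid(x,y)=\min\{1,e^{h(x)-h(y)}\}$ with $h(x)=\tfrac a2|x|^2$. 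Using the identity $\min\{1,e^{h(x)-h(y)}\}=e^{\min(h(x),h(y))-h(y)}$ together with the symmetry relation $\pi(x)q^w(x,y)e^{h(x)}=\pi(y)q^w(y,x)e^{h(y)}$, I would rewrite $s(x,y)=\hat\gamma(x,y)\cdot e^{-\max(h(x),h(y))}$, where $\hat\gamma(x,y):=\pi(x)q^w(x,y)e^{h(x)}$ is symmetric and, after the coefficient cancellation forced by this symmetry, reduces to the Gaussian form $\hat\gamma(x,y)\propto \exp\!\big(-\tfrac{1}{2\tau^2}(|x|^2-2\rho\langle x,y\rangle+|y|^2)\big)$.

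Finally I would argue that $s$ is positive semi-definite by exhibiting each factor as such and invoking the Schur product theorem. For the Gaussian factor, write $\hat\gamma(x,y)\propto u(x)\,u(y)\,\exp\!\big(\tfrac{\rho}{\tau^2}\langle x,y\rangle\big)$ with $u(x)=e^{-|x|^2/(2\tau^2)}$; the rank-one part $u(x)u(y)$ is trivially positive semi-definite, while $\exp\!\big(\tfrac{\rho}{\tau^2}\langle x,y\rangle\big)=\sum_{k\ge0}\tfrac{(\rho/\tau^2)^k}{k!}\langle x,y\rangle^k$ is a nonnegative combination of the positive semi-definite kernels $\langle x,y\rangle^k$, the nonnegativity of the coefficients holding precisely because $\rho>0$. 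For the second factor, $e^{-\max(h(x),h(y))}=\min(e^{-h(x)},e^{-h(y)})$, and $\min(g(x),g(y))=\int_0^\infty \mathbf{1}\{g(x)\ge t\}\,\mathbf{1}\{g(y)\ge t\}\,\dif t$ is positive semi-definite for any nonnegative $g$, here $g=e^{-h}$, regardless of the sign of $a$. The Hadamard product of these two positive semi-definite kernels is positive semi-definite, hence so is $s$; this gives $\iprod{\tilde Q f}{f}_{2,\pi}\ge 0$ and, together with the nonnegative rejection term, $\iprod{\ideal f}{f}_{2,\pi}\ge 0$. I expect the main obstacle to be the non-smooth acceptance probability: the $\min/\max$ precludes a purely Gaussian computation, and the key move is to isolate it into the factor $\min(e^{-h(x)},e^{-h(y)})$ and certify its positive semi-definiteness via the level-set representation, so that the Schur product theorem can combine it with the Gaussian factor $\hat\gamma$.
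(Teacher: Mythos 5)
Your proof is correct, but it takes a genuinely different route from the paper's. The paper's argument is modular: it checks that the proposal $q^w$ is itself reversible with respect to an explicit Gaussian $\nu$ and factors as a two-step autoregressive transition, hence is a positive operator, and then invokes a general criterion (\cite[Proposition~3]{Doucet2015}) asserting that a Metropolis--Hastings kernel built from a positive proposal is positive. You instead prove positivity from scratch: after splitting off the manifestly nonnegative rejection term, you certify the acceptance sub-kernel $\pi(x)q^w(x,y)\alphid(x,y)$ as positive semi-definite by factoring it into the symmetric Gaussian kernel $\hat\gamma(x,y)\propto u(x)u(y)\exp(\rho\tau^{-2}\langle x,y\rangle)$ (psd via the power-series expansion, using $\rho>0$) times the min-kernel $\min(e^{-h(x)},e^{-h(y)})$ (psd via the layer-cake representation), and combining them with the Schur product theorem; your algebra for $h$ and $\hat\gamma$ checks out against \eqref{eq:qwisgaussian} and the paper's expression for $\psi$. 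What each buys: the paper's route is shorter and conceptually cleaner, outsourcing the analytic work to a known result, whereas yours is self-contained and makes transparent exactly where positivity comes from (the positive autocorrelation $\rho>0$ of $q^w$ and the $e^{h(x)-h(y)}$ structure of the acceptance ratio — the latter being available precisely because $q^w$ is reversible, so the two proofs are ultimately exploiting the same structure). The only point I would tighten is the passage from positive semi-definiteness in the finite-sum sense to $\iint f(x)s(x,y)f(y)\,\dif x\,\dif y\ge 0$ for $f\in\mathrm{L}^2(\pi)$: either note that your feature-map decompositions give this directly as an integral of squares (with a Fubini justification), or argue by density/boundedness of the operator. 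This is routine but worth a sentence.
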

\begin{proof}
    Reversibility is clear since it is a Metropolis--Hastings chain. Let $\rho:=1/(1+\theta\sigma^2)$, $\gamma:=\sigma^2/(1+\theta\sigma^2)$. We can check that $q^w$ is reversible with respect to $\nu(\dif x)=\cal{N}(\dif x;0,I_d\cdot\gamma/(1-\rho^2))$, and the result follows by \cite[Proposition 3]{Doucet2015} upon identifying, in their notation, $\chi=\nu$ and $r(u,v)=\cal{N}(u;\rho^{1/2}v,\gamma(1-\rho))$.
\end{proof}
\subsection{Globally balanced weights} \label{sec:gbanalysis}
Here, we use a conductance approach to show that the spectral gap of $\mtm{n}$ with globally balanced weights vanishes as the number of proposals $n$ increases.
\begin{definition}[Conductance]
    The conductance of a $\pi$-invariant Markov kernel $P$ is 
    \begin{equation} \label{eq:conductance}
	   \Phi(P):=\inf \bigg\{ \frac{(\pi\otimes P)(A\times A^c)}{\pi(A)}:A\in \cal{F},\pi(A)\leq 1/2 \bigg\}.
	\end{equation}
\end{definition}
\begin{lemma}[Cheeger's inequalities]\label{lemma:cheeger} 
    For a $\pi$-reversible positive Markov kernel $P$, it holds that
	\begin{equation*}
	   2^{-1}\Phi(P)^2 \leq \gamma(P) \leq 	\Phi(P)
	\end{equation*}
	where $\gamma(P)$ denotes the spectral gap of $P$.
\end{lemma}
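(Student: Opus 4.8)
The plan is to establish the two bounds separately, working throughout from the variational characterisation of the gap: by reversibility and the spectral theorem, $\gamma(P)=\inf\{\cal{E}(P,f)/\norm{f}_{2,\pi}^2 : f\in\mathrm{L}_0^2(\pi),\ f\neq 0\}$. For the upper bound I would simply test this infimum against centred indicators. Given $A\in\mathcal{F}$ with $\pi(A)\le 1/2$, take $f=\mathbf{1}_A-\pi(A)\in\mathrm{L}_0^2(\pi)$. The difference representation of the Dirichlet form together with reversibility gives $\cal{E}(P,f)=\cal{E}(P,\mathbf{1}_A)=(\pi\otimes P)(A\times A^c)$, while $\norm{f}_{2,\pi}^2=\pi(A)(1-\pi(A))\ge \pi(A)/2$ since $\pi(A)\le 1/2$. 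Substituting into the variational bound and taking the infimum over all admissible $A$ controls $\gamma(P)$ by $\Phi(P)$.

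For the lower bound $\gamma(P)\ge \tfrac12\Phi(P)^2$ --- the substantive direction --- I would use the co-area (layer-cake) method, in two steps. Step one reduces the problem to a nonnegative function supported on a set of $\pi$-measure $\le 1/2$. Let $\varphi\in\mathrm{L}_0^2(\pi)$ attain the gap; as $\varphi$ has mean zero, at least one of $\pi(\{\varphi>0\})$ and $\pi(\{\varphi<0\})$ is $\le 1/2$, so after possibly replacing $\varphi$ by $-\varphi$ we may set $g=\max(\varphi,0)$, which satisfies $g\ge 0$ and $\pi(\{g>0\})\le 1/2$. Since $g\ge \varphi$ pointwise we have $Pg\ge P\varphi$, and on $\{g>0\}$, where $g=\varphi$, the eigen-relation $(\mathsf{Id}-P)\varphi=\gamma(P)\varphi$ yields the pointwise bound $(\mathsf{Id}-P)g\le \gamma(P)\,g$; integrating against $g\ge 0$ gives $\cal{E}(P,g)\le \gamma(P)\norm{g}_{2,\pi}^2$.

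Step two bounds the Rayleigh quotient of $g$ from below by $\tfrac12\Phi(P)^2$. Applying the layer-cake identity to $g^2$ and using reversibility,
\begin{equation*}
\int \pi(\dif x)P(x,\dif y)\,\mmag{g(x)^2-g(y)^2} = 2\int_0^\infty (\pi\otimes P)\big(\{g^2>t\}\times\{g^2>t\}^c\big)\,\dif t \ge 2\,\Phi(P)\,\norm{g}_{2,\pi}^2,
\end{equation*}
where the inequality uses that each level set $\{g^2>t\}$ lies in the support of $g$ and hence has $\pi$-measure $\le 1/2$, so that the definition of $\Phi(P)$ applies termwise. On the other hand, factoring $g(x)^2-g(y)^2=(g(x)-g(y))(g(x)+g(y))$ and applying Cauchy--Schwarz, together with stationarity to estimate $\int\pi(\dif x)P(x,\dif y)(g(x)+g(y))^2\le 4\norm{g}_{2,\pi}^2$, bounds the left-hand side above by $\sqrt{2\,\cal{E}(P,g)}\cdot 2\norm{g}_{2,\pi}$. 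Comparing the two estimates and cancelling one power of $\norm{g}_{2,\pi}$ gives $\Phi(P)^2\norm{g}_{2,\pi}^2\le 2\,\cal{E}(P,g)$; chaining with the bound from step one yields $\tfrac12\Phi(P)^2\le \cal{E}(P,g)/\norm{g}_{2,\pi}^2\le \gamma(P)$.

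I expect the co-area step combined with the Cauchy--Schwarz estimate to be the crux: this is where the square in $\Phi(P)^2$ is produced, and where the level sets of $g^2$ must be carefully matched against the conductance. A secondary technical point is that on a non-compact, continuous state space the gap need not be attained by a genuine eigenfunction; the reduction in step one should then be carried out along a minimising sequence $\varphi_k$ with $\cal{E}(P,\varphi_k)/\norm{\varphi_k}_{2,\pi}^2\to\gamma(P)$, pushing the positive-part and pointwise generator argument through to the limit.
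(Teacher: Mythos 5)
The paper itself gives no proof of this lemma --- it is quoted as a standard result in the Cheeger/Jerrum--Sinclair/Lawler--Sokal circle of ideas --- so your attempt has to be judged on its own terms. Your lower bound is essentially the standard argument and is correct in substance: step two (the layer-cake decomposition of $\mmag{g(x)^2-g(y)^2}$, the observation that every level set of $g^2$ sits inside $\{g>0\}$ and hence has $\pi$-measure at most $1/2$, and the Cauchy--Schwarz step that produces the square) is exactly right, and your constants check out. The genuine gap is in step one: on a general state space the infimum defining $\gamma(P)$ need not be attained, and your proposed repair via a minimising sequence does not go through as stated, because an approximate minimiser of the Rayleigh quotient does not satisfy an approximate \emph{pointwise} eigen-relation $(\mathsf{Id}-P)\varphi\le(\gamma+\epsilon)\varphi$ on $\{\varphi>0\}$ --- that implication is special to exact eigenfunctions. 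The standard fix is to dispense with eigenfunctions altogether, and your step two already contains everything needed: given an arbitrary $f\in\mathrm{L}_0^2(\pi)$, centre it at a $\pi$-median $c$, set $g=(f-c)_+$ and $h=(f-c)_-$, note that both are supported on sets of measure at most $1/2$, that $\cal{E}(P,f)\ge\cal{E}(P,g)+\cal{E}(P,h)$ (since $(u-v)^2\ge(u_+-v_+)^2+(u_--v_-)^2$ pointwise), and that $\norm{f}_{2,\pi}^2\le\pi((f-c)^2)=\pi(g^2)+\pi(h^2)$; applying your step-two inequality $\cal{E}(P,\cdot)\ge\tfrac12\Phi(P)^2\,\pi((\cdot)^2)$ to $g$ and $h$ separately then yields $\gamma(P)\ge\tfrac12\Phi(P)^2$ with no spectral theory at all.

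Your upper bound, by contrast, does not deliver the stated constant. With $f=\mathbf{1}_A-\pi(A)$ you correctly get $\cal{E}(P,f)=(\pi\otimes P)(A\times A^c)$ and $\norm{f}_{2,\pi}^2=\pi(A)(1-\pi(A))$, but substituting $\pi(A)(1-\pi(A))\ge\pi(A)/2$ into the variational formula gives $\gamma(P)\le 2\,(\pi\otimes P)(A\times A^c)/\pi(A)$, i.e.\ $\gamma(P)\le 2\,\Phi(P)$, not $\gamma(P)\le\Phi(P)$. This factor of two is not an artefact of your estimate: for the i.i.d.\ kernel $P(x,\cdot)=\pi(\cdot)$ one has $\gamma(P)=1$ while, with the normalisation in \eqref{eq:conductance}, $\Phi(P)=\inf_{\pi(A)\le 1/2}\pi(A^c)=1/2$, so $\gamma(P)=2\Phi(P)$ and no argument can recover the constant $1$. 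What your test function actually proves is the sharp bound $\gamma(P)\le(\pi\otimes P)(A\times A^c)/(\pi(A)\pi(A^c))$ for every $A$; the version with denominator $\pi(A)$ and the restriction $\pi(A)\le 1/2$ only follows with the extra factor of $2$. (This is immaterial for the way the lemma is used in Lemma~\ref{lemma:idealspgapubconductance}, where constants are not tracked, but as a proof of the statement as written the upper bound does not close, and you should either prove $\gamma(P)\le 2\Phi(P)$ or work with the $\pi(A)\pi(A^c)$-normalised conductance.)
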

\begin{proposition} \label{prop:spectralgapgbmtm}
    Let Assumption \ref{hp:pigaussian} hold. Then, as $n\rightarrow \infty$, $ \Phi(\mtm{n})\rightarrow 0$.
\end{proposition}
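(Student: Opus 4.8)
The plan is to use the conductance characterisation in \eqref{eq:conductance} and exhibit, for each $n$, a tail test set whose conductance ratio can be driven to zero. Throughout write $A_R:=\{x\in\Xspace:|x|>R\}$, so that $\pi(A_R)\le 1/2$ once $R$ exceeds the median of $|X|$, and $\pi(A_R)$ decays like a Gaussian tail (up to polynomial factors $\sim e^{-R^2/2}$) as $R\to\infty$. Since $\mtm{n}$ is $\pi$-reversible the stationary flux is symmetric, so I may compute it from the bulk side:
\[
(\pi\otimes\mtm{n})(A_R\times A_R^\complement)=\int_{A_R^\complement}\pi(\dif x)\,\mtm{n}(x,A_R).
\]
Bounding the MTM acceptance probability in Algorithm~\ref{alg:mtm} crudely by $1$, and recalling that the effective proposal is $Y=Y_I$ with $I$ the categorical selection index, one gets $\mtm{n}(x,A_R)\le \mathbb{P}_x(|Y_I|>R)$, where $\mathbb{P}_x$ denotes the joint law of the forward cloud $Y_1,\dots,Y_n\sim q(x,\cdot)$ and the index $I$. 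Hence it suffices to show that
\[
\frac{1}{\pi(A_R)}\int_{A_R^\complement}\pi(\dif x)\,\mathbb{P}_x(|Y_I|>R)\longrightarrow 0
\]
along a suitable sequence $R=R_n\to\infty$; this immediately yields $\Phi(\mtm{n})\to 0$.

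The next step is to relate the selection probability to the ideal proposal $q^w$. For the globally balanced choice $w(x,y)=\pi(y)/\pi(x)$ the selection weight of index $i$ simplifies to $\pi(Y_i)/\sum_j\pi(Y_j)$, so that
\[
\mathbb{P}_x(|Y_I|>R)=\mathbb{E}_x\left[\frac{\frac1n\sum_{i=1}^n\pi(Y_i)\mathbf{1}\{|Y_i|>R\}}{\frac1n\sum_{i=1}^n\pi(Y_i)}\right].
\]
Writing $(q\pi)(x):=\int q(x,\dif y)\,\pi(y)$, the denominator is an i.i.d.\ empirical mean with expectation $(q\pi)(x)$, while the numerator has expectation exactly $(q\pi)(x)\cdot\mathbb{P}_{q^w(x,\cdot)}(|Y|>R)$. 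The plan here is to bound the denominator from below by a constant multiple of its mean on a high-probability event (via a concentration/law-of-large-numbers estimate for $\frac1n\sum_i\pi(Y_i)$), and to bound the ratio by $1$ on the complementary event, producing a bound of the schematic form $\mathbb{P}_x(|Y_I|>R)\le C\cdot\mathbb{P}_{q^w(x,\cdot)}(|Y|>R)+\varepsilon_n(x,R)$, where $\varepsilon_n(x,R)$ is the probability that the empirical mean falls well below $(q\pi)(x)$.

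It then remains to analyse the two contributions. For the first, note that $\int_{A_R^\complement}\pi(\dif x)\,\mathbb{P}_{q^w(x,\cdot)}(|Y|>R)$ is precisely the outward stationary flux of $\ideal$ across the sphere $\{|x|=R\}$: outward moves of the ideal chain are accepted with probability one, since $\alphid(x,y)=\min\{1,(q\pi)(x)/(q\pi)(y)\}$ and $(q\pi)$ is decreasing in $|y|$. Using \eqref{eq:qwisgaussian} with $\theta=1$, the proposal $q^w(x,\cdot)=\mathcal{N}(\,\cdot\,;x/(1+\sigma^2),I_d\,\sigma^2/(1+\sigma^2))$ has mean strictly contracted towards the origin, so a direct Gaussian tail computation (a Laplace estimate of the corner/boundary contribution of the product $\pi(x)\,\mathbb{P}_{q^w(x,\cdot)}(|Y|>R)$) shows that this flux decays exponentially in $R^2$ at a strictly faster rate than $\pi(A_R)\sim e^{-R^2/2}$. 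Consequently the first contribution, divided by $\pi(A_R)$, tends to $0$ as $R\to\infty$. This is the quantitative form of the statement that $\ideal$ with globally balanced weights has no spectral gap: the proposal contracts towards the mode while the corrective outward moves carry exponentially small mass, producing sticky tails.

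The final step is to choose $R=R_n\to\infty$ slowly enough that the residual term $\int_{A_R^\complement}\pi(\dif x)\,\varepsilon_n(x,R)$, divided by $\pi(A_{R_n})\sim e^{-R_n^2/2}$, also vanishes. This is the main obstacle: because $(q\pi)(x)$ is exponentially small for $x$ near the boundary of $A_R^\complement$, the empirical mean $\frac1n\sum_i\pi(Y_i)$ has large relative fluctuations there, so $\varepsilon_n(x,R)$ must be controlled uniformly over the shell $\{|x|\approx R\}$ and shown to be negligible against the Gaussian tail $e^{-R_n^2/2}$. Since for each fixed $R$ the selection probability converges, by the law of large numbers, to the ideal value $\mathbb{P}_{q^w(x,\cdot)}(|Y|>R)$ as $n\to\infty$ (and this limit already gives ratio tending to $0$ as $R\to\infty$), a diagonal choice of $R_n\to\infty$ growing sufficiently slowly with $n$ makes both contributions vanish, giving $\Phi(\mtm{n})\to 0$. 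I expect the bulk of the technical effort to lie in making this concentration estimate, and the accompanying $R_n$-versus-$n$ balance, explicit and uniform over the boundary region.
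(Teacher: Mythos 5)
Your proposal is correct in outline but takes a genuinely different route from the paper. Both arguments use the tail set $A_R=\{|x|>R\}$ as the test set, but you bound the cut flux from the bulk side, $(\pi\otimes\mtm{n})(A_R\times A_R^\complement)=\int_{A_R^\complement}\pi(\dif x)\,\mtm{n}(x,A_R)$, discard the acceptance step, and argue that the \emph{selected proposal} $Y_I$ rarely enters the tail because its law $\widetilde q^w_n(x,\cdot)\to q^w(x,\cdot)$ is contracted towards the origin. The paper instead computes the flux from the tail side, $\int_{A_R}\pi(\dif x)\,\mtm{n}(x,A_R^\complement)\le\int_{A_R}\pi(\dif x)\,\alphmtm{n}(x)$, and argues that the \emph{acceptance probability} is uniformly small in the tail: it imports the bound $\alphid(x)\le c_{2,\sigma}\exp(-|x|^2 c_{1,\sigma})$ from Proposition 3 of \cite{gagnon2023}, chooses $R_\varepsilon$ so that this is below $\varepsilon$ on $A_{R_\varepsilon}$, and then shows $|\alphid(x)-\alphmtm{n}(x)|\to0$ pointwise (SLLN plus total-variation convergence of $\widetilde q^w_n$) and applies bounded convergence. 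These are two faces of the same bottleneck, exchanged by reversibility; the paper's side is more economical because the needed tail estimate on $\alphid$ is already in the literature, whereas your side requires you to prove the Laplace-type estimate that $\int_{A_R^\complement}\pi(\dif x)\,\mathbb{P}_{q^w(x,\cdot)}(|Y|>R)$ decays strictly faster than $\pi(A_R)$ from scratch (it does: the saddle-point exponent is $-\tfrac{R^2(1+\sigma^2)^2}{2(1+\sigma^2+\sigma^4)}<-\tfrac{R^2}{2}$ for all $\sigma>0$, but this is unproved in your sketch).

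One substantive simplification: the uniform concentration estimate over the boundary shell that you flag as the main obstacle is not actually needed. Since the ratio inside your expectation is bounded by $1$, for each \emph{fixed} $R$ bounded convergence gives
\begin{equation*}
\limsup_{n\to\infty}\Phi(\mtm{n})\;\le\;\frac{1}{\pi(A_R)}\int_{A_R^\complement}\pi(\dif x)\,\mathbb{P}_{q^w(x,\cdot)}(|Y|>R),
\end{equation*}
and you then let $R\to\infty$ using your Laplace estimate. This "fix $R$ first, then $n\to\infty$, then $R\to\infty$" order of limits is exactly the one the paper uses (fix $\varepsilon$, hence $R_\varepsilon$, then let $n\to\infty$), and it sidesteps any need to balance $R_n$ against $n$ or to control fluctuations of $\tfrac1n\sum_i\pi(Y_i)$ uniformly near $|x|=R$. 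With that adjustment, your argument closes, modulo writing out the Gaussian tail computation for the ideal flux.
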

\begin{proof}
    See Appendix \ref{app:spectralgapgbmtm}.
\end{proof}
By Cheeger's inequality, this also implies that the spectral gap of $P_n$ vanishes. In light of this result, it is hard to recommend using Multiple-try schemes with weight function  $w(x,y)=\pi(y)/\pi(x)$ with a random walk proposal. This phenomena is inherently related to the fact that the globally balanced Multiple-try is excessively biased towards high probability regions of $\pi$, and can be understood by considering the example presented in \cite{martino2017}, illustrated in Figure~\ref{fig:MTM_props}. 
\begin{figure}[t]
\includegraphics[width=1\textwidth]{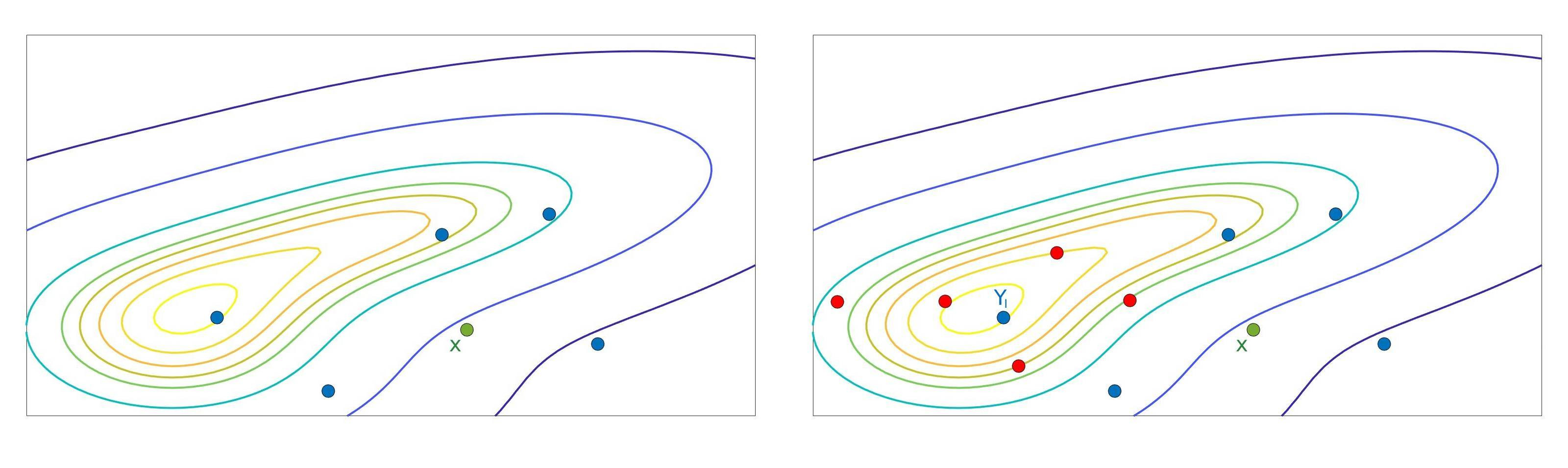}
 \caption{Degenerate behaviour of globally balanced Multiple-try Metropolis: the shadow samples are more likely to be drawn in higher probability areas, resulting in slow convergence.}  \label{fig:MTM_props}
\end{figure}
Figure~\ref{fig:MTM_props} can be interpreted as follows. Suppose the current sample $x$ is the green circle, which lies in a lower probability region. Among the proposals in blue, some will also lie in low probability regions, and others in higher regions. This is illustrated in the left plot. With globally balanced weights, one of the high probability proposals is most likely to be chosen as effective proposal, which we marked as $Y_I$. However, this leads to all the shadow samples, marked in red, to be also located in higher probability regions. Consequently, the acceptance probability, $\min\{1,\sum_{i=1}^n \pi(Y_i)/\sum_{i=1}^n \pi(Z_i)\}$, is likely to be low. Notice that this phenomena can only worsen with $n$. 
On the other hand, this bad behavior cannot occur if we consider an independence sampler $q(x,\dif y)=q(\dif y)$, as each proposal sample is independent of the current position. \cite{yang2023} shows that the independence globally balanced Multiple-try Metropolis can in fact be uniformly geometrically ergodic, and that the spectral gap does not vanish with $n$.

\subsection{Locally balanced weights}
We turn the analysis to the locally balanced $\ideal$. We show that with this choice of weight function $\ideal$ is geometrically convergent, has better scaling properties with the dimension than Random Walk Metropolis, and we derive lower and upper bounds on its spectral gap that reflects this -- see the discussion in Section \ref{sec:mtmintro}.
Firstly, we notice that under the Gaussian assumption on the target, the proposal $q^w$ is Gaussian, given by \eqref{eq:qwisgaussian} with $\theta=1/2$, and we can write, 
\begin{align} \label{eq:alphainf}
    \alphid(x) &= \int \cal{N}\bigg(\dif y;x\cdot \frac{2}{2+\sigma^2};I_d \cdot\frac{2 \sigma^2}{2+\sigma^2} \bigg) \alpha_\infty(x,y), \\ \alpha_\infty(x,y)&=\min\left\{1,\frac{\pi(\dif y)q^w(y,\dif x)}{\pi(\dif x)q^w(x,\dif y)}\right\}
\end{align}
and we can calculate explicitly $\alpha_\infty(x,y)$ as
\begin{equation*}
	\alpha_\infty(x,y) = \min\left\{ 1, \exp(-\psi(y)+\psi(x))\right\}, \quad \text{with} \quad \psi(x)=\frac{\sigma^2}{4(2+\sigma^2)}\norm{x}^2 + \mathrm{const}.
\end{equation*} 
This facilitates the analysis, by allowing us to extend the techniques introduced in \cite{alpw2022a}. In this section we prove Theorem \ref{thm:spgapbounds} by proving upper and lower bounds to the spectral gap of $\ideal$.

\subsubsection{Spectral gap lower bound}

The main result of this section is the following.
\begin{proposition} \label{prop:idealspgaplb} 
    Assume Assumption \ref{hp:pigaussian}. Let $\sigma=\zeta\cdot d^{-1/4}$. For any $\zeta>0$,
    \begin{align*}
	   \Phi(\ideal) &\geq 2^{-9/2}\cdot \exp(-\zeta^4/8)\cdot \sqrt{\zeta^2 \cdot d^{-1/2}\cdot(2+\zeta^2\cdot d^{-1/2})} \cdot c_\gamma \quad \text{and} \\
	   \gamma(\ideal) &\geq 2^{-10}\cdot \exp(-\zeta^4/4)\cdot \zeta^2 \cdot d^{-1/2}\cdot(2+\zeta^2\cdot d^{-1/2}) \cdot c_\gamma^2. 
	\end{align*}
\end{proposition}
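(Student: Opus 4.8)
The plan is to prove the conductance bound first and then get the spectral gap bound essentially for free: since $\ideal$ is positive (Lemma~\ref{lemma:positive}), Cheeger's inequality (Lemma~\ref{lemma:cheeger}) gives $\gamma(\ideal)\geq \tfrac12\Phi(\ideal)^2$, and one checks that squaring the claimed lower bound on $\Phi(\ideal)$ and halving reproduces exactly the claimed lower bound on $\gamma(\ideal)$ (indeed $\tfrac12(2^{-9/2})^2=2^{-10}$, $(e^{-\zeta^4/8})^2=e^{-\zeta^4/4}$, and $\sigma^2=\zeta^2 d^{-1/2}$ turns $\sqrt{\sigma^2(2+\sigma^2)}$ into its square). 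So the whole task reduces to lower bounding the conductance $\Phi(\ideal)$.

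To bound $\Phi(\ideal)$ I would extend the isoperimetric conductance machinery of \cite{alpw2022a}. Fix a set $A$ with $\pi(A)\leq 1/2$ and write the numerator of \eqref{eq:conductance} as $(\pi\otimes\ideal)(A\times A^c)=\int_A\pi(\dif x)\int_{A^c}q^w(x,\dif y)\,\alpha_\infty(x,y)$. The strategy is to factor this flow into a purely geometric part, measuring how much the Gaussian proposal $q^w$ pushes mass across the boundary of $A$, and an acceptance part. The geometric part is controlled from below via the Gaussian isoperimetric inequality for $\pi=\mathcal{N}(0,I_d)$: the extremal sets are half-spaces, and the crossing flow of a Gaussian proposal of the given covariance scales like the product of the Gaussian perimeter and the typical normal displacement, with the infimum over admissible half-spaces furnishing the explicit numerical constant $c_\gamma$ and the move-size factor $\sqrt{\sigma^2(2+\sigma^2)}$. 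Because $q^w$ is not a symmetric random-walk proposal but a Gaussian with contracted mean $\tfrac{2}{2+\sigma^2}x$ and covariance $\tfrac{2\sigma^2}{2+\sigma^2}I_d$ as in \eqref{eq:qwisgaussian}, the estimates of \cite{alpw2022a} must be adapted to this shifted Gaussian, which I would do by comparing $q^w(x,\cdot)$ to a symmetric Gaussian and absorbing the mean-shift factors into the explicit powers of two.

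For the acceptance part I would use the explicit form $\alpha_\infty(x,y)=\min\{1,e^{\psi(x)-\psi(y)}\}\geq e^{-\mmag{\psi(x)-\psi(y)}}$ with $\psi(x)=\tfrac{\sigma^2}{4(2+\sigma^2)}\norm{x}^2$, so that $\alpha_\infty(x,y)\geq \exp\!\big(-\tfrac{\sigma^2}{4(2+\sigma^2)}\mmag{\norm{y}^2-\norm{x}^2}\big)$. By reversibility the flow across the boundary may be oriented inward, toward the origin, along which $\psi$ decreases and acceptance is identically one to leading order; what remains is to dominate the residual penalty on a high-probability region where $\norm{x}$ and the displacement $\norm{y-x}$ are simultaneously controlled. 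Writing $\norm{y}^2-\norm{x}^2$ as a drift term of order $\sigma^2\norm{x}^2$, a fluctuation of order $\norm{x}\sqrt{\sigma^2}$, and a displacement-norm term of order $\sigma^2 d$, and using $\sigma=\zeta d^{-1/4}$ together with the fact that typical radii under $\pi$ are of order $\sqrt d$, I would bound $\tfrac{\sigma^2}{4(2+\sigma^2)}\mmag{\norm{y}^2-\norm{x}^2}$ by a $d$-independent constant; this is exactly what yields the uniform (safe, not tight) acceptance lower bound responsible for the factor $e^{-\zeta^4/8}$.

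The main obstacle is precisely this acceptance estimate. Each of the three contributions to the penalty must be controlled uniformly in $d$ after multiplication by $\tfrac{\sigma^2}{4(2+\sigma^2)}$ under the delicate scaling $\sigma=\zeta d^{-1/4}$, and $\norm{x}$ is only controlled on a set of $\pi$-measure close to, but not equal to, one. Handling the excluded tail without destroying the clean closed form, while simultaneously keeping the shifted-mean proposal compatible with the Gaussian isoperimetric step, is the crux of the argument; the various powers of two in the final constant are the bookkeeping cost of these crude but dimension-free truncations. Substituting the geometric and acceptance bounds back and taking the infimum over $A$ then yields the stated lower bound on $\Phi(\ideal)$, and Cheeger's inequality completes the proof.
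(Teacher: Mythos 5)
Your reduction of the spectral-gap bound to the conductance bound via Cheeger's inequality is correct (the arithmetic $\tfrac12(2^{-9/2})^2=2^{-10}$ etc.\ checks out, and positivity is supplied by Lemma~\ref{lemma:positive}), and your overall plan --- Gaussian isoperimetry for the geometric part of the flow times a lower bound on the acceptance --- is in the same spirit as the paper, which packages exactly this combination as the close-coupling result of Proposition~\ref{prop:alpwthm18}. However, there is a genuine gap in your acceptance step. You propose the pointwise bound $\alpha_\infty(x,y)\ge \exp(-\tfrac{\sigma^2}{4(2+\sigma^2)}|\norm{y}^2-\norm{x}^2|)$ and then control the exponent only on a region where $\norm{x}$ is of order $\sqrt d$. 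This cannot work as stated: the conductance infimum \eqref{eq:conductance} ranges over \emph{all} sets $A$ with $\pi(A)\le 1/2$, including sets of arbitrarily small measure concentrated in the tail $\norm{x}^2\gg d$, where the drift contribution $\sim\sigma^4\norm{x}^2$ to your exponent is unbounded; truncating to a high-probability region does not let you discard such sets, because $\pi(A)$ appears in the denominator. The absolute value in your bound is precisely what loses the game: it throws away the fact that the drift term $(\rho^2-1)\norm{x}^2<0$ \emph{decreases} $\psi$ and hence only helps acceptance.

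The paper's Lemma~\ref{lemma:alphidlb} repairs exactly this point by bounding the \emph{averaged} acceptance $\alphid(x)=\int q^w(x,\dif y)\,\alphid(x,y)$ uniformly in $x$: using $L$-smoothness of $\psi$, the factorization $\min\{1,ab\}\ge\min\{1,a\}\min\{1,b\}$, and a symmetrization $y\mapsto -y$ combined with $\min\{1,a\}+\min\{1,1/a\}\ge 1$, the linear fluctuation term $\iprod{\nabla\psi(x\rho)}{y}$ cancels on average and one gets $\overline{\alphid}\ge\tfrac12\exp(-d\sigma^4/(4(2+\sigma^2)^2))$ with no dependence on $\norm{x}$ and no truncation. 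That uniform bound is then fed, together with a Pinsker-type bound $\tvdist{q^w(x,\cdot)-q^w(y,\cdot)}\le |x-y|/\sqrt{2\sigma^2(2+\sigma^2)}$ (Lemma~\ref{lemma:qwclosecoupling}), into the close-coupling framework via Lemma~\ref{lemma:pidealclosecoupling} and Corollary~\ref{cor:pidealclosecoulping}; this also spares you the nontrivial task, which you leave open, of re-deriving the isoperimetric flow estimate for the mean-contracted proposal $q^w$ from scratch. To complete your argument you would need to replace your truncated pointwise acceptance bound with a uniform-in-$x$ bound on the averaged acceptance of this kind.
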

\begin{proof}
    This follows from combining Proposition \ref{prop:alpwthm18}, Corollary \ref{cor:pidealclosecoulping} and Lemma \ref{lemma:alphidlb}, and noting that $2^2 \cdot \delta^2\cdot  c_\gamma^2 \ge \exp\left(-\frac{\zeta^4}{8}\right)\cdot \zeta^2 \cdot d^{-1/2}\cdot(2+\zeta^2 \cdot d^{-1/2})\cdot 2^{-1} \cdot c_\gamma^2 $ and that
    \begin{equation*}
        \exp\bigg(-\frac{\zeta^4}{8}\bigg)\cdot \zeta^2 \cdot d^{-1/2}\cdot(2+\zeta^2 \cdot d^{-1/2})\cdot 2^{-1} \cdot c_\gamma^2 \leq 4\cdot c_\gamma^2 \cdot ( e^{-1/2} + 1 )  < 1
    \end{equation*}
\end{proof}
The proof of this result requires the following \textit{close coupling} condition.
\begin{definition}
    For given $\epsilon, \delta >0$, we say that a Markov kernel $P$ on a metric space $(\Xspace,\mathsf{d})$ is $(\mathsf{d},\delta,\varepsilon)$-close coupling if
    \begin{equation*}
	   \mathsf{d}(x,y) \leq \delta \Rightarrow \tvdist{P(x,\cdot)-P(y,\cdot)} \leq 1-\varepsilon, \quad x,y\in\Xspace.
	\end{equation*}
\end{definition}
\begin{proposition}[Theorem 18 and Lemma 27 in \cite{alpw2022a}] \label{prop:alpwthm18}
	Assume Assumption \ref{hp:pigaussian} and let $\ideal$ be a $(\mathsf{d},\delta,\varepsilon)$-close coupling, $\pi$-reversible and positive Markov kernel. Then, it holds that
    \begin{equation*}
	   \Phi(\ideal) \geq 2^{-2}\cdot \varepsilon\cdot \min\{1,2\cdot \delta \cdot c_\gamma\} \quad \text{and} \quad \gamma(\ideal)\geq 2^{-5}\cdot \varepsilon^2\cdot \min\{1,2^2\cdot \delta^2 \cdot c_\gamma^2\}.
	\end{equation*}
\end{proposition}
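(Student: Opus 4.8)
The plan is to prove the conductance bound first and then read off the spectral-gap bound from Cheeger's inequality (Lemma~\ref{lemma:cheeger}), which applies because $\ideal$ is reversible and positive (the latter guaranteed in our Gaussian setting by Lemma~\ref{lemma:positive}). Once I have $\Phi(\ideal)\ge 2^{-2}\varepsilon\min\{1,2\delta c_\gamma\}$, applying the lower Cheeger bound $\gamma(\ideal)\ge\tfrac12\Phi(\ideal)^2$ and using $(\min\{1,t\})^2=\min\{1,t^2\}$ gives $\tfrac12\cdot 2^{-4}\varepsilon^2\min\{1,4\delta^2 c_\gamma^2\}=2^{-5}\varepsilon^2\min\{1,2^2\delta^2 c_\gamma^2\}$, exactly the stated estimate. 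So all the work sits in the conductance bound.

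For the conductance, I would fix a test set $A$ with $\pi(A)\le 1/2$ and partition the space by escape probability. Define the deep-interior sets $A_1:=\{x\in A:\ideal(x,A^\complement)<\varepsilon/2\}$ and $A_2:=\{x\in A^\complement:\ideal(x,A)<\varepsilon/2\}$, together with the boundary layer $A_3:=\Xspace\setminus(A_1\cup A_2)$. First I would use the close-coupling hypothesis to separate $A_1$ and $A_2$: for $x\in A_1$ and $y\in A_2$ one has $\tvdist{\ideal(x,\cdot)-\ideal(y,\cdot)}\ge \ideal(x,A)-\ideal(y,A)>(1-\varepsilon/2)-\varepsilon/2=1-\varepsilon$, so the close-coupling condition forces $\mathsf{d}(x,y)>\delta$; hence $\mathsf{d}(A_1,A_2)\ge\delta$.

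The second ingredient is a Gaussian isoperimetric inequality in three-set form: for any partition $\Xspace=A_1\cup A_2\cup A_3$ with $\mathsf{d}(A_1,A_2)\ge\delta$ one has $\pi(A_3)\ge\min\{1,2\delta c_\gamma\}\cdot\min\{\pi(A_1),\pi(A_2)\}$, where $c_\gamma$ is the sharp constant from the standard Gaussian isoperimetric inequality. Combining this with a flow bound finishes the argument: points of $A_3$ escape their own side with probability at least $\varepsilon/2$ by construction, so symmetrising via reversibility yields $(\pi\otimes\ideal)(A\times A^\complement)\ge\tfrac{\varepsilon}{4}\pi(A_3)$. A short case analysis then relates $\min\{\pi(A_1),\pi(A_2)\}$ back to $\pi(A)$: if either deep-interior set carries less than half of $\pi(A)$, the direct flow estimate through $\pi(A\setminus A_1)$ or $\pi(A^\complement\setminus A_2)$ already gives $\Phi(\ideal)\ge\varepsilon/4$; otherwise the three-set inequality forces $\pi(A_3)$ to be at least a constant multiple of $\min\{1,2\delta c_\gamma\}\,\pi(A)$. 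Dividing by $\pi(A)$ and taking the infimum over $A$ produces the conductance bound up to the universal constant $2^{-2}$.

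The hard part will be the Gaussian isoperimetric inequality with the explicit constant $c_\gamma=0.3177765$. Establishing the three-set separation inequality requires the sharp isoperimetric profile of the standard Gaussian (Sudakov--Tsirelson and Borell) together with an optimisation of the Gaussian surface-area-to-measure ratio that pins down the numerical value of $c_\gamma$ and installs the saturation $\min\{1,\cdot\}$ for large $\delta$. Tracking the universal constants cleanly through the flow-symmetrisation and the case analysis, so that they collapse to the stated $2^{-2}$ (hence $2^{-5}$ after Cheeger), is the remaining delicate bookkeeping. Since this is exactly the content of Theorem~18 and Lemma~27 of \cite{alpw2022a}, I would either import those statements directly or reproduce the argument above in our notation.
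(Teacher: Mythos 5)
The paper gives no proof of this proposition at all: it is imported verbatim as Theorem~18 and Lemma~27 of \cite{alpw2022a}, so there is nothing internal to compare your argument against. That said, your reconstruction is essentially the argument of the cited reference, and it is sound. The Cheeger step is complete and correct: $\gamma(\ideal)\ge\tfrac12\Phi(\ideal)^2$ together with $(\min\{1,t\})^2=\min\{1,t^2\}$ turns the conductance bound into exactly the stated spectral-gap bound. The conductance argument — deep-interior sets $A_1,A_2$ separated by distance $\delta$ via the contrapositive of close coupling, a Gaussian three-set isoperimetric inequality to lower-bound the boundary layer $\pi(A_3)$, reversibility to convert $\pi(A_3)$ into flow across $A\times A^\complement$, and a case split on whether either deep interior is small — is the standard Lov\'asz--Simonovits-style scheme used in \cite{alpw2022a}. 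You correctly identify that the only genuinely nontrivial external input is the three-set Gaussian isoperimetric inequality with the explicit constant $c_\gamma$ (Lemma~27 there), which must be imported or proved separately; modulo that and the final constant bookkeeping (your sketch as written appears to yield $2^{-3}\varepsilon$ rather than $2^{-2}\varepsilon$ in the worst case of the case analysis, so the constants would need to be tracked against the precise form of the isoperimetric lemma), the proposal is a faithful proof outline.
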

To establish close coupling, we first notice that $q^w$ is reversible (Lemma \ref{lemma:positive}'s proof). Given this observation, applying \cite[Lemma 19]{alpw2022a} gives the following.
\begin{lemma} \label{lemma:pidealclosecoupling}
    Let $P_\infty$ be the locally balanced ideal kernel with proposal $q^w$. Let $\alphid(x):=\int q^w(x,\dif y)\alphid(x,y)$ denote the corresponding average acceptance probability, and let $\overline{\alphid}:=\inf \alphid(x)$. For all $x,y\in\Xspace$, 
    \begin{equation*}
	   \tvdist{\ideal(x,\cdot)-\ideal(y,\cdot)} \leq \tvdist{q^w(x,\cdot)-q^w(y,\cdot)} + 1 - \overline{\alphid}, ,
	\end{equation*}
\end{lemma}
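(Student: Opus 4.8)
The plan is to exploit the reversibility of $q^w$ to obtain the bound with the single rejection penalty $1-\overline{\alphid}$, which is the crux: a naive triangle inequality or independent coupling of the proposals and accept/reject decisions only yields $2(1-\overline{\alphid})$. I would begin by writing $\ideal$ in accept/reject form, $\ideal(x,\dif z)=\alphid(x,z)\,q^w(x,\dif z)+(1-\alphid(x))\,\delta_x(\dif z)$, and recalling the identity $\tvdist{\mu-\nu}=1-(\mu\wedge\nu)(\Xspace)$ for probability measures. For $x\neq y$ (the case $x=y$ being trivial), the rejection masses $\delta_x,\delta_y$ sit at distinct points and are singular with respect to the absolutely continuous accepted parts, since $q^w$ is Gaussian hence non-atomic; discarding them gives $(\ideal(x,\cdot)\wedge\ideal(y,\cdot))(\Xspace)\geq \int \min\!\big(\alphid(x,z)q^w_x(z),\,\alphid(y,z)q^w_y(z)\big)\dif z$, where $q^w_x$ denotes the density of $q^w(x,\cdot)$.

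The key step uses reversibility. From the proof of Lemma~\ref{lemma:positive}, $q^w$ is reversible with respect to $\nu$, so the proposal ratio in $\alphid(x,z)$ collapses and $\alphid(x,z)=\min\{1,g(z)/g(x)\}$ with $g:=\dif\pi/\dif\nu$. This makes the acceptance \emph{pointwise monotone} in the starting point: assuming without loss of generality that $g(x)\geq g(y)$, we have $\alphid(x,z)\leq \alphid(y,z)$ for every $z$. Writing $m:=\min(q^w_x,q^w_y)$, this monotonicity yields $\min\!\big(\alphid(x,z)q^w_x(z),\,\alphid(y,z)q^w_y(z)\big)\geq \alphid(x,z)\,m(z)$ pointwise.

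It then remains to lower bound $\int \alphid(x,z)m(z)\dif z$. Using $\int \alphid(x,z)m\,\dif z=\alphid(x)-\int \alphid(x,z)(q^w_x-m)\dif z$, together with $\alphid(x,\cdot)\leq 1$ and $\int(q^w_x-m)=\tvdist{q^w(x,\cdot)-q^w(y,\cdot)}$, I obtain $\int \alphid(x,z)m\,\dif z\geq \alphid(x)-\tvdist{q^w(x,\cdot)-q^w(y,\cdot)}\geq \overline{\alphid}-\tvdist{q^w(x,\cdot)-q^w(y,\cdot)}$. Combining the estimates gives $\tvdist{\ideal(x,\cdot)-\ideal(y,\cdot)}=1-(\ideal(x,\cdot)\wedge\ideal(y,\cdot))(\Xspace)\leq \tvdist{q^w(x,\cdot)-q^w(y,\cdot)}+1-\overline{\alphid}$, as claimed. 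This is precisely the content of \cite[Lemma~19]{alpw2022a} specialised to the reversible proposal $q^w$, so a shorter route is simply to verify its hypotheses and invoke it.

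The main obstacle is securing the single rejection penalty $1-\overline{\alphid}$ rather than twice that, and reversibility of $q^w$ is exactly what makes this work: it forces $\alphid(\cdot,z)$ to be decreasing in $g$ at the current point, so the chain started from the \emph{worse} point has uniformly smaller acceptance, and the overlap of the two accepted sub-measures can be bounded below by a single average acceptance $\alphid(x)\geq\overline{\alphid}$. The only other technical points are the non-atomicity/singularity justification that lets us drop the rejection masses, and handling the measure-theoretic minimum $\mu\wedge\nu$; both are routine in the Gaussian setting.
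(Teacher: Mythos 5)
Your argument is correct and matches the paper's route: the paper's proof consists precisely of noting that $q^w$ is reversible (from the proof of Lemma~\ref{lemma:positive}) and then invoking \cite[Lemma~19]{alpw2022a}, and your derivation is a faithful, self-contained reconstruction of that cited lemma's proof — the accept/reject decomposition, dropping the mutually singular rejection atoms, using reversibility to write $\alphid(x,z)=\min\{1,g(z)/g(x)\}$ so that the acceptance is pointwise monotone in the starting point, and bounding the overlap of the accepted sub-measures by a single average acceptance. You also correctly identify the shortcut of simply verifying the hypotheses of \cite[Lemma~19]{alpw2022a}, which is exactly what the paper does.
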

Thus, to establish the close coupling condition we seek bounds on the quantity $\tvdist{q^w(x,\cdot)-q^w(y,\cdot)}$, and the acceptance probability $\alpha_\infty$. 
\begin{lemma} \label{lemma:qwclosecoupling}
    Assume Assumption \ref{hp:pigaussian}. Then, we have 
    \begin{equation*}
    	\tvdist{q^w(x,\cdot)-q^w(y,\cdot)} \leq |x-y|\cdot \sqrt{\frac{1}{2\cdot\sigma^2\cdot(2+\sigma^2)}}
	\end{equation*}
\end{lemma}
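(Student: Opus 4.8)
The plan is to exploit the fact that, under Assumption~\ref{hp:pigaussian} with the locally balanced weight (so $\theta=1/2$), the proposal $q^w$ is Gaussian. Indeed, by \eqref{eq:qwisgaussian},
\[
q^w(x,\dif y)=\cal{N}\Big(\dif y;\ \tfrac{2}{2+\sigma^2}\,x,\ \tfrac{2\sigma^2}{2+\sigma^2}I_d\Big),
\]
so $q^w(x,\cdot)$ and $q^w(y,\cdot)$ are two Gaussians that share the \emph{same} covariance $\Sigma:=\tfrac{2\sigma^2}{2+\sigma^2}I_d$ and differ only through their means $\mu_x=\tfrac{2}{2+\sigma^2}x$ and $\mu_y=\tfrac{2}{2+\sigma^2}y$. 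The total variation distance between two Gaussians with a common covariance depends only on the Mahalanobis distance $\Delta:=\sqrt{(\mu_x-\mu_y)^\top\Sigma^{-1}(\mu_x-\mu_y)}$ between their means. Since $\Sigma^{-1}=\tfrac{2+\sigma^2}{2\sigma^2}I_d$ and $\mu_x-\mu_y=\tfrac{2}{2+\sigma^2}(x-y)$, this evaluates to
\[
\Delta=\sqrt{\,|x-y|^2\cdot\tfrac{4}{(2+\sigma^2)^2}\cdot\tfrac{2+\sigma^2}{2\sigma^2}\,}=|x-y|\sqrt{\tfrac{2}{\sigma^2(2+\sigma^2)}}.
\]

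The cleanest way to obtain the stated constant is via Pinsker's inequality. The Kullback--Leibler divergence between two Gaussians with common covariance $\Sigma$ is $\tfrac12(\mu_x-\mu_y)^\top\Sigma^{-1}(\mu_x-\mu_y)=\tfrac{\Delta^2}{2}$, so Pinsker's inequality (in the $[0,1]$-valued convention $\tvdist{P-Q}\le\sqrt{\tfrac12\,\mathrm{KL}(P\,\|\,Q)}$) yields
\[
\tvdist{q^w(x,\cdot)-q^w(y,\cdot)}\le\sqrt{\tfrac12\cdot\tfrac{\Delta^2}{2}}=\tfrac{\Delta}{2}=|x-y|\sqrt{\tfrac{1}{2\sigma^2(2+\sigma^2)}},
\]
which is exactly the claimed bound. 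Alternatively, one can use the exact formula: applying the affine map $\Sigma^{-1/2}$ (which preserves total variation) reduces the problem to standard Gaussians, and rotational invariance further reduces it to one dimension, giving $\tvdist{q^w(x,\cdot)-q^w(y,\cdot)}=2G(\Delta/2)-1$ with $G$ the standard normal c.d.f.; the elementary bound $2G(t)-1\le t$, valid since $(2G(t)-1)'=2g(t)\le 2g(0)=\sqrt{2/\pi}<1$ for the standard normal density $g$, then recovers the same estimate.

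There is no serious obstacle here: the result is essentially a bookkeeping exercise once one recognises the equal-covariance structure of $q^w(x,\cdot)$ and $q^w(y,\cdot)$. The only points requiring care are the algebraic simplification of $\Delta$ (equivalently, of the KL divergence), and fixing the normalisation convention for $\tvdist{\cdot}$ so that Pinsker's inequality is applied in a form consistent with the $[0,1]$-valued total variation used in the close-coupling definition. I note in passing that the exact-formula route in fact delivers the marginally sharper constant $\sqrt{2/\pi}$ in place of $1$, but the stated bound—matching Pinsker exactly—is all that is needed for the subsequent close-coupling argument feeding into Proposition~\ref{prop:idealspgaplb}.
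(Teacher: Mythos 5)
Your proof is correct and follows essentially the same route as the paper's: identify $q^w(x,\cdot)$ and $q^w(y,\cdot)$ as equal-covariance Gaussians via \eqref{eq:qwisgaussian}, compute the KL divergence $|x-y|^2/(\sigma^2(2+\sigma^2))$, and apply Pinsker's inequality. The remark about the marginally sharper constant $\sqrt{2/\pi}$ from the exact total-variation formula is a correct but inessential aside.
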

\begin{proof}
   From \eqref{eq:qwisgaussian}, using standard results on the relative entropy between Gaussians with common covariances and different means, we deduce that for all $x,y\in\Xspace$,
   \begin{equation*}
   	\textup{KL}(q^w(x,\cdot),q^w(y,\cdot)) = |x-y|^2 \cdot \left(\frac{2}{2+\sigma^2}\right)^2 \cdot \frac{2+\sigma^2}{4\sigma^2} =|x-y|^2\cdot\frac{1}{\sigma^2 \cdot (2+\sigma^2)}.
   \end{equation*}
   Pinsker's inequality concludes the proof. 
\end{proof}
Lemma \ref{lemma:pidealclosecoupling} and \ref{lemma:qwclosecoupling} immediately give the following.
\begin{corollary} \label{cor:pidealclosecoulping}
    $\ideal$ is a $(|\cdot|,\overline{\alphid}\cdot \sqrt{\sigma^2\cdot(2+\sigma^2)}\cdot 2^{-1/2},2^{-1}\cdot \overline{\alphid})$-close coupling.
\end{corollary}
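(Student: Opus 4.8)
The plan is to combine the two preceding lemmas and then read off the pair $(\delta,\varepsilon)$ by a direct substitution. First I would chain Lemma~\ref{lemma:pidealclosecoupling} with Lemma~\ref{lemma:qwclosecoupling}: inserting the total-variation bound on the proposal increments of $q^w$ into the triangle-type inequality for $\ideal$ gives, for all $x,y\in\Xspace$,
\begin{equation*}
    \tvdist{\ideal(x,\cdot)-\ideal(y,\cdot)} \leq |x-y|\cdot \sqrt{\frac{1}{2\cdot\sigma^2\cdot(2+\sigma^2)}} + 1 - \overline{\alphid}.
\end{equation*}
This reduces matters to choosing a distance threshold $\delta$ so that on $\{|x-y|\leq\delta\}$ the right-hand side is bounded by $1-\varepsilon$ for a suitable slack $\varepsilon$.

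Next I would specialise to the regime $|x-y|\leq \delta$ with $\delta := \overline{\alphid}\cdot \sqrt{\sigma^2\cdot(2+\sigma^2)}\cdot 2^{-1/2}$. Substituting this value into the displayed bound, the two square-root factors cancel and the $2^{-1/2}$ together with the $\sqrt{1/2}$ from Lemma~\ref{lemma:qwclosecoupling} combine to $2^{-1}$, so the proposal-drift term collapses to exactly $\overline{\alphid}/2$. Hence
\begin{equation*}
    \tvdist{\ideal(x,\cdot)-\ideal(y,\cdot)} \leq \frac{\overline{\alphid}}{2} + 1 - \overline{\alphid} = 1 - \frac{\overline{\alphid}}{2}.
\end{equation*}
Setting $\varepsilon := 2^{-1}\cdot \overline{\alphid}$ then yields precisely $\tvdist{\ideal(x,\cdot)-\ideal(y,\cdot)}\leq 1-\varepsilon$, which is the close-coupling implication with the advertised constants.

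There is no substantive obstacle here: the statement is an algebraic consequence of the two lemmas, and the only design choice is to split the ``budget'' $\overline{\alphid}$ supplied by the acceptance term evenly between the retained slack $\varepsilon$ and the distance allowance $\delta$. The factors $2^{-1/2}$ in $\delta$ and $2^{-1}$ in $\varepsilon$ are exactly what make the proposal-drift contribution and the retained slack balance, which is why the verification closes with equality rather than a strict inequality; any other split would give a valid but non-matching close-coupling pair.
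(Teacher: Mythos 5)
Your argument is correct and is exactly the paper's (the paper simply states that Lemmas~\ref{lemma:pidealclosecoupling} and~\ref{lemma:qwclosecoupling} ``immediately give'' the corollary, with the same substitution of $\delta$ and the same even split of $\overline{\alphid}$ between the drift term and the retained slack). Nothing to add.
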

We are now left with the task of deriving a lower bound to $\alphid$.
\begin{lemma}[Lower bound for $\alphid$] \label{lemma:alphidlb}
    Assume Assumption \ref{hp:pigaussian}. It holds $\overline{\alphid}\geq \frac{1}{2}\exp\left(-\frac{d\sigma^4}{4(2+\sigma^2)^2}\right)$. In particular, if $\sigma=\zeta\cdot d^{-1/4}$ for some $\zeta>0$, then $\overline{\alphid}\geq \frac{1}{2}\exp(-\zeta^4/16)$.
\end{lemma}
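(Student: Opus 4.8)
The plan is to lower bound the average acceptance probability $\alphid(x)=\int q^w(x,\dif y)\,\alphid(x,y)$ uniformly over $x\in\Xspace$, exploiting the explicit Gaussian form derived above. Set $c:=\tfrac{\sigma^2}{4(2+\sigma^2)}$, so that $\alphid(x,y)=\min\{1,\exp(c(\norm{x}^2-\norm{y}^2))\}$. Writing a proposal as $Y=m\,x+\sqrt{v}\,Z$ with $Z\sim\cal N(0,I_d)$, $m:=\tfrac{2}{2+\sigma^2}\in(0,1)$ and $v:=\tfrac{2\sigma^2}{2+\sigma^2}$, as in \eqref{eq:qwisgaussian}, and using that both $\pi$ and $q^w$ are rotation invariant, it suffices to treat $x=\norm{x}\,e_1$; then $\alphid(x)$ depends on $x$ only through $r:=\norm{x}$ and $\langle x,Z\rangle=r\,Z_1$.

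The idea is to restrict the expectation defining $\alphid(x)$ to a favourable event of probability $\tfrac12$, which is the source of the leading factor. Writing the log acceptance ratio as $-c\,U$ with
\[
U:=\norm{Y}^2-\norm{x}^2=(m^2-1)\,r^2+2m\sqrt{v}\,r\,Z_1+v\norm{Z}^2,
\]
I note that $(m^2-1)\,r^2\le 0$ since $m\le 1$, and that on the symmetric event $\{Z_1\le 0\}$ the cross term $2m\sqrt{v}\,r\,Z_1$ is also non-positive. Hence on $\{Z_1\le 0\}$ we have $U\le v\norm{Z}^2$, and therefore $\alphid(x,Y)=\min\{1,e^{-cU}\}\ge e^{-c v\norm{Z}^2}$ there.

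Combining these bounds and using independence of the coordinates of $Z$ together with the symmetry of $Z_1$,
\[
\alphid(x)\ge \mathbb{E}\!\left[e^{-c v\norm{Z}^2}\mathbf{1}\{Z_1\le 0\}\right]=\tfrac12\,\mathbb{E}\!\left[e^{-c v Z_1^2}\right]\prod_{i=2}^{d}\mathbb{E}\!\left[e^{-c v Z_i^2}\right].
\]
Each factor is a Gaussian moment generating function, $\mathbb{E}[e^{-cvZ_i^2}]=(1+2cv)^{-1/2}$, so the right-hand side equals $\tfrac12(1+2cv)^{-d/2}$; since this is independent of $r$, it also bounds $\overline{\alphid}=\inf_x\alphid(x)$ from below. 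Finally, $2cv=\sigma^4/(2+\sigma^2)^2$, and a single application of $\log(1+t)\le t$ turns the algebraic factor into an exponential bound of the claimed form; the particular case $\sigma=\zeta\,d^{-1/4}$ then follows by substitution, using $(2+\sigma^2)^2\ge 4$ to simplify the exponent.

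The only delicate point is the construction of the probability-$\tfrac12$ event so that the quadratic term $(m^2-1)r^2$ and the cross term are simultaneously discarded in the right direction while the surviving $v\norm{Z}^2$ term remains exactly integrable; the precise numerical constant in the exponent is fixed at the final $\log(1+t)\le t$ step and should be tracked with care. I expect the worst case to be $x=0$, where $U=v\norm{Z}^2$ exactly and $\alphid(0)=(1+2cv)^{-d/2}$ is attained, so the achievable exponent is pinned down there and cannot be improved by a sharper choice of event.
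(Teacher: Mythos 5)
Your route is genuinely different from the paper's and, up to the last step, sharper: you expand $\norm{Y}^2-\norm{x}^2$ exactly and discard the two non-positive terms on the half-space event $\{Z_1\le 0\}$, whereas the paper Taylor-bounds $\psi$ around the proposal mean, peels off the linear term with $\min\{1,ab\}\ge\min\{1,a\}\min\{1,b\}$, and recovers the factor $\tfrac12$ by symmetrising $y\mapsto -y$ and using $\min\{1,a\}+\min\{1,1/a\}\ge 1$. Everything you do up to and including $\overline{\alphid}\ge \tfrac12\bigl(1+\tfrac{\sigma^4}{(2+\sigma^2)^2}\bigr)^{-d/2}$ is correct.

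The gap is the final sentence. Writing $t:=2cv=\sigma^4/(2+\sigma^2)^2$, the inequality $\log(1+t)\le t$ gives $(1+t)^{-d/2}\ge e^{-dt/2}=\exp\bigl(-\tfrac{d\sigma^4}{2(2+\sigma^2)^2}\bigr)$, which is \emph{not} of the claimed form $\exp\bigl(-\tfrac{d\sigma^4}{4(2+\sigma^2)^2}\bigr)$: to reach that constant you would need $\log(1+t)\le t/2$, which fails for every $t>0$. In the scaled regime $\sigma=\zeta d^{-1/4}$ your argument delivers $\tfrac12 e^{-\zeta^4/8}$, not $\tfrac12 e^{-\zeta^4/16}$. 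This is not a defect of your method. Your own exact evaluation $\alphid(0)=(1+t)^{-d/2}$ shows $\overline{\alphid}\le(1+t)^{-d/2}$, and since $\log(1+t)>t/2$ for $t\in(0,1)$, this upper bound drops below $\tfrac12 e^{-dt/4}$ as soon as $\tfrac d2\bigl(\log(1+t)-\tfrac t2\bigr)>\log 2$ --- for instance for any fixed $\sigma>0$ and $d$ large, or for $\sigma=\zeta d^{-1/4}$ with $\zeta^4>16\log 2$. So the constant in the lemma as stated cannot be attained; the discrepancy originates in the paper's proof, which uses the remainder $\tfrac L2\norm{y}^2$ for $\psi(u)=L\norm{u}^2$ although the exact expansion (Hessian $2LI_d$) produces $L\norm{y}^2$. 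With that correction the paper's method also lands on $\tfrac12(1+t)^{-d/2}\ge\tfrac12 e^{-dt/2}$, in agreement with your bound. You should therefore state the conclusion as $\overline{\alphid}\ge\tfrac12\exp\bigl(-\tfrac{d\sigma^4}{2(2+\sigma^2)^2}\bigr)$, hence $\tfrac12 e^{-\zeta^4/8}$ under the scaling; the downstream conductance and spectral-gap lower bounds then hold with $e^{-\zeta^4/4}$ and $e^{-\zeta^4/2}$ in place of $e^{-\zeta^4/8}$ and $e^{-\zeta^4/4}$, leaving the $d^{-1/2}$ scaling unchanged.
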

\begin{proof}
    See Appendix \ref{app:alphidlb}.
\end{proof}

\subsubsection{Spectral gap upper bound}
The main result of this section is as follows.
\begin{proposition} \label{prop:idealspgapub}
    Assume Assumption \ref{hp:pigaussian}. We have
    \begin{align*}
	   \gamma(\ideal) \leq \bigg( \frac{3}{2}\cdot\frac{\sigma^2}{2+\sigma^2}\bigg )\wedge \bigg(1+\frac{\sigma^4}{(2+\sigma^2)^2}\bigg)^{-d/2}
	\end{align*}
\end{proposition}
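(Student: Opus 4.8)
The plan is to bound $\gamma(\ideal)$ from above by exhibiting, for each of the two terms, a single test function or test set in the variational upper bound $\gamma(\ideal)\le\cal{E}(\ideal,f)/\norm{f}_{2,\pi}^2$, valid for every $f\in\mathrm{L}_0^2(\pi)$ since $\ideal$ is positive and reversible by Lemma~\ref{lemma:positive}. Throughout I write $t:=\sigma^2/(2+\sigma^2)$, $\rho:=2/(2+\sigma^2)=1-t$, $v:=2\sigma^2/(2+\sigma^2)$ and $a:=\sigma^2/(4(2+\sigma^2))$, so that by \eqref{eq:qwisgaussian} we have $q^w(x,\cdot)=\cal{N}(\rho x,v\cdot I_d)$ and $\alpha_\infty(x,y)=\min\{1,e^{a(\norm{x}^2-\norm{y}^2)}\}$.

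For the first term I would take the linear test function $f(x)=x_1\in\mathrm{L}_0^2(\pi)$, for which $\norm{f}_{2,\pi}^2=1$. Using $\alpha_\infty\le1$ in the Dirichlet-form representation gives $\cal{E}(\ideal,f)\le\frac12\Eb_{\pi\otimes q^w}[(X_1-Y_1)^2]$, which is a routine Gaussian second-moment computation: since $\mathrm{Var}(X_1)=1$, $\mathrm{Var}(Y_1)=\rho^2+v=1+t^2$ and $\mathrm{Cov}(X_1,Y_1)=\rho$, one finds $\Eb[(X_1-Y_1)^2]=2+t^2-2\rho=2t+t^2$. Hence $\gamma(\ideal)\le t+\tfrac12 t^2\le\tfrac32 t$, the last step using $t<1$; this is the first term $\tfrac32\cdot\frac{\sigma^2}{2+\sigma^2}$.

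For the second (dimensional) term I would use Cheeger's inequality $\gamma(\ideal)\le\Phi(\ideal)$ (Lemma~\ref{lemma:cheeger}) together with the family of radial balls $A_R:=\{x:\norm{x}^2\le R\}$, which are admissible competitors in \eqref{eq:conductance} for all $R$ small enough that $\pi(A_R)\le1/2$. Since the rejection (diagonal) part of $\ideal$ contributes nothing off the diagonal,
\[
(\pi\otimes\ideal)(A_R\times A_R^\complement)=\int_{\norm{x}^2\le R<\norm{y}^2}\pi(\dif x)\,q^w(x,\dif y)\,\alpha_\infty(x,y).
\]
On this region $\norm{x}^2<\norm{y}^2$, so $\alpha_\infty(x,y)=e^{a(\norm{x}^2-\norm{y}^2)}$ equals \emph{exactly} the Metropolis ratio $\pi(y)q^w(y,x)/\pi(x)q^w(x,y)$, and the integrand therefore collapses to $\pi(\dif y)\,q^w(y,\dif x)$. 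This exact cancellation is the key manoeuvre, and I expect recognising it to be the main obstacle; it rewrites the crossing mass as
\[
(\pi\otimes\ideal)(A_R\times A_R^\complement)=\Pb\big(\norm{V}^2\le R<\norm{U}^2\big)\le\Pb\big(\norm{V}^2\le R\big),
\]
where $U\sim\cal{N}(0,I_d)$ and $V\mid U\sim q^w(U,\cdot)$, so that $U$ has the law of $\pi$ and marginally $V\sim\cal{N}(0,(1+t^2)I_d)$.

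Since $\norm{V}^2\sim(1+t^2)\,\chi^2_d$ and $\pi(A_R)=\Pb(\norm{U}^2\le R)=\Pb(\chi^2_d\le R)$, dividing gives, for every admissible $R$,
\[
\gamma(\ideal)\le\Phi(\ideal)\le\frac{\Pb\big(\chi^2_d\le R/(1+t^2)\big)}{\Pb\big(\chi^2_d\le R\big)}.
\]
Letting $R\downarrow0$ and invoking the small-ball asymptotics $\Pb(\chi^2_d\le x)\sim(x/2)^{d/2}/\Gamma(d/2+1)$ as $x\to0$, the right-hand side converges to $(1+t^2)^{-d/2}=\big(1+\sigma^4/(2+\sigma^2)^2\big)^{-d/2}$, which is the second term; taking the minimum of the two bounds finishes the proof. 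Beyond the Metropolis-ratio cancellation, the only point requiring a little care is justifying the passage $R\downarrow0$ inside the conductance infimum, which is immediate here because the displayed inequality holds for \emph{every} small $R$ and $\gamma(\ideal)$ is a fixed constant.
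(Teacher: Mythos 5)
Your proposal is correct, and the first half coincides with the paper's argument: the paper also proves the bound $\frac{3}{2}\sigma^2/(2+\sigma^2)$ by plugging a linear functional $f(x)=\iprod{v}{x}$ into the variational formula \eqref{eq:spgapvariational}, dropping the acceptance probability, and computing the same Gaussian second moment $\sigma^2\rho+(1-\rho)^2=2t+t^2$. For the dimensional term the paper takes a slightly different route to the same destination. It also uses Cheeger with a small set around the origin, but instead of your detailed-balance cancellation it bounds the averaged acceptance probability pointwise, $\alphid(x)\le e^{\psi(x)}\int q^w(x,\dif y)e^{-\psi(y)}$, evaluates the right-hand side as the moment generating function of a non-central chi-square, and then restricts to a set $A_\epsilon$ on which this bound is at most $(1+\sigma^4/(2+\sigma^2)^2)^{-d/2}e^{\epsilon^2}$, letting $\epsilon\downarrow 0$ at the end. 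These two computations are really the same object in different clothing: your ratio $\Pb(\norm{V}^2\le R)/\Pb(\norm{U}^2\le R)$ converges as $R\downarrow 0$ to the density ratio $\big(\int\pi(y)q^w(y,x)\dif y\big)/\pi(x)$ at $x=0$, which is exactly what the paper's MGF bound computes (note that $\min\{1,r\}\le r$ used globally and $\min\{1,r\}=r$ used on $\{\norm{x}^2<\norm{y}^2\}$ followed by discarding the constraint $\norm{U}^2>R$ yield the same quantity). Your version has the minor advantage of making the detailed-balance structure explicit and avoiding the non-central chi-square computation; the paper's version makes the intermediate quantity $\sup_{x\in A_\epsilon}\alphid(x)$ explicit, which connects to the discussion of minimum acceptance probabilities elsewhere in the text. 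Both limit passages ($R\downarrow0$ versus $\epsilon\downarrow0$) are justified in the same way, since the conductance bound holds for every admissible set in the family.
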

\begin{proof}
    This bound follows from Lemmas \ref{lemma:idealspgapubdiri} and  \ref{lemma:idealspgapubconductance}. 
\end{proof}	
To obtain this result, we combine two methods to derive bounds on the spectral gap. The first is based on its variational representation for reversible chains via its link with the Dirichlet form. 
\begin{lemma} \label{lemma:idealspgapubdiri}
	Assume Assumption \ref{hp:pigaussian}. It holds $\gamma(\ideal) \leq  \frac{3}{2}\cdot \frac{\sigma^2}{2+\sigma^2}$.
\end{lemma}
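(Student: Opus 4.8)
The plan is to use the variational characterization of the spectral gap for reversible kernels: since $\ideal$ is $\pi$-reversible and positive (Lemma~\ref{lemma:positive}), one has $\gamma(\ideal) = \inf_{f\in\mathrm{L}^2_0(\pi)} \cal{E}(\ideal,f)/\norm{f}_{2,\pi}^2$, so exhibiting a single test function already yields an upper bound on $\gamma(\ideal)$. I would take the linear coordinate function $f(x)=x_1$, which lies in $\mathrm{L}^2_0(\pi)$ since $\pi(f)=0$, and satisfies $\norm{f}_{2,\pi}^2 = \mathbb{E}_\pi[x_1^2]=1$ under the standard Gaussian target. It remains to bound the associated Dirichlet form from above.

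The next step is to discard the acceptance probability. Writing the off-diagonal representation $\cal{E}(\ideal,f) = \tfrac{1}{2}\int \pi(\dif x)\, q^w(x,\dif y)\,\alpha_\infty(x,y)\,(f(x)-f(y))^2$ and using $\alpha_\infty\le 1$, it suffices to evaluate the purely Gaussian integral $\tfrac12\int \pi(\dif x)\,q^w(x,\dif y)\,(x_1-y_1)^2$, where $q^w$ is given explicitly by \eqref{eq:qwisgaussian} with $\theta=1/2$, namely $q^w(x,\dif y)=\cal{N}(\dif y; x\cdot\tfrac{2}{2+\sigma^2}, I_d\cdot\tfrac{2\sigma^2}{2+\sigma^2})$.

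This is a routine moment computation. Conditionally on $x$, the coordinate $y_1$ is Gaussian with mean $\tfrac{2}{2+\sigma^2}x_1$ and variance $\tfrac{2\sigma^2}{2+\sigma^2}$, so the conditional expectation of $(x_1-y_1)^2$ equals $\tfrac{2\sigma^2}{2+\sigma^2} + \big(\tfrac{\sigma^2}{2+\sigma^2}\big)^2 x_1^2$; integrating against $\pi$ and using $\mathbb{E}_\pi[x_1^2]=1$ gives
\begin{equation*}
	\cal{E}(\ideal,f) \;\le\; \frac{\sigma^2}{2+\sigma^2} + \frac{\sigma^4}{2(2+\sigma^2)^2} \;=\; \frac{\sigma^2}{2+\sigma^2}\left(1+\frac{\sigma^2}{2(2+\sigma^2)}\right).
\end{equation*}
Finally I would bound the leftover factor using $\tfrac{\sigma^2}{2+\sigma^2}<1$, whence $1+\tfrac{\sigma^2}{2(2+\sigma^2)}\le \tfrac32$, and conclude $\gamma(\ideal)\le \tfrac32\cdot\tfrac{\sigma^2}{2+\sigma^2}$ since $\norm{f}_{2,\pi}^2=1$.

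There is no genuine obstacle in this argument; the two points requiring care are verifying that replacing $\alpha_\infty$ by $1$ genuinely \emph{increases} the Dirichlet form (so that the variational inequality still points in the direction of an upper bound), and checking that the coarse estimate $\tfrac{\sigma^2}{2(2+\sigma^2)}\le\tfrac12$ delivers exactly the stated constant $3/2$ rather than a weaker one.
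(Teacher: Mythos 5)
Your proof is correct and follows essentially the same route as the paper: the variational formula for the spectral gap, a linear test function (the paper uses $f(x)=\iprod{v}{x-\Ebb{X}}$, of which your $f(x)=x_1$ is the case $v=e_1$), dropping the acceptance probability via $\alpha_\infty\le 1$, and the same Gaussian second-moment computation yielding $\frac{\sigma^2}{2+\sigma^2}+\frac{\sigma^4}{2(2+\sigma^2)^2}\le\frac{3}{2}\cdot\frac{\sigma^2}{2+\sigma^2}$. The two points you flag as requiring care are both immediate ($\alpha_\infty\le1$ and $(f(x)-f(y))^2\ge0$ make the first trivial), so no gaps remain.
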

\begin{proof}
    Following Lemma 47 in \cite{alpw2022a}, we begin by noting
	\begin{equation} \label{eq:spgapvariational}
		\gamma(\ideal) = \inf_{f\in\mathrm{L}_0^2(\pi)} \frac{\cal{E}(\ideal,f)}{\norm{f }_{2,\pi}^2}.
	\end{equation}
	Consider $f(x):=\iprod{v}{x-\Ebb{X}}\in\mathrm{L}_0^2(\pi)$, where $v\in\r^{d}$ and the expectation is taken with respect to ~$\pi$;  we can check that $\norm{f }_{2,\pi}^2=|v|^2$. If we let $\rho:=2/(2+\sigma^2)$, then using \eqref{eq:qwisgaussian} we can compute 
    \begin{align*}
	   \cal{E}(\ideal,f) &= \frac{1}{2}\int \pi(\dif x) \ideal(x,\dif y)(f(x)-f(y))^2 \\
       &\leq \frac{1}{2}\int \pi(\dif x) q^w(x,\dif y)\iprod{v}{y-x}^2 \\
        &= \frac{1}{2} \int \pi(\dif x) \cal{N}(\dif y;x\rho,I_d \cdot\sigma^2\rho)\iprod{v}{y-x}^2 \\
        &\leq \frac{1}{2} \int \pi(\dif x) \cal{N}(\dif y;x\rho - x,I_d \cdot \sigma^2\rho)\iprod{v}{y}^2 \\
        &= \frac{1}{2} \int \pi(\dif x)[|v|^2\sigma^2\rho+\iprod{v}{x}^2(1-\rho)^2] = \frac{|v|^2}{2}[\sigma^2\rho + (1-\rho)^2];
	\end{align*}
    we conclude substituting for $\rho$ and applying the variational formula.
\end{proof}
The second method is based instead on the notion of conductance and the Cheeger inequalities from Lemma \ref{lemma:cheeger}.
\begin{lemma} \label{lemma:idealspgapubconductance}
    Assume Assumption \ref{hp:pigaussian}. It holds $\gamma(\ideal) \leq \left(1+\frac{\sigma^4}{(2+\sigma^2)^2}\right)^{-d/2}$.
\end{lemma}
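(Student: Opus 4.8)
The plan is to use Cheeger's inequality (Lemma~\ref{lemma:cheeger}), which gives $\gamma(\ideal)\le\Phi(\ideal)$ since $\ideal$ is reversible and positive (Lemma~\ref{lemma:positive}), and then to exhibit a shrinking family of test sets whose conductance tends to the claimed bound. I would take the centered balls $A_c:=\{x\in\Xspace:|x|^2\le c\}$, which satisfy $\pi(A_c)\le 1/2$ for all small $c>0$, and estimate $\Phi(\ideal)\le (\pi\otimes\ideal)(A_c\times A_c^\complement)/\pi(A_c)$ in the limit $c\to 0$. Writing $\rho:=2/(2+\sigma^2)$ so that $q^w(x,\cdot)=\cal{N}(\cdot;\rho x,\sigma^2\rho\, I_d)$ by \eqref{eq:qwisgaussian}, the key structural observation is that the acceptance ratio $e^{\psi(x)-\psi(y)}$, with $\psi(x)=\frac{\sigma^2}{4(2+\sigma^2)}|x|^2$, is monotone in the norm; hence for any $x\in A_c$ and $y\in A_c^\complement$ one has $|x|^2\le c<|y|^2$, so $\alphid(x,y)=e^{\psi(x)-\psi(y)}\le 1$ exactly, with no truncation.

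With the minimum resolved, I would use the detailed-balance identity $\pi(\dif x)\,q^w(x,\dif y)\,e^{\psi(x)-\psi(y)}=\pi(\dif y)\,q^w(y,\dif x)$ to rewrite the outward accepted flow as a reversed proposal flow,
\[
(\pi\otimes\ideal)(A_c\times A_c^\complement)=\int_{A_c}\int_{A_c^\complement}\pi(\dif y)\,q^w(y,\dif x)=\Pb\big(|W|^2>c,\ |V|^2\le c\big),
\]
where $W\sim\pi$ and $V\mid W\sim q^w(W,\cdot)$. The marginal law of the proposed point $V$ is Gaussian, $\cal{N}(0,\tau^2 I_d)$ with $\tau^2:=\rho^2+\sigma^2\rho=1+\sigma^4/(2+\sigma^2)^2$, and conditionally on $V$ being near the origin the current point $W$ has norm bounded away from $0$, so that $\Pb(|W|^2>c\mid |V|^2\le c)\to 1$ as $c\to 0$.

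Finally I would pass to the small-ball limit. As $c\to 0$, both the numerator and $\pi(A_c)$ behave like the value of the relevant Gaussian density at the origin times the Euclidean volume of the ball of radius $\sqrt c$; the volumes cancel, leaving the ratio of densities at $0$:
\[
\lim_{c\to 0}\frac{(\pi\otimes\ideal)(A_c\times A_c^\complement)}{\pi(A_c)}=\frac{(2\pi\tau^2)^{-d/2}}{(2\pi)^{-d/2}}=\tau^{-d}=\bigg(1+\frac{\sigma^4}{(2+\sigma^2)^2}\bigg)^{-d/2}.
\]
Since $\Phi(\ideal)\le (\pi\otimes\ideal)(A_c\times A_c^\complement)/\pi(A_c)$ for every admissible $c$, letting $c\to 0$ gives $\Phi(\ideal)\le\tau^{-d}$, and Cheeger then yields the claim. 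The main technical obstacle is making the small-ball limit rigorous: one must control the conditional probability $\Pb(|W|^2>c\mid|V|^2\le c)$ and the uniform convergence of the Gaussian densities to their values at the origin over the shrinking ball, so that the volume factors cancel cleanly and no lower-order correction spoils the constant $\tau^{-d}$.
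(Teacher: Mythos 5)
Your argument is correct and arrives at the right constant, but it takes a genuinely different route from the paper's. Both proofs run Cheeger's inequality with shrinking centred sets, but the paper bounds the escape probability from the set by the \emph{full} average acceptance probability $\alphid(x)=\int q^w(x,\dif y)\,\alphid(x,y)$ (dropping the restriction $y\in A^\complement$ entirely), then bounds the min by the exponential and evaluates $\int q^w(x,\dif y)\,e^{-\psi(y)}$ in closed form as a moment-generating function of a non-central chi-square, choosing the set so that the residual $x$-dependent factor is at most $e^{\epsilon^2}$. You instead keep the restriction to $A_c\times A_c^\complement$, use the radial monotonicity of $\psi$ to resolve the min exactly there, and apply detailed balance to convert the outward accepted flow into the reversed proposal flow $\Pb(|W|^2>c,\ |V|^2\le c)$, which you compare to $\pi(A_c)$ via a density-ratio-at-the-origin limit; your marginal variance $\tau^2=\rho^2+\sigma^2\rho=1+\sigma^4/(2+\sigma^2)^2$ checks out, so the constants agree. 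Two remarks. First, the technical obstacle you flag at the end is not actually needed for the upper bound: since you only require $\limsup_{c\to0}$ of the ratio to be at most $\tau^{-d}$, you may simply drop the event $\{|W|^2>c\}$ and use $\Pb(|V|^2\le c)/\pi(A_c)\to\tau^{-d}$, a routine consequence of the continuity and positivity of the two Gaussian densities at the origin; the control of $\Pb(|W|^2>c\mid|V|^2\le c)$ matters only if you want the limit to be exact. (Your justification that $W$ is ``bounded away from $0$'' given $V$ near the origin is not literally true---conditionally $W$ is a nondegenerate Gaussian centred at $0$---but the conditional probability of the shrinking ball still vanishes, so that claim survives anyway.) Second, on what each approach buys: the paper's MGF bound is a one-line closed-form Gaussian integral valid for every $\epsilon>0$, while your detailed-balance argument is arguably more conceptual---the chain cannot leave a small ball at the mode faster than the reversed proposal flows into it---and avoids any upper-bounding of the min, at the price of having to make the small-ball limit rigorous.
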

\begin{proof}
    If we let $\rho:=2/(2+\sigma^2)$, using the expressions \eqref{eq:qwisgaussian} and \eqref{eq:alphainf},
    \begin{align*}
	   \alphid(x)&=\int \cal{N}(\dif y;x\rho;I_d \cdot\sigma^2\rho) \min\left(1,\exp(-\psi(y)+\psi(x))\right) \\
	   &\leq \exp(\psi(x)) \int \cal{N}(\dif y;x\rho;I_d \cdot\sigma^2\rho)	\exp(-\psi(y))  \\
	   &= \bigg(1-2t\bigg)^{-d/2} \cdot \exp\bigg(|x|^2\cdot\bigg\{\frac{\sigma^2}{4(2+\sigma^2)}+\frac{\rho^2t}{1-2t}\bigg\}\bigg).
	\end{align*}
	where $t:=-\sigma^4/4(2+\sigma^2)^2$, as a moment-generating function of a non-central chi-square. Define the sets
    \begin{equation*}
	    A_{\epsilon}:=\bigg\{x:\bigg\{\frac{\sigma^2}{4(2+\sigma^2)}+\frac{\rho^2t}{1-2t}\bigg\}\leq \epsilon^2 \bigg\}.
	\end{equation*}
    By Lemma \ref{lemma:cheeger}, we have
	\begin{align*}
	   \gamma(\ideal)&\leq \Phi(\ideal) \leq \frac{(\pi\otimes \ideal)(A_\epsilon\times A_\epsilon^\complement)}{\pi(A_\epsilon)} \\ &\leq \int \pi\rvert_{A_\epsilon}(\dif x)\alphid(x)\leq \bigg(1+\frac{\sigma^4}{(2+\sigma^2)^2}\bigg)^{-d/2}\exp(\epsilon^2)
	\end{align*}
    and the claim then follows upon taking the $\liminf$ as $\epsilon$ approaches $0$.
\end{proof}
\section{Conclusions}
In this paper, we analysed the Multiple-try Metropolis algorithm using Poincaré inequalities, interpreting it as an auxiliary-variable approach to a resampling-based approximation of an idealized Metropolis-–Hastings algorithm.

We investigated a limitation of Multiple-try approaches: given a Multiple-try Metropolis algorithm with $n-1$ proposals, introducing  one additional proposal can enhance performance by at most a factor of $n/(n-1)$ with respect to the spectral gap. This reinforces the argument that Multiple-try is appropriate (only) in contexts involving parallel computation. For the commonly used `globally-balanced' implementation, we showed that when paired with random walk proposals, the spectral gap diminishes as $n$ grows, underscoring its unsuitability even in parallel computing settings. These findings suggest that globally balanced Multiple-try algorithms are best avoided in general.

We derived comparison results between Multiple-try kernels and the idealized Metropolis–Hastings algorithm, framed in terms of the moments of the importance weights. Tighter bounds arise when these importance weights admit finite moments of high order. For Gaussian target distributions, we derived explicit spectral gap estimates for the idealized Metropolis–Hastings algorithm using locally-balanced weights \cite{gagnon2023,chang2022}. Notably, the spectral gap of the algorithm scales as $d^{-1/2}$, which is better than the $d^{-1}$ scaling of plain Random Walk Metropolis, and suggests that in the large $n$ limit, the Multiple-try strategy employing locally-balanced weights might achieve good performance in higher dimensional problems compared to more standard algorithms. Combining the comparison results and the spectral estimates, we obtained non-asymptotic convergence bounds for Multiple-try Metropolis under Gaussian targets, expanding on prior work limited to independence samplers \cite{yang2023}. As reflected in the paucity of available results in the literature, the Multiple-try algorithm is particularly complicated to analyse due its complex structural properties, and the comparison approach described in this paper is one approach to tackle such difficulties. On the other hand, the Gaussian assumptions is clearly restrictive. We expect a similar results to hold, with suitable adaptations, for more general strongly log-concave and gradient Lipschitz target measures. However, in this work, we make use in several places of the Gaussian structure of $\pi$, so the generalisation does not appear immediate regardless. Even in the Gaussian case, the rates of convergence which we obtain are slower-than-exponential; given what is available for the basic Random Walk Metropolis in this case, one expects that there should be room for further refinements of these results. 

\printbibliography

@article{gagnon2023,
  title={Improving multiple-try Metropolis with local balancing},
  author={Gagnon, Philippe and Maire, Florian and Zanella, Giacomo},
  journal={Journal of Machine Learning Research},
  volume={24},
  number={248},
  pages={1--59},
  year={2023}
}

@article{martino2017,
  title={Issues in the multiple try Metropolis mixing},
  author={Martino, Luca and Louzada, Francisco},
  journal={Computational Statistics},
  volume={32},
  pages={239--252},
  year={2017},
}

@article{alpw2022a,
  title={Explicit convergence bounds for Metropolis Markov chains: isoperimetry, spectral gaps and profiles},
  author={Andrieu, Christophe and Lee, Anthony and Power, Sam and Wang, Andi Q},
  journal={The Annals of Applied Probability},
  year={2024},
  volume={34},
  number={4},
  pages={4022--4071}
}

@article{yang2023,
  title={Convergence rate of multiple-try Metropolis independent sampler},
  author={Yang, Xiaodong and Liu, Jun S},
  journal={Statistics and Computing},
  volume={33},
  number={4},
  pages={79},
  year={2023},
  publisher={Springer}
}

@article{alpw2022b,
  title={Comparison of Markov chains via weak Poincar{\'e} inequalities with application to pseudo-marginal MCMC},
  author={Andrieu, Christophe and Lee, Anthony and Power, Sam and Wang, Andi Q},
  journal={The Annals of Statistics},
  volume={50},
  number={6},
  pages={3592--3618},
  year={2022},
}

@article{chang2022,
  title={Rapidly mixing multiple-try Metropolis algorithms for model selection problems},
  author={Chang, Hyunwoong and Lee, Changwoo and Luo, Zhao Tang and Sang, Huiyan and Zhou, Quan},
  journal={Advances in Neural Information Processing Systems},
  volume={35},
  pages={25842--25855},
  year={2022}
}

@article{liu2000,
	title={The multiple-try method and local optimization in Metropolis sampling},
	author={Liu, Jun S and Liang, Faming and Wong, Wing Hung},
	journal={Journal of the American Statistical Association},
	volume={95},
	pages={121--134},
	year={2000}
}

@article{alpw2022c,
	title={Poincar{\'e} inequalities for Markov chains: a meeting with Cheeger, Lyapunov and Metropolis},
	author={Andrieu, Christophe and Lee, Anthony and Power, Sam and Wang, Andi Q},
	journal =	 {arXiv:2208.05239},
	year={2022}
}

@article{zanella2020,
  title={Informed proposals for local MCMC in discrete spaces},
  author={Zanella, Giacomo},
  journal={Journal of the American Statistical Association},
  year={2020},
  pages={852--865},
  volume={115}
}

@article{metropolis1953,
	title={Equation of state calculations by fast computing machines},
	author={Metropolis, Nicholas and Rosenbluth, Arianna W and Rosenbluth, Marshall N and Teller, Augusta H and Teller, Edward},
	journal={Journal of Chemical Physics},
	volume={21},
	pages={1087--1092},
	year={1953},
}

@article{hastings1970,
	title={Monte {C}arlo sampling methods using {M}arkov chains and their applications},
	author={Hastings, Keith W},
	journal={Biometrika},
	volume={57},
	pages={97--109},
	year={1970},
}

@book{frenkel1996,
	title={Understanding molecular simulation: from algorithms to applications},
	author={Frenkel, Daan and Smit, Berend and Ratner, Mark A},
	year={1996},
	publisher={Academic Press}
}

@article{gelman1997,
	title={Weak convergence and optimal scaling of random walk Metropolis algorithms},
	author={Gelman, Andrew and Gilks, Walter R and Roberts, Gareth O},
	journal={The Annals of Applied Probability},
	volume={7},
	number={1},
	pages={110--120},
	year={1997},
	publisher={Institute of Mathematical Statistics}
}

@article{bedard2012,
	title={Scaling analysis of multiple-try MCMC methods},
	author={B{\'e}dard, Myl{\`e}ne and Douc, Randal and Moulines, Eric},
	journal={Stochastic Processes and their Applications},
	volume={122},
	number={3},
	pages={758--786},
	year={2012},
	publisher={Elsevier}
}

@article{casarin2013,
	title={Interacting multiple try algorithms with different proposal distributions},
	author={Casarin, Roberto and Craiu, Radu and Leisen, Fabrizio},
	journal={Statistics and Computing},
	volume={23},
	pages={185--200},
	year={2013},
	publisher={Springer}
}

@article{pozza2024,
	title={On the fundamental limitations of multiproposal Markov chain Monte Carlo algorithms},
	author={Pozza, Francesco and Zanella, Giacomo},
	journal={arXiv:2410.23174},
	year={2024}
}

@article{bornn2017,
	title={The use of a single pseudo-sample in approximate Bayesian computation},
	author={Bornn, Luke and Pillai, Natesh S and Smith, Aaron and Woodard, Dawn},
	journal={Statistics and Computing},
	volume={27},
	pages={583--590},
	year={2017},
	publisher={Springer}
}

@article{martino2018,
	title={A review of multiple try MCMC algorithms for signal processing},
	author={Martino, Luca},
	journal={Digital Signal Processing},
	volume={75},
	pages={134--152},
	year={2018},
	publisher={Elsevier}
}

@article{craiu2007,
	title={Acceleration of the multiple-try Metropolis algorithm using antithetic and stratified sampling},
	author={Craiu, Radu V and Lemieux, Christiane},
	journal={Statistics and computing},
	volume={17},
	pages={109--120},
	year={2007},
	publisher={Springer}
}

@inproceedings{pandolfi2010,
	title={A generalization of the Multiple-try Metropolis algorithm for Bayesian estimation and model selection},
	author={Pandolfi, Silvia and Bartolucci, Francesco and Friel, Nial},
	booktitle={Proceedings of the 13th International Conference on Artificial Intelligence and Statistics},
	pages={581--588},
	year={2010},
}

@article{Doucet2015,
  title={Efficient implementation of Markov chain Monte Carlo when using an unbiased likelihood estimator},
  author={Doucet, Arnaud and Pitt, Michael K and Deligiannidis, George and Kohn, Robert},
  journal={Biometrika},
  volume={102},
  number={2},
  pages={295--313},
  year={2015},
  publisher={Oxford University Press}
}

@article{Lovasz1999,
  title={Hit-and-run mixes fast},
  author={Lov{\'a}sz, L{\'a}szl{\'o}},
  journal={Mathematical programming},
  volume={86},
  pages={443--461},
  year={1999},
  publisher={Springer}
}

@article{Vempala2005,
  title={Geometric random walks: a survey},
  author={Vempala, Santosh},
  journal={Combinatorial and Computational Geometry},
  volume={52},
  number={573-612},
  pages={2},
  year={2005},
  publisher={Cambridge University Press Cambridge}
}

@article{Chen2020,
  title={Fast mixing of Metropolized Hamiltonian Monte Carlo: Benefits of multi-step gradients},
  author={Chen, Yuansi and Dwivedi, Raaz and Wainwright, Martin J and Yu, Bin},
  journal={Journal of Machine Learning Research},
  volume={21},
  number={92},
  pages={1--72},
  year={2020}
}
\begin{appendix}
\section{Proofs for Section \ref{sec:comparisons}} 

\subsection{Useful expressions } \label{sec:usefulexpressions}
The proofs involving the Multiple-try Metropolis kernels (Algorithm~\ref{alg:mtm}) are necessarily heavy in notation due the different mechanisms involved. In this section we collect some useful expressions to follow the upcoming proofs.
In all the expressions below, we let $(Y^{[n]},Z^{[n]})|(X=x,Y_1=y)\sim\delta_{y}(\dif y_1)\delta_{x}(\dif z_1)\prod_{i=2}^{n} q(x,\dif y_i)q(y,\dif z_i)$ and we denote
\begin{align*}
	\varpi(x,y)&:=\frac{\dif q^w(x,\cdot)}{\dif q(x,\cdot)}(y)=\frac{w(x,y)}{(qw)(x)},\\
	(\widetilde{qw})_n(x,y)&:=\Ebb{(\widehat{qw}_n)(x,Y^{[n]})^{-1}|Y_1=y}^{-1},\\
	(\widehat{qw}_n)(x,Y^{[n]})&:=n^{-1}\sum_{i=1}^n w(x,Y_i).
\end{align*}
Because both the semi-ideal kernel $\ir{n}$ and the ideal kernel $\ideal$ are simply Metropolis--Hastings chains with different proposals, $q^w$ \eqref{eq:qw} and  $ \widetilde{q}^w_n$ \eqref{eq:qwn}, respectively, their transition probabilities are readily written as
\begin{align*}
	&\ir{n}(x,A\backslash \{x\}) = \int_{A\backslash \{x\}} \frac{q(x,\dif y)\cdot w(x,y)}{(\widetilde{qw}_n)(x,y)} \cdot \alphir{n}(x,y), \quad \text{where} \\
    &\alphir{n}(x,y):=\min\bigg\{1,\frac{\pi(y)\cdot q(y,x)\cdot w(y,x)\cdot (\widetilde{qw}_n)(x,y)}{\pi(x)\cdot q(x,y)\cdot w(x,y)\cdot (\widetilde{qw}_n)(y,x)}\bigg\}. &&
\end{align*}
\begin{align*}
    &\ideal(x,A\backslash \{x\}) = \int_{A\backslash \{x\}} \frac{q(x,\dif y)\cdot w(x,y)}{(qw)(x)} \cdot \alphid(x,y), \quad \text{where} \\
    &\alphid(x,y):=\min\bigg\{1,\frac{\pi(y)\cdot q(y,x)\cdot w(y,x)\cdot (qw)(x)}{\pi(x)\cdot q(x,y)\cdot w(x,y)\cdot (qw)(y)}\bigg\}. &&
\end{align*}
The Multiple-try chain is, however, not a straightforward Metropolis--Hastings algorithm. To derive its transition probabilities $\mtm{n}$, recall that the Multiple-try accepts a proposal $Y_I=Y$ with probability
\begin{equation*}
	\alphmtm{n}(x,Y^{[n]},Z^{[n]}):=\min\bigg(1,\frac{\pi(Y)\cdot q(Y,x)\cdot w(Y,x) \cdot (\widehat{qw}_n)(x,Y^{[n]})}{\pi(x)\cdot q(x,Y)\cdot w(x,Y) \cdot (\widehat{qw}_n)(Y,Z^{[n]})}\bigg),
\end{equation*}
Then, since each proposal is drawn i.i.d., for an independently drawn uniform random variable $U$, one has
\begin{align*}
	&\mtm{n}(x,A\backslash \{x\}) \\
    &=  \sum_{i=1}^n\mathbb{E}_x\left [\mathbf{1}{\left\{Y_i\in A\backslash \{x\}\right \}}\cdot \mathbf{1}{\{I=i\}} \cdot\mathbf{1}{\{Z_i=x\}} \cdot \mathbf{1}{\{U\le  \alphmtm{n}(x,Y^{[n]},Z^{[n]})\}}\right ] \\
    &= n \cdot \mathbb{E}_x\left [\mathbf{1}{\left \{Y_1\in A\backslash \{x\}\right \}}\cdot \mathbf{1}{\{I=1\}}\cdot \mathbf{1}{\{Z_1=x\}}\cdot \alphmtm{n}(x,Y^{[n]},Z^{[n]}) \right ]  \\
	&= n  \int \prod_{i=1}^n q(x,\dif y_i) \cdot \frac{w(x,y_1)\mathbf{1}{\left \{y_1\in A\backslash \{x\}\right \}}}{(\widehat{qw}_n)(x,y^{[n]})} \int \delta_{x}(\dif z_1)  \prod_{i=2}^n q(y_1,\dif z_i)\cdot\alphmtm{n}(x,y^{[n]},z^{[n]})
\end{align*}
Therefore, 
\begin{align*} 
	&\mtm{n}(x,A\backslash \{x\}) \\
    &= \int_{A\backslash \{x\}} q(x,\dif y)\cdot w(x,y) \cdot 
	 \mathbb{E}\left[\frac{1}{(\widehat{qw}_n)(x,Y^{[n]})}\cdot\alphmtm{n}(x,Y^{[n]},Z^{[n]}) \middle | Z_1=x,Y_1=y\right]. 
\end{align*}
\subsection{Proof of Proposition \ref{prop:spidifferentn}} \label{subsec:pf_MTM_neg}
By definition of $(\widetilde{qw}_n)(x,y)$, we can write $\mtm{n}(x,A\backslash \{x\})$ as 
\begin{align*}
    &\int_{A\backslash \{x\}} q(x,\dif y)\cdot w(x,y) \cdot  \\
    &\mathbb{E}\left[\min\bigg\{ \frac{n}{\sum_{i=1}^n w(x,Y_i)},\frac{\pi(y)\cdot q(y,x) \cdot w(y,x) \cdot n}{\pi(x)\cdot q(x,y)\cdot w(x,y)\cdot \sum_{i=1}^n w(y,Z_i)}\bigg\} \middle | Z_1=x,Y_1=y\right].
\end{align*}
However, since the weights are non-negative, $\sum_{i=1}^n w_i \geq \sum_{i=1}^{n-1} w_i$, hence
\begin{align*}
	&\min\bigg\{ \frac{n}{\sum_{i=1}^n w(x,Y_i)},\frac{\pi(y)\cdot q(y,x) \cdot w(y,x) \cdot n}{\pi(x)\cdot q(x,y)\cdot w(x,y)\cdot \sum_{i=1}^n w(y,Z_i)}\bigg\} \\
	&\leq \frac{n}{n-1} \min\bigg\{ \frac{n-1}{\sum_{i=1}^{n-1} w(x,Y_i)},\frac{\pi(y)\cdot q(y,x) \cdot w(y,x) \cdot (n-1)}{\pi(x)\cdot q(x,y)\cdot w(x,y)\cdot \sum_{i=1}^{n-1} w(y,Z_i)}\bigg\}
\end{align*}
and the result follows upon taking expectations with respect to~$(Y^{[n]},Z^{[n]})|(x,y)$, integrating with respect to~$q(x,\dif y)\cdot w(x,y)$ and finally applying Lemma \ref{lemma:wpifromdomination}. A similar proof works for the $\widetilde{P}_n$ kernels.
\qed

\subsection{Proofs of the comparison inequalities} 

\subsubsection{Proof of Lemma \ref{lemma:idealirspi}} \label{app:idealirspi}

Using the inequality $\min\{1,a\cdot b\}\geq \min\{1,a\}\cdot\min\{1,b\}$ for $a,b\geq 0$ we write
\begin{align*}
	\alphir{n}(x,y)&=\min\bigg\{ 1,\frac{\pi(y)\cdot q(y,x) \cdot w(y,x) \cdot (qw)(x)}{\pi(x)\cdot q(x,y) \cdot w(x,y) \cdot (qw)(y)}\frac{(\widetilde{qw}_n)(x,y)\cdot (qw)(y)}{(\widetilde{qw}_n)(y,x)\cdot (qw)(x)}\bigg\} \\
	&\geq \alphid(x,y) \min\bigg\{1,\frac{(\widetilde{qw}_n)(x,y)\cdot (qw)(y)}{(\widetilde{qw}_n)(y,x)\cdot (qw)(x)}\bigg\}
\end{align*}
from which it follows that, off-diagonal,
\begin{align} \label{eq:idealirspitempline}
	\ir{n}(x,\dif y) &\geq \frac{q(x,\dif y)\cdot w(x,y)}{(qw)(x)} \cdot \alphid(x,y) \cdot \min\bigg\{ \frac{(qw)(x)}{(\widetilde{qw}_n)(x,y)},\frac{(qw)(y)}{(\widetilde{qw}_n)(y,x)}\bigg\} \nonumber \\
	&= \ideal(x,\dif y) \cdot \min\bigg\{ \frac{(qw)(x)}{(\widetilde{qw}_n)(x,y)},\frac{(qw)(y)}{(\widetilde{qw}_n)(y,x)}\bigg\} 
\end{align}
The claim then follows by the definition of $(\widetilde{qw}_n)(x,y)$ and the uniform bound.
\subsubsection{Proof of Proposition \ref{pr:idealirwpi}} \label{app:idealirwpi}
Equation \eqref{eq:idealirspitempline} shows that for any $(x,A)\in\Xspace\times\mathcal{F}$, it holds that $\ir{n}(x,A \backslash \{x\}) \geq \int_{A \backslash \{x\}} \eta_n(x,y)\ideal(x,\dif y)$, with
\begin{equation*}
	\eta_n(x,y):=\min\bigg\{ \frac{(qw)(x)}{(\widetilde{qw}_n)(x,y)},\frac{(qw)(y)}{(\widetilde{qw}_n)(y,x)}\bigg\}.
\end{equation*} 
Lemma \ref{lemma:wpifromdomination} then shows that
\begin{equation} \label{eq:WPIIRID1}
	\cal{E}(\ideal,f)\leq s\cal{E}(\ir{n},f) + 2^{-1}\oscnorm{f}^2 (\pi\otimes \ideal)(A_n(s)^\complement \cap \{X\neq Y\}) 
\end{equation}
where $A_n(s):=\{(x,y)\in\Xspace^2: \eta_n(x,y) > 1/s \}$. Now, by the union bound, symmetry, and finally by the fact that $\alpha_\infty\leq 1$, we have
\begin{align*}
	&(\pi\otimes \ideal)(A_n(s)^\complement \cap \{X\neq Y\}) \\
	&\leq 	(\pi\otimes \ideal)\bigg( \bigg\{\frac{(qw)(X)}{(\widetilde{qw}_n)(X,Y)} <\frac{1}{s} \bigg\} \cap \{X\neq Y\}\bigg)  \\ &+ (\pi\otimes \ideal)\bigg(\bigg\{\frac{(qw)(Y)}{(\widetilde{qw}_n)(Y,X)} <\frac{1}{s} \bigg\}\cap \{X\neq Y\}\bigg) \\
	&\leq 2\cdot (\pi\otimes \ideal)\bigg( \bigg\{ \frac{(qw)(X)}{(\widetilde{qw}_n)(X,Y)} <\frac{1}{s}\bigg\}\cap \{X\neq Y\}\bigg) \\
	&\leq 2 \cdot (\pi\otimes q^w) \bigg( \bigg\{ \frac{(qw)(X)}{(\widetilde{qw}_n)(X,Y)} <\frac{1}{s}\bigg\} \bigg) \\
    &= 2 \cdot (\pi\otimes q^w) \bigg( \mathbb{E}\bigg[\frac{1}{n^{-1}\sum_{i=1}^n \varpi(X,Y_i)} \mid  Y_1=Y\bigg] < \frac{1}{s}\bigg) 
\end{align*}
where the expectation is taken with respect to $Y^{[n]}|Y_1=y\sim\delta_{y}(\dif y_1)\prod_{i=2}^{n} q(x,\dif y_i)$.
Moreover, for any $p\in[1,\infty)$, by Markov's and Jensen's inequalities (twice), we write
\begin{align*}
	&(\pi\otimes \ideal)(A_n(s)^\complement \cap \{X\neq Y\}) \\
    &\leq s^{-p}\cdot (\pi\otimes q^w) \bigg( \mathbb{E}\bigg[\frac{1}{n^{-1}\sum_{i=1}^n \varpi(X,Y_i)} \mid  Y_1=Y\bigg]^{-p}\bigg) \\
    &\leq s^{-p}\cdot (\pi\otimes q^w) \bigg(\mathbb{E}\bigg[\bigg\{\frac{1}{n}\sum_{i=1}^n\varpi(X,Y_i)\bigg\}^p\mid Y_1=Y \bigg]\bigg) \\
    &\leq s^{-p}\cdot (\pi\otimes q^w) \bigg( \frac{1}{n}\sum_{i=1}^n \mathbb{E}[\varpi(X,Y_i)^p\mid Y_1=Y] \bigg)\\
    &= s^{-p}\cdot\bigg(\frac{1}{n}(\pi\otimes q^w)(\varpi(X,Y)^p) + \frac{1}{n}\sum_{i=2}^n (\pi\otimes q) (\varpi(X,Y_i)^p) \bigg),
\end{align*}
and the final expression follows by the definition of $\varpi$ and $q^w$.
\qed

\subsubsection{Proof of Lemma \ref{lemma:irmtmspi}} \label{app:irmtmspi}

Let $(Y^{[n]},Z^{[n]})|(X=x,Y_1=y)\sim\delta_{y}(\dif y_1)\delta_{x}(\dif z_1)\prod_{i=2}^{n} q(x,\dif y_i)q(y,\dif z_i)$. Let
\begin{equation*}
	\zeta_n(x,y):=\Ebb{ \min\left(\frac{(\widetilde{qw})_n(x,y)}{(\widehat{qw}_n)(x,Y^{[n]})}, \frac{(\widetilde{qw})_n(y,x)}{(\widehat{qw}_n)(y,Z^{[n]})} \right) \middle |  Y_1=y,Z_1=x }
\end{equation*} 
Having in mind the expressions in Section \ref{sec:usefulexpressions}, and similarly to the start of the proof in Section \ref{app:idealirspi}, we can write for any $x\in\Xspace$
\begin{align} \label{eq:irmtmspitempline}
	& \mtm{n}(x,A\backslash \{x\}))  \nonumber  \\
	&\geq \int_{A\backslash \{x\}} \frac{q(x,y)\cdot w(x,y)}{(\widetilde{qw})_n(x,y)}\cdot\alphir{n}(x,y) \cdot \nonumber \\
	&\cdot \Ebb{\frac{(\widetilde{qw})_n(x,y)}{(\widehat{qw}_n)(x,Y^{[n]})}\cdot \min\left(1,\frac{(\widehat{qw}_n)(x,Y^{[n]})}{(\widetilde{qw})_n(x,y)} \cdot\frac{(\widetilde{qw})_n(y,x)}{(\widehat{qw}_n)(y,Z^{[n]})} \right) \middle |  Y_1=y,Z_1=x } \nonumber \\	
	&= \int_{A \backslash \{x\}} \zeta_n(x,y)\cdot \ir{n}(x,\dif y) \\
	&\geq |\varpi|_\infty^{-2}\cdot|\varpi^{-1}|_\infty^{-2}\cdot \ir{n}(x,A\backslash \{x\})  .\nonumber 
\end{align}
\qed
\subsubsection{Proof of Proposition \ref{lemma:irmtmwpi}} \label{app:irmtmwpi}
Equation \eqref{eq:irmtmspitempline} shows that for any $(x,A)\in\Xspace\times\mathcal{F}$, it holds that $\mtm{n}(x,A \backslash \{x\}) \geq \int_{A \backslash \{x\}} \zeta_n(x,y)\ir{n}(x,\dif y)$.
Let $(X,Y)\sim\pi\otimes\ir{n}$. On the one hand, Lemma \ref{lemma:wpifromdomination} shows that
\begin{equation} \label{eq:WPIIRID1}
	\cal{E}(\ir{n},f)\leq s\cdot \cal{E}(\mtm{n},f) + 2^{-1}\oscnorm{f}^2 (\pi\otimes \ir{n})(B_n(s)^\complement \cap \{X\neq Y\}) 
\end{equation}
where $B_n(s):=\{(x,y)\in\Xspace^2: \zeta_n(x,y) > 1/s \}$. On the other hand, by Markov's inequality,
\begin{align*}
	(\pi\otimes \ir{n})(B_n(s)^\complement \cap \{X\neq Y\}) \leq s^{-p}\cdot (\pi\otimes \ir{n})(\zeta_n(X,Y)^{-p}\cap \{X\neq Y\}).
\end{align*}
At this point, using the elementary relation $\min\{a,b\}^{-1} = \max \left\{ a^{-1},b^{-1} \right\}$, we obtain that
\begin{align*} 
	&(\pi\otimes \ir{n})(B_n(s)^\complement \cap \{X\neq Y\}) \nonumber \\
	&\leq s^{-p}\cdot (\pi\otimes\ir{n})\bigg(\mathbb{E}\left[\max\bigg\{\frac{(\widehat{qw}_n)(Y,Z^{[n]})}{(\widetilde{qw})_n(y,x)},\frac{(\widehat{qw}_n)(X,Y^{[n]})}{(\widetilde{qw})_n(x,y)}	
	\bigg\}\middle | Y_1=Y, Z_1=X\right]^p \bigg)  \nonumber \\
	&\leq  s^{-p} \cdot (\pi\otimes\ir{n})\bigg(\mathbb{E}\left [\frac{(\widehat{qw}_n)(Y,Z^{[n]})}{(\widetilde{qw})_n(Y,X)} + \frac{(\widehat{qw}_n)(X,Y^{[n]})}{(\widetilde{qw})_n(X,Y)}	 \middle |  Y_1=Y, Z_1=X\right]^p\bigg) \nonumber \\
	&\leq s^{-p} \cdot 2^{p-1}\cdot (\pi\otimes\ir{n})\bigg(\mathbb{E}\left[\frac{(\widehat{qw}_n)(X,Y^{[n]})}{(\widetilde{qw})_n(X,Y)}\middle | Y_1=Y\right ]^p + \mathbb{E}\left [\frac{(\widehat{qw}_n)(Y,Z^{[n]})}{(\widetilde{qw})_n(Y,X)}\middle | Z_1=X\right ]^p \bigg) 
\end{align*}
where in the last line we used the fact $(a+b)^p\leq 2^{p-1}(a^p+b^p)$. Because, under $\pi\otimes\ir{n}$, $Y$ is marginally distributed as $\pi$, the two summands above have identical expectations under $\pi\otimes\ir{n}$, and we thus have
\begin{align} \label{eq:inteqns0s2}
(\pi\otimes \ir{n})(B_n(s)^\complement \cap \{X\neq Y\}) 
 &\leq s^{-p} \cdot 2^{p}\cdot (\pi\otimes\ir{n})\bigg(\mathbb{E}\bigg[\frac{(\widehat{qw}_n)(X,Y^{[n]})}{(\widetilde{qw})_n(X,Y)}\mid Y_1=Y\bigg]^p \bigg) \nonumber \\
 &\leq s^{-p} \cdot 2^{p}\cdot (\pi\otimes\ir{n})\bigg(\mathbb{E}\bigg[ \bigg(\frac{(\widehat{qw}_n)(X,Y^{[n]})}{(\widetilde{qw})_n(X,Y)} \bigg)^{p}\mid Y_1=Y\bigg] \bigg).
\end{align}
Now we aim to control the expectation inside the right hand side above. We start with the denominator. Since, by Jensen's inequality, we have $\frac{n}{\sum_{i=1}^n w(x,Y_i)}\leq \frac{1}{n}\sum_{i=1}^n w(x,Y_i)^{-1}$, we can estimate
\begin{align*}
	(\widetilde{qw}_n)(x,y)^{-p} &= \mathbb{E}\left [\frac{n}{\sum_{i=1}^n w(x,Y_i)}\middle | Y_1=y \right ]^p 
    \leq \mathbb{E}\left [\frac{1}{n}\sum_{i=1}^n w(x,Y_i)^{-1}\middle | Y_1=y\right ]^p \nonumber \\
    &\leq \mathbb{E}\left [\frac{1}{n}\sum_{i=1}^n w(x,Y_i)^{-p}\middle | Y_1=y\right ] \nonumber \\
    &= \frac{1}{n}\cdot w(x,y)^{-p}+\frac{n-1}{n}\cdot \int w(x,y')^{-p}q(x,\dif y'),
\end{align*}
For the numerator, instead, just note that one application of Jensen's inequality yields
\begin{align*}
    \mathbb{E}[(\widehat{qw}_n)(x,Y^{[n]})^p] \leq \frac{1}{n}\cdot w(x,y)^p + \frac{n-1}{n}\int w(x,y')^pq(x,\dif y').
\end{align*}
Recalling that $\varpi(x,y):=w(x,y)/(qw)(x)$, using the above bounds and multiplying and dividing by $(qw)(x)$ yields
\begin{align*} 
	&\mathbb{E}\left [ \bigg(\frac{(\widehat{qw}_n)(x,Y^{[n]})}{(\widetilde{qw})_n(x,y)} \bigg)^{p}\middle | Y_1=y\right ] \\
	&\leq \left(  \frac{1}{n}\cdot \varpi(x,y)^p + \frac{n-1}{n}\int \varpi(x,y')^pq(x,\dif y') \right)\cdot \nonumber \\
    &\cdot\left(\frac{1}{n}\cdot \varpi(x,y)^{-p}+\frac{n-1}{n}\cdot \int \varpi(x,y')^{-p}q(x,\dif y')\right) \nonumber \\
    & \leq \frac{1}{2}\left(  \frac{1}{n}\cdot \varpi(x,y)^p + \frac{n-1}{n}\int \varpi(x,y)^pq(x,\dif y') \right)^2 \nonumber \\
    &+ \frac{1}{2}\left(\frac{1}{n}\cdot \varpi(x,y)^{-p}+\frac{n-1}{n}\cdot \int \varpi(x,y')^{-p}q(x,\dif y')\right)^2 \nonumber \\
	&\leq  \frac{1}{n^2}\cdot \left(  \varpi(x,y)^{2p} +  \varpi(x,y)^{-2p} \right) \nonumber \\
    &+ \frac{(n-1)^2}{n^2} \left( \int \varpi(x,y')^{2p} q(x,\dif y') + \int \varpi(x,y')^{-2p}q(x,\dif y')   \right)
\end{align*}
where the penultimate inequality follows by Young's inequality $ab\leq a^2/2+b^2/2$ and the final inequality by $(a+b)^2\leq 2a^2+2b^2$ and Jensen's inequality.
Upon integration in $\pi\otimes\ir{n}$ we thus get 
\begin{align} \label{eq:sa92msd}
	&(\pi\otimes\ir{n})\left(\Ebb{\frac{(\widehat{q w}_n)(X,Y^{[n]})}{(\widetilde{qw}_n)(Y,X)}\middle | Z_1=X}^p  \right) \nonumber \\
	&\leq  \frac{2}{n^2}\cdot \left( \pi\otimes\ir{n}(\varpi(X,Y)^{2p})+ \pi\otimes\ir{n}(\varpi(X,Y)^{-2p}) \right) \nonumber \\
    &+ \frac{2(n-1)^2}{n^2} \left( (\pi\otimes q) (\varpi(X,Y)^{2p}) + (\pi\otimes q) (\varpi(X,Y)^{-2p}) \right).
\end{align}
The first two summands  multiplied by $n^{-2}$ are negligible. In fact,
with the very crude bound $(\widetilde{q \varpi}_n)(x,y)^{-1}\leq n \cdot \varpi(x,y)^{-1}$, and by observing that $\varpi(x,x)=(qw)(x)^{-1}$,
\begin{align*}
	(\pi\otimes\ir{n})(\varpi(X,Y)^{2p}) 
	&\leq \int  \frac{\varpi(x,y)^{2p}w(x,y)}{(\widetilde{q w}_n)(x,y)}\pi(\dif x)q(x,\dif y) + \int (qw)(x)^{-1}\pi(\dif x) \\
    &= \int \frac{\varpi(x,y)^{2p+1}}{(\widetilde{q \varpi}_n)(x,y)}\pi(\dif x)q(x,\dif y) + \int (qw)(x)^{-1}\pi(\dif x) \\
	&\leq n \cdot (\pi\otimes q)(\varpi(X,Y)^{2p})  + \int (qw)(x)^{-1}\pi(\dif x)  \\
	&\leq n \cdot (\pi\otimes q)(\varpi(X,Y)^{2p})  + (\pi\otimes q)(w(X,Y)^{-1}) = K_2 \cdot n^{-1}
\end{align*}
where the last bound on the rightmost term follows from Jensen's inequality $(qw)(x)^{-1}\leq \int w(x,y)^{-1}\cdot q(\dif y)$. One can do analogous computation for $(\pi\otimes\ir{n})(\varpi(X,Y)^{-2p})$. Combining these estimates with \eqref{eq:sa92msd} shows
\begin{align*}
 &(\pi\otimes\ir{n})\bigg(\mathbb{E}\left [\frac{(\widehat{qw}_n)(X,Y^{[n]})}{(\widetilde{qw})_n(X,Y)}\middle | Y_1=Y\right ]^p \bigg) \\
 &\leq K_2\cdot n^{-1}+  \frac{2(n-1)^2}{n^2} \left( (\pi\otimes q) (\varpi(X,Y)^{2p}) + (\pi\otimes q) (\varpi(X,Y)^{-2p}) \right) \\
 &\leq  K_2\cdot n^{-1} +K_3\cdot n^{-2}+  2\cdot(M_\varpi(2p) +M_\varpi(2p))
\end{align*}
for a rolling constants $K_2,K_3$, and the final bound follows upon combining this with \eqref{eq:inteqns0s2}.

\qed

\section{Proofs for Section \ref{sec:idealanalysis}}
	
\subsection{Proof of Proposition \ref{prop:spectralgapgbmtm}} \label{app:spectralgapgbmtm}
	
Fix an arbitrary $\varepsilon>0$. By the definition of conductance, for any arbitrary set $A\in\cal{F}$, we can write
\begin{equation*}
    \Phi(\mtm{n}) \leq \frac{(\pi\otimes \mtm{n})(A\times A^\complement)}{\pi(A)} = \int \mtm{n}(x,A^\complement)\pi\rvert_{A}(\dif x)  \leq \int \alphmtm{n}(x) \pi\rvert_{ A}(\dif x),
\end{equation*}
where $\pi\rvert_{A}$ denotes the restriction of $\pi$ to $A$. Write $\alphmtm{n}(x)\leq \alphid(x) + |\alphid(x)-\alphmtm{n}(x)|$. 
By \cite[Proposition 3]{gagnon2023}, $\alphid(x)\leq \exp\bigg(-|x|^2\cdot c_{1,\sigma} \bigg)\cdot c_{2,\sigma}$ with
\begin{equation*}
     c_{1,\sigma}:=\frac{\sigma^2 }{2((1+\sigma^2)^2-\sigma^2)}, \quad c_{2,\sigma}=\bigg(1-\frac{\sigma^2}{(1+\sigma^2)^2}\bigg)^{-d/2}
\end{equation*}
Let $R_\varepsilon>0$ be such that $\exp(-|x|^2\cdot c_{1,\sigma})\cdot c_{2,\sigma}<\varepsilon$ for all $|x|>R_\varepsilon$. If we then define the set $A_{R_\varepsilon}:=\{x\in\mathsf{E}:|x|>R_\varepsilon \}$ from the previous estimates we have
\begin{equation*}
    \Phi(\mtm{n}) \leq \varepsilon  + \int  |\alphid(x)-\alphmtm{n}(x)|\pi\rvert_{A_{R_\varepsilon}}(\dif x).
\end{equation*}
and because, as we show below, $|\alphid(x)-\alphmtm{n}(x)|\rightarrow 0$ for all $x\in\mathsf{E}$, the bounded convergence theorem shows that the rightmost integral can be made lower than $\varepsilon$ for $n$ sufficiently large. It follows that $\Phi(\mtm{n}) \leq 2\cdot \varepsilon$ for $n$ large. We are only left to show  $|\alphid(x)-\alphmtm{n}(x)|\rightarrow 0$ for all $x\in\mathsf{E}$. We note that $(\widehat{qw}_n)(x,Y^{[n]}) \rightarrow (qw)(x)$ $\mathbb{P}$-a.s.~by the strong law of large numbers. Therefore, by dominated convergence, it holds that $\alpha_n(x,y)\rightarrow \alpha_\infty(x,y)$ for all $x,y\in\Xspace$. Furthermore, by Theorem 1 in \cite{gagnon2023}, $\widetilde{q}^w_n(x,\cdot)\rightarrow q^w(x,\cdot)$ in sense of total variation. Thus,
\begin{align*}
	&|\alphid(x)-\alphmtm{n}(x)| \\
    &\leq \left| \int  (\widetilde{q}^w_n(x,\dif y)-q^w(x,\dif y))\alpha_n(x,y) \right| + \left| \int q^w(x,\dif y)(\alpha_n(x,y)-\alpha_\infty(x,y)) \right| \\
	&\leq \norm{\widetilde{q}^w_n(x,\cdot)-q^w(x,\cdot))}_{tv} + \left| \int q^w(x,\dif y)(\alpha_n(x,y)-\alpha_\infty(x,y)) \right|
\end{align*}
and the  claim follows by dominated convergence.
\qed
\subsection{Proof of Lemma \ref{lemma:alphidlb}} \label{app:alphidlb}
Consider \eqref{eq:alphainf} and let $\rho:=2/(2+\sigma^2)$.
Since $\psi$ is $L:=\sigma^2/(4(2+\sigma^2))$-smooth, and because $\rho\in(0,1)$, we obtain that
\begin{equation*}
    \psi(y+x\rho) \leq \psi(x\rho)+\iprod{\nabla \psi(x\rho)}{y}+ \frac{L}{2}\norm{y}^2 \leq \psi(x)+\iprod{\nabla \psi(x\rho)}{y}+ \frac{L}{2}\norm{y}^2.
\end{equation*}
Making the substitution $y\mapsto y-x\rho$, lower-bound the ideal acceptance rate as
\begin{align*}
	\alpha_\infty(x) &= \int \cal{N}\left(\dif y;0,I_d \cdot\sigma^2 \rho\right) \min\{1,\exp(-\psi(y+x\rho)+\psi(x))\} \\
	&\geq \int \cal{N}\left(\dif y;0,I_d \cdot\sigma^2 \rho\right) \min\left(1,\exp\left(-\iprod{\nabla \psi(x)}{y}- \frac{L}{2}\norm{y}^2\right)\right) \\
	&\geq  \int \cal{N}\left(\dif y;0,I_d \cdot\sigma^2 \rho\right)\exp\left(- \frac{L}{2}\norm{y}^2\right) \min\left(1,\exp\left(-\iprod{\nabla \psi(x)}{y}\right)\right) \\
	&= \int \cal{N}\left(\dif y;0,I_d \cdot\sigma^2 \rho\right)\exp\left(- \frac{L}{2}\norm{y}^2\right) \min\left(1,\exp\left(+\iprod{\nabla \psi(x)}{y}\right)\right)
\end{align*}
where in the third line we used the inequality $\min\{1,ab\}\geq \min\{1,a\}\min\{1,b\}$ and in the last equality the substitution $y\mapsto -y$. Averaging the last two expressions, using $\min\{1,a\}+\min\{1,1/a\}\geq 1$, and then applying Jensen's inequality to the exponential function, we then see that
\begin{align*}
    \alpha_\infty(x) &\geq \frac{1}{2} \int \cal{N}\left(\dif y;0,I_d \cdot\sigma^2 \rho\right)\exp\left(- \frac{L}{2}\norm{y}^2\right) \geq \frac{1}{2}\exp\left(-\frac{d\sigma^4}{4(2+\sigma^2)^2}\right).
\end{align*}
Noting that $(2+\sigma^2)\leq2\max\{1,\sigma^2\}$, we deduce the more user-friendly estimate
\begin{equation*}
	\alpha_\infty(x) \geq \frac{1}{2}\exp\left(-\frac{d\sigma^4}{16\max\{1,\sigma^4\}}\right) = \frac{1}{2}\exp\left(-\frac{d\min\{1,\sigma^4\}}{16}\right) 
\end{equation*} 
The final bound is then obtained by letting $\sigma=\zeta\cdot d^{-1/4}$ and $d\rightarrow\infty$. 
\qed
\end{appendix}

\end{document}